\newtheorem{theorem}{Theorem}[section]
\newtheorem*{theorem*}{Theorem}
\newtheorem{proposition}[theorem]{Proposition}
\newtheorem*{proposition*}{Proposition}
\newtheorem{lemma}[theorem]{Lemma}
\newtheorem*{lemma*}{Lemma}
\newtheorem{definition}{Definition}[section]
\newtheorem*{definition*}{Definition}
\newtheorem*{claim*}{Claim}
\newtheorem*{question*}{Question}
\newtheorem*{corollary*}{Corollary}
\newtheorem{corollary}[theorem]{Corollary}
\theoremstyle{remark}
\newtheorem{remark}{Remark}
\newtheorem*{remark*}{Remark}
\newcommand{\ttc}{{\mathscr C}} 
\newcommand{\ttcopt}{{\mathscr C}_\textnormal{opt}} 
\newcommand{\ttd}{{\mathscr D}}  
\newcommand{\ttu}{{\mathscr U}} 
\newcommand{\tte}{{\mathscr E}}  
\newcommand{\C}{\mathrm C}
\newcommand{\ff}{{\mathrm F}}
\newcommand{\fg}{{\mathrm G}}
\newcommand{\fr}{{\mathrm R}}
\newcommand{\fh}{{\mathrm H}}
\newcommand{\fs}{{\mathrm S}}
\newcommand{\ft}{{\mathrm T}}
\newcommand{\smallerbrackets}[1]{{\mathsmaller{(}#1\mathsmaller{)}}}  
\newcommand{\mcX}{\mathcal{X}}
\newcommand{\mcY}{\mathcal{Y}}
\newcommand{\mcD}{\mathcal{D}}
\newcommand{\mcS}{\mathcal{S}}
\newcommand{\na}{k_\texttt{A}} 
\newcommand{\nb}{k_\texttt{B}}
\newcommand{\pa}{p_\texttt{A}}
\newcommand{\pb}{p_\texttt{B}}
\newcommand{\xa}{x_\texttt{A}}
\newcommand{\xb}{x_\texttt{B}}
\newcommand{\td}{\mathrm{d}}
\newcommand{\tx}{\mathrm{x}}
\newcommand{\ty}{\mathrm{y}}
\newcommand{\tz}{\mathrm{z}}
\newcommand{\tk}{\mathrm{k}}
\newcommand{\tm}{\mathrm{m}}
\newcommand{\tK}{\mathrm{K}}
\newcommand{\Kmax}{{K_\textnormal{max}}}
\newcommand{\kmax}{{k_\textnormal{max}}}
\newcommand{\Lavg}{{L_\textnormal{avg}}}
\newcommand{\maj}{\textnormal{maj}}
\newcommand{\condenser}[2]{$(#1 \mathop{\stackrel{\mathsmaller{\varepsilon}}{\rightarrow}}  #2)$-condenser}
\newcommand{\condensser}[2]{$(#1 \mathop{\stackrel{\mathsmaller{\varepsilon}}{\rightarrow}}  #2)$-conden\textbf{s}er}
\def\zo{\{0,1\}}
\def\mapping{\rightarrow}
\newcommand{\ie}{{\mbox{i.e.}}}
\newcommand{\nat}{\mathbb{N}}
\definecolor{lightblue}{rgb}{.60,.60,1}
\definecolor{lightred}{rgb}{1, .60, 0.60}
\newcommand{\wider}{
  \thickmuskip=12mu plus 3mu minus 1mu
  \medmuskip=7mu plus 3mu minus 1mu
  }
\DeclareMathOperator{\poly}{poly}
\DeclareMathOperator*{\ceil}{ceil}
\let\oldhash\#%
  \DeclareRobustCommand{\#}{\adjustbox{valign=B,totalheight=.57\baselineskip}{\oldhash\!}}%
\title{Universal almost optimal compression and Slepian-Wolf coding in probabilistic polynomial time}
\author{ 
{Bruno Bauwens\/}
\thanks{National Research University Higher School of Economics, Faculty of Computer Science, Moscow, Russia}
{\quad Marius Zimand\/}
\thanks{  Department of Computer and Information Sciences, Towson University,
Baltimore, MD, USA. \texttt{http://orion.towson.edu/\~{ }mzimand}. Marius Zimand has been supported in part by the National Science Foundation through grant CCF 1811729}}
\begin{document}

\maketitle

\begin{abstract} 
  In a lossless compression system with target lengths,  
  a compressor $\ttc$ maps an integer $m$ and a binary string $x$ to an $m$-bit code $p$, and if $m$ is sufficiently large, 
  a decompressor $\ttd$ reconstructs $x$ from $p$.  We call a pair $(m,x)$ {\em achievable} for $(\ttc,\ttd)$ if this reconstruction is successful.
  We introduce the notion of an optimal compressor $\ttcopt$, by the following universality property: 
  For any compressor-decompressor pair $(\ttc, \ttd)$, there exists a decompressor $\ttd'$ such that if $(m,x)$ is achievable for $(\ttc,\ttd)$, then $(m+\Delta, x)$ 
  is achievable for $(\ttcopt, \ttd')$, where $\Delta$ is some small value  called the overhead. 
  We show that there exists an optimal compressor that has only polylogarithmic overhead and works in probabilistic polynomial time. 
  Differently said, for any pair $(\ttc, \ttd)$, no matter how slow $\ttc$ is, or even if $\ttc$ is non-computable, $\ttc_{opt}$ is a fixed compressor that in polynomial time  produces codes almost as short as those of $\ttc$.  The cost is that the corresponding decompressor is slower.
  
  We also show that each such optimal compressor can be used for distributed compression, in which case  it can achieve optimal compression rates, as given in the Slepian-Wolf theorem, and even for the Kolmogorov complexity variant of this theorem.
  Moreover, the overhead is logarithmic in the number of sources, and unlike previous implementations of Slepian-Wolf coding, meaningful compression can still be achieved if the number of sources is much larger than the length of the compressed strings.
\end{abstract}

\section{Introduction}

Good lossless compression means mapping data to  short compressed codes. Ideally, the codes should  have minimal or   close-to-minimal description length. It is also desirable to  have  efficient compression and decompression procedures.   Unfortunately, it is impossible to obtain close-to-minimal codes from which decompression is even remotely efficient.\footnote{
The standard proof that the Kolmogorov complexity function is not computable provides such counter examples. Indeed, for the sake of contradiction, assume that for some computable time bound $t\colon \mathbb{N} \rightarrow \mathbb{N}$, the following holds: For all $n$ and $n$-bit $x$ with complexity smaller than $2\log n$, there exists a program of length at most $n/10$ that generates $x$ in time $t(n)$. Then the lexicographically first $n$-bit string $x$ that does not have such a program, has complexity at most $\log n + O(1)$, which is a contradiction for large~$n$.} 
In contrast, we show that, remarkably, there is no fundamental conflict between having close-to-minimal codes and efficient compression. More precisely, if a good approximation of the length of a minimal description of the data is given, then a probabilistic compressor can generate an almost minimal compressed code in polynomial time. The same optimal compressor can be used for any description system, as explained in the abstract. Moreover, it can be used for distributed compression as well.

Our study of compression uses a more general approach than what is standard in information theory. In information theory, the  underlying assumption is that data is generated by a stochastic process, whose structural properties, such as various types of independence, ergodicity, various statistics of the distribution, etc., are at least partially used in constructions or in the analysis. Instead, we do not assume any generative model, and require that for {\em each} string, the compressed length is as small as in any other compressor-decompressor pair up to an additive polylogarithmic term. Our main results subsume important results in information theory: the Shannon source coding theorem and the Slepian-Wolf coding theorem. Moreover, they provide corresponding results in algorithmic information theory, where the optimal compression length of a string is measured by its Kolmogorov complexity relative to an optimal Turing machine.

The contributions of this paper concern both single-source compression and multi-source distributed compression.

\subsection*{Single source compression}

By direct diagonalization, one may observe that no (probabilistically) computable compressor is optimal: for each such compressor $\ttc$ mapping binary strings to binary strings,
there exists another deterministic compressor $\ttc'$ and a sequence $x_1, x_2, \dots$ of strings with length $|x_n| = n$ such that $|\ttc(x)| \ge n$ (using the maximal length over all randomness if $\ttc$ is probabilistic) and $|\ttc'(x)| \le O(\log n)$.  To overcome this barrier, we consider compression functions that have a ``target compression size'' as an additional input. Our compressors are also probabilistic. 

Consequently, a compressor $\ttc$ is a probabilistic  algorithm that has three inputs: the string~$x$ being compressed, the target compression length~$m$,  and the error probability~$\varepsilon$.  The output is a random variable, denoted $\ttc_{\varepsilon,m}(x)$, whose realization is a string of length $m$.

A  {\em decompressor} $\ttd$ is a deterministic partial function mapping strings (viewed as compressed codes) to strings. The {\em Kolmogorov complexity} of $x$ relative to the decompressor $\ttd$, is given by
\begin{equation*}
\C_{\ttd}(x) \;=\; \min \{|p|: \ttd(p) = x\}. 
\end{equation*}
Note that unlike standard Kolmogorov complexity theory, we do not restrict $\ttd$ to be partial computable. 
Our definition of optimal compressors requires that 
($i$) the length of the compressed code is equal to the target length, and  
($ii$) simultaneously for {\em all} decompressors~$\ttd$, correct decompression is achieved as soon as $m$ is 
slightly above $\C_\ttd(x)$.

\begin{definition}\label{def:optimal_compressor}
  A {\em compressor} $\ttc$ is a probabilistic function that maps 
  $\varepsilon>0$, $m$ and $x$, with probability~$1$,
  to a string $\ttc_{\varepsilon,m}(x)$ of length exactly~$m$. 
  Let $\Delta$ be a function of $\varepsilon, m$ and $x$, called {\em overhead}.
  A compressor $\ttc$ is {\em $\Delta$-optimal} if for every decompressor $\ttd$ there exists 
  a decompressor $\ttd'$ such that for all $\varepsilon > 0$, $x$ and $m \ge \C_\ttd(x) + \Delta$: 
 \[
 \Pr \big[ \ttd'\big( \ttc_{\varepsilon,m}(x) \big) = x    \big] \;\ge\; 1-\varepsilon.
 \]
\end{definition}

\noindent
In our constructions, the mapping from $\ttd$ to $\ttd'$ is effective in a certain sense,
and this implies that if $\ttd$ is partial computable, respectively, computable, and computable in polynomial space, 
then so is $\ttd'$, see Remark~\ref{remark:effective_decompression}.

\medskip
\noindent
Our main result for single-source compression states that for some polylogarithmic overhead $\Delta$, an optimal compression function exists, and can even be computed in polynomial time.

\begin{theorem}
  \label{th:main_single_source}
  There exists a $O(\log m \cdot \log \tfrac{|x|}{\varepsilon})$-optimal compressor that is computable in time polynomial in~$|x|$.
\end{theorem}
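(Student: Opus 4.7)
The plan is to define the compressor as a randomized hash and the decompressor as exhaustive search over short $\ttd$-descriptions.

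First, the crucial ingredient is a polynomial-time hash family $\{h_s : \{0,1\}^n \to \{0,1\}^\ell\}_{s \in \{0,1\}^d}$, where $n = |x|$, satisfying the following \emph{isolation property}: for every set $S \subseteq \{0,1\}^n$ with $|S| \leq 2^k$ and every $x \in S$,
\[
\Pr_{s}\bigl[\,\exists x' \in S \setminus \{x\} : h_s(x) = h_s(x')\,\bigr] \;\leq\; \varepsilon,
\]
with $d + (\ell - k) = O(\log m \cdot \log(n/\varepsilon))$. Such families can be derived from explicit lossless condensers (for instance the Guruswami--Umans--Vadhan construction), and the product shape of the overhead suggests iterating a basic condenser $O(\log m)$ times to progressively reduce the output length, each iteration costing $O(\log(n/\varepsilon))$ bits.

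With this hash family in hand, the compressor on input $(\varepsilon, m, x)$ samples a uniform $s \in \{0,1\}^d$, computes $y = h_s(x)$ of length $\ell = m - d$, and outputs the $m$-bit string $s \| y$, all in time polynomial in $n$. For any decompressor $\ttd$, the matched $\ttd'$ parses its input as $s \| y$, enumerates the set $S_\ttd = \{x' : \C_\ttd(x') \leq m - \Delta\}$ (of size at most $2^{m - \Delta + 1}$), and returns the first $x' \in S_\ttd$ with $h_s(x') = y$. When $m \geq \C_\ttd(x) + \Delta$ we have $x \in S_\ttd$, so applying the isolation property to $S_\ttd$ yields $\Pr_s[\ttd'(\ttc_{\varepsilon,m}(x)) = x] \geq 1 - \varepsilon$, as required.

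The main obstacle is producing an explicit hash family with the stated parameters in polynomial time. Pairwise-independent hashing alone has too long a seed to fit within the $m$-bit budget; explicit pseudorandom constructions (condensers or extractors of logarithmic seed length) are needed. The technical heart of the argument will be verifying that an iterated condenser scheme composes correctly: that per-iteration collision errors add up cleanly under a union bound, that the seeds of successive rounds can be packed into the output, and that the construction produces exactly $m$ bits uniformly in $m$ and $\varepsilon$. The fact that $\ttd'$ need not be computable is what makes the otherwise brute-force enumeration of $S_\ttd$ harmless.
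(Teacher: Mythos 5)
Your proposal has a fundamental gap: the ``isolation property'' you posit is information-theoretically impossible with the parameters you claim, regardless of how cleverly the condenser is built.

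Observe that your compressor outputs $s \| h_s(x)$ and your decompressor returns the first $x' \in S_\ttd$ with $h_s(x') = y$. In the paper's terminology, this is exactly a \emph{simple inverse}: writing $F_x = \{s \| h_s(x) : s \in \{0,1\}^d\}$ for the set of possible fingerprints of $x$, you are returning the first $x'$ with $s\|y \in F_{x'}$, and for the scheme to succeed on every $x$ and $S$ you need that with probability $\ge 1-\varepsilon$ over $s$, the value $s\|h_s(x)$ lies outside $\bigcup_{z \in S, z \ne x} F_z$. Lemma~\ref{lem:simple_inverse} shows that if the output alphabet $\mcY = \{0,1\}^m$ satisfies $\#\mcY < \min\{\tfrac14 K^2,\ \#\mcX - \tfrac12 K\}$ with $K = 2^k$, then there exist $x$ and $S$ of size $< K$ with $F_x \subseteq \bigcup_{z\in S, z\ne x} F_z$; taking $S' = S \cup \{x\}$ (still of size $\le K$) gives a set for which \emph{every} seed $s$ produces a collision for $x$, so the probability in your isolation property is $1$, not $\le \varepsilon$. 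For your target $m = k + O(\log m \cdot \log\tfrac{n}{\varepsilon})$ with $k$ not tiny, we have $m < 2k-2$ and (assuming $m < n$, the only interesting regime) the lemma applies. So a hash family with your isolation property requires $m \ge 2k - 2$, far worse than the theorem's bound. This is not a matter of explicitness or of condenser composition losing parameters — even a non-explicit, brute-force optimal family cannot exist.

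The paper gets around exactly this obstruction by weakening the requirement on the fingerprint in two ways, and you would need both. First, the conductor fingerprint is only required to be \emph{light}: a given $y$ may collide with up to $t \approx 2^r \log(2K)$ other elements of $S$, and a second, short prime-residue hash (Lemma~\ref{lem:prime_hashing}, used at scale $t$ rather than $2^k$, hence costing only $O(\log(tn/\varepsilon))$ bits) disambiguates among those $t$ collisions. Second, one cannot require \emph{every} $x \in S$ to be light with probability $1-\varepsilon$; some elements are ``deficient.'' The conductor property only guarantees that at most half of $S$ is deficient, so the decompressor recursively descends on the deficient subset, giving $\log(2\#S)$ levels and a total of $t\log(2\#S)$ collected candidates, from which the prime hash still disambiguates. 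This two-part structure — bounded collisions plus a second hash, and the halving recursion on deficient elements — is what makes the overhead $k + O(\log k \cdot \log(n/\varepsilon))$ achievable. Your proposal skips both ideas and collapses to the simple inverse, which the paper explicitly rules out. Your compressor wrapper (output $s\|h_s(x)$, decompressor enumerates $S_\ttd$, use Remark~\ref{remark:inverse_lengths} to handle arbitrary lengths) and the observation that $\ttd'$ need not be efficient are fine; it is the hash-family step that needs to be replaced by the conductor-plus-secondary-hash-plus-recursion machinery of Theorem~\ref{th:main_invertible}.
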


\noindent
In other words, any slow compressor in a compressor-decompressor pair can be replaced by a fixed fast compressor and slower decompressor, 
at the expense of a small increase of the compression length.

\smallskip
\begin{remark*}
  It is standard in compression to assume that the parties do not share a source of randomness. If they do, the result  of  Theorem~\ref{th:main_single_source} can be obtained via simple hashing, as we explain in section~\ref{s:techniques}. 
\end{remark*}

\smallskip
\noindent
In Kolmogorov complexity, one typically fixes an optimal\footnote{
  $\ttu$ is a Turing machine, and its optimality means that for every other Turing machine $\ttd$, there exists a constant $c$ such for every string $x$, $\C_{\ttu}(x) \le \C_{\ttd}(x) + c$. Here, we assume that $\ttu$ is {\em universal} in the sense that for each other Turing machine~$\ttd$, there exists a string $w$ such that $\ttu(wp) = \ttd(p)$ for all strings $p$ for which $\ttd(p)$ is defined.} 
  partial computable decompressor $\ttu$ and writes $\C(x)$ in lieu of $\C_\ttu(x)$. 
  $\C(x)$ is called the {\em Kolmogorov complexity} or the {\em minimal description length} of~$x$. 
  In a similar way, for two strings $x$ and $y$, we define $\C(x \mid y)$, the complexity of $x$ conditional on~$y$, see equation~\eqref{e:kolmcond}. 
Theorem~\ref{th:main_single_source} implies polynomial time compression of a string $x$ down to almost its minimal description length, more precisely down to length $m = \C(x) + O(\log^2 (|x|/\varepsilon))$, provided the compressor works with target length~$m$. 

\begin{corollary}\label{cor:main_single_source}
  There exists a polynomial-time computable compressor $\ttc$ and a constant $c$, 
  such that for all $\varepsilon > 0$, $m$ and $x$ with $C(x) \le m - c \cdot \log m \cdot \log \tfrac{|x|}{\varepsilon}$: 
  \[
  \Pr [\ttu(\ttc_{\varepsilon,m}(x)) = x ] \;\ge\; 1-\varepsilon.
  \]
\end{corollary}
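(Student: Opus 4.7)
The plan is to deduce the corollary from Theorem~\ref{th:main_single_source} by a single universal-simulation trick. First I would apply Theorem~\ref{th:main_single_source} with the fixed universal Turing machine $\ttu$ (the one used to define $\C$) playing the role of the decompressor $\ttd$. Since $\C_\ttu(x) = \C(x)$, the theorem yields a polynomial-time computable compressor, call it $\ttc_0$, and an associated decompressor $\ttd'$ such that for overhead $\Delta = O(\log m \cdot \log(|x|/\varepsilon))$, whenever $m \ge \C(x) + \Delta$, $\Pr[\ttd'(\ttc_{0,\varepsilon,m}(x)) = x] \ge 1 - \varepsilon$.

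The key step, and the only conceptually subtle one, is replacing $\ttd'$ by $\ttu$ itself. Remark~\ref{remark:effective_decompression} guarantees that $\ttd'$ is partial computable (since $\ttu$ is), so by the universality of $\ttu$ there exists a fixed string $w$ of some constant length $c_0$ (depending only on the universal machine and the construction of $\ttd'$, and hence independent of $x,m,\varepsilon$) such that $\ttu(w p) = \ttd'(p)$ for every $p$ in the domain of $\ttd'$. I would then define the final compressor by
\[
\ttc_{\varepsilon,m}(x) \;=\; w \cdot \ttc_{0,\varepsilon,m-c_0}(x),
\]
which outputs a string of length exactly $m$ and is still computable in polynomial time. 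Decoding gives $\ttu(\ttc_{\varepsilon,m}(x)) = \ttd'(\ttc_{0,\varepsilon,m-c_0}(x)) = x$ with probability at least $1-\varepsilon$ whenever $m-c_0 \ge \C(x) + \Delta(\varepsilon, m-c_0, x)$.

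It then remains only to absorb $c_0$ and the slight shift from $m$ to $m-c_0$ inside $\Delta$ into a single overhead constant $c$. Since $\Delta(\varepsilon, m-c_0, x) \le c' \log m \cdot \log(|x|/\varepsilon)$ for some absolute $c'$ (once $m$ is large enough that $m - c_0 > 0$), picking $c \ge c_0 + c'$ gives the bound stated in the corollary, with the finitely many very small $m$ handled by further enlarging $c$. I expect the main obstacle to be purely bookkeeping: one must verify that $c_0$ is genuinely an absolute constant, which it is because $\ttd'$ is produced once and for all by applying Theorem~\ref{th:main_single_source} to the fixed machine $\ttu$, and the description of any fixed Turing machine on $\ttu$ has fixed length.
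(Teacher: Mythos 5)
Your proposal is correct and is essentially the argument the paper leaves implicit: apply Theorem~\ref{th:main_single_source} with $\ttd = \ttu$, invoke Remark~\ref{remark:effective_decompression} to get that $\ttd'$ is partial computable, use the universality of $\ttu$ (as spelled out in the paper's footnote defining $\ttu$) to obtain a constant-length prefix $w$ with $\ttu(wp)=\ttd'(p)$, and absorb $|w|$ and the constant in $\Delta$ into a single $c$. No gaps; this is the intended proof.
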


\noindent
The compressor in Theorem~\ref{th:main_single_source}  also provides a solution to the so-called \emph{document exchange problem}, which can be formulated as follows. Suppose $y$ is the obsolete version of an updated file $x$. The receiver holds $y$ and the sender holds $x$.  The sender transmits  $\ttc_{\varepsilon,m}(x)$ to the receiver, and  if $\C(x\mid y) \le m - \Delta$, 
then  the receiver can reconstruct~$x$.
This follows from the optimality condition in  Definition~\ref{def:optimal_compressor} applied for $\ttd(\cdot) = \ttu(\cdot,y)$, 
where $\ttu$ is the optimal machine in the definition of~$\C(\cdot \mid \cdot)$. 
It is remarkable that the sender computes  $\ttc_{\varepsilon,m}(x)$ without knowing~$y$, but only the target length $m=\C(x\mid y) + \Delta$.  

\bigskip
\noindent
The overhead $\Delta$ can be improved from polylogarithmic to logarithmic at the cost of a slower compressor running in exponential space, and, hence,  double exponential time.  

\begin{proposition}\label{prop:single_source_nonpoly}
  There exists a $O(\log \tfrac{|x|}{\varepsilon})$-optimal compressor that is computable in exponential space.
\end{proposition}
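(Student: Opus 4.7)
The plan is to revisit the framework of Theorem~\ref{th:main_single_source} and replace its polynomial-time-constructible hash family --- whose sub-optimality is responsible for the extra $\log m$ factor in the overhead --- by a combinatorially optimal object whose existence is guaranteed only non-constructively. Exponential space suffices to locate such an object by exhaustive search.

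Set $n = |x|$, and take $\Delta = c\log(n/\varepsilon)$ and $s = c\log(n/\varepsilon)$ for a sufficiently large absolute constant $c$. I would call a family $\mathcal{H} = \{h_r : \{0,1\}^{\leq n} \to \{0,1\}^{m-s}\}_{r \in \{0,1\}^s}$ \emph{good} if, for every $x \in \{0,1\}^{\leq n}$ and every $S \subseteq \{0,1\}^{\leq n}$ with $|S| \leq 2^{m-\Delta}$,
\[
\Pr_{r \in \{0,1\}^s}\big[\exists x' \in S \setminus \{x\} : h_r(x) = h_r(x')\big] \;\leq\; \varepsilon.
\]
Existence of a good family with these parameters follows from a probabilistic-method argument applied to the underlying bipartite ``condenser'' graph; this is essentially the same combinatorial object that appears in the proof of Theorem~\ref{th:main_single_source}, except that here we drop the demand for a polynomial-time description, which is precisely what lets us shave the extra $\log m$ factor.

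The compressor $\ttc_{\varepsilon,m}(x)$ deterministically finds the lexicographically first good family $\mathcal{H}^{n,m,\varepsilon}$ by enumerating candidates and verifying the property, samples a uniform seed $r \in \{0,1\}^s$, and outputs the $m$-bit string $r \cdot h_r(x)$. For any decompressor $\ttd$, the companion $\ttd'$ reconstructs $\mathcal{H}^{n,m,\varepsilon}$ by the same search, parses its input as $(r,y)$, enumerates $S_\ttd = \{x' \in \{0,1\}^{\leq n} : \C_\ttd(x') \leq m - \Delta\}$ by running $\ttd$ on all strings of length at most $m - \Delta$, and returns the unique $x' \in S_\ttd$ satisfying $h_r(x') = y$. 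Correctness follows immediately from the good property: whenever $\C_\ttd(x) \leq m - \Delta$ we have $x \in S_\ttd$ and $|S_\ttd| \leq 2^{m-\Delta}$, so with probability at least $1 - \varepsilon$ over $r$, $x$ is the unique preimage in $S_\ttd$ hashing to $h_r(x)$.

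The step that will need the most care is the space accounting. Each candidate family has description size $2^s \cdot 2^{n+1} \cdot (m-s) = 2^{O(n)}$ bits, so the outer enumeration uses a counter of the same size and stores one candidate at a time. Verifying goodness means iterating over pairs $(x, S)$ with $|S| \leq 2^{m-\Delta} \leq 2^n$ --- each subset fits in $2^n$ bits and can be reused across iterations --- and, for each such pair, counting the bad seeds via nested loops over $r \in \{0,1\}^s$ and $x' \in S$ and comparing against the threshold $\varepsilon 2^s$. All of this fits in $2^{O(n)}$ space, which is exponential in the input length, as required; the compressor's running time is correspondingly doubly exponential, matching the remark made after the proposition.
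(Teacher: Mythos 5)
Your proposal hinges on the existence of a seeded hash family $\{h_r\}_{r \in \{0,1\}^s}$ with $s = O(\log(n/\varepsilon))$ and range $\{0,1\}^{m-s}$ such that for every $x$ and every $S$ of size at most $2^{m-\Delta}$, a $(1-\varepsilon)$-fraction of seeds give no collision between $x$ and $S\setminus\{x\}$. Such a family does not exist, and this impossibility is exactly the content of Lemma~\ref{lem:simple_inverse} in the paper. Set $F_x = \{(r,h_r(x)): r \in \{0,1\}^s\}$, viewed as a subset of $\mcY = \{0,1\}^s \times \{0,1\}^{m-s}$, so $\#\mcY = 2^m$. Your ``good'' property implies $F_x \not\subseteq \bigcup_{z \in S, z\neq x} F_z$ for every $(x,S)$ with $|S| \leq 2^{m-\Delta}$. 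But Lemma~\ref{lem:simple_inverse} with $K = 2^{m-\Delta}+1$ produces a pair $(x,S)$ violating this whenever $\#\mcY < \tfrac{1}{4}K^2$, i.e., whenever $m > 2\Delta + O(1)$ (and $m \lesssim n$ so the second size condition also holds); in that regime, for \emph{every} seed $r$ there is some $z \in S\setminus\{x\}$ with $h_r(x) = h_r(z)$, so the collision probability is $1$. Since the proposition must handle every target length $m \geq \C_\ttd(x) + \Delta$, which can sit anywhere between roughly $\Delta$ and $n$, your scheme fails on a wide range of $m$. (There is also a more basic parameter mismatch: you set $s = \Delta$, so the hash range has the same cardinality as the maximal suspect set, and even for independent uniform $h_r$ the probability that $x$ avoids all collisions in $S$ would be about $1/e$, not $1-\varepsilon$; fixing $s \leq \Delta - \log(1/\varepsilon)$ restores the union-bound heuristic but not the structural obstruction above.)

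The paper does not use a collision-free hash. Its non-explicit object is a $(2^k,\varepsilon)$-conductor $\ff:\{0,1\}^n \to \{0,1\}^{k+2}$ using $\log(4n/\varepsilon)$ random bits (Proposition~\ref{prop:existsPrefixExtractor}); a conductor only guarantees that, for at least half the $x\in S$, a typical fingerprint $\ff(x)$ collides with few other elements of $S$ (Lemma~\ref{lem:condenser_as_filter}). The bounded collisions are resolved by appending a short prime-based hash $\fh_{\varepsilon,s}$ (Lemma~\ref{lem:prime_hashing}), and the remaining ``deficient'' elements are handled by recursion on a suspect set of at most half the size (Corollaries~\ref{cor:pruning} and~\ref{cor:conductor_invertible}). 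Your ideas of placing the seed in the output, of enumerating $S_\ttd$ by running $\ttd$, and of locating the non-explicit combinatorial object by exhaustive search in exponential space are all correct and match the remark after Proposition~\ref{prop:existsPrefixExtractor} --- but you applied them to an object whose non-existence the paper goes out of its way to prove. What is required is a conductor with a two-level decoder (light-fingerprint pruning plus a second hash), not a zero-collision hash with a one-level decoder.
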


\noindent
The compressor in Proposition~\ref{prop:single_source_nonpoly} is also probabilistic and uses a logarithmic number of random bits. 
We prove the following lower bounds, which show that the overhead and the randomness  in Proposition~\ref{prop:single_source_nonpoly} are essentially minimal: Every compressor with a computably bounded running time has overhead $\Delta \ge \Omega(\log\tfrac{|x|}{\varepsilon})$, and if $\Delta \le O(|x|^{0.99})$, at least a logarithmic amount of randomness is needed. Let $\lceil a \rceil = \ceil(a)$.

\begin{proposition}\label{prop:lowerbounds} 
  Let $\Delta$ and $r$ be functions of~$\varepsilon$ and~$n=|x|$.
  If there exists a $\Delta$-optimal compressor that can be evaluated with at most $r$ random bits in a computably bounded running time, then
  \begin{itemize}[leftmargin=*]
    \item 
      $
      \Delta \ge \log \tfrac{n}{\varepsilon} - \log \log \tfrac{n}{\varepsilon} - 8 
      $
      for all $n$ and $\varepsilon$ with $2^{-n/4} \le \varepsilon \le 1/4$,

    \item 
      $
      \left\lceil \varepsilon 2^{r+1} \right\rceil \;\ge\; \frac{n-r - \log (2/\varepsilon)}{\Delta + 4}
      $
      for all $n$ and $\varepsilon \le 1/2$.
  \end{itemize}
\end{proposition}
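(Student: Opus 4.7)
Both lower bounds are derived by adversarial arguments that exploit the universal quantifier over decompressors in Definition~\ref{def:optimal_compressor}. The key structural observation is that if a code $p$ receives probability strictly greater than $\varepsilon$ under $\ttc_{\varepsilon,m}(x)$, then any decompressor $\ttd'$ realising the optimality must satisfy $\ttd'(p) = x$; otherwise the error probability for $x$ would exceed $\varepsilon$. Consequently, the heavy-code sets $L_x := \{p : \Pr[\ttc_{\varepsilon,m}(x) = p] > \varepsilon\}$, each of size strictly less than $1/\varepsilon$, are pairwise disjoint across all $x$ with $\C_\ttd(x) \le m - \Delta$, and similarly the good sets $B_x := \{p : \ttd'(p) = x\}$ are disjoint and satisfy $\Pr[\ttc_{\varepsilon,m}(x) \in B_x] \ge 1 - \varepsilon$.

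For the first bound, my plan is to construct an adversarial decompressor $\ttd$ that gives short descriptions to a rich family of $n$-bit strings, chosen in response to $\ttc$. Because $\ttc$ has computably bounded running time, the distributions $\ttc_{\varepsilon,m}(\cdot)$ can be enumerated, so $\ttd$ can be tailored to produce a tight packing problem. The simple counts $|L_x| \ge 1$ and $\sum_x |L_x| \le 2^m$ are too weak on their own; the sharper bound will come from tracking the full probability mass (not only membership in $L_x$) and using a probabilistic pigeonhole to replace $|L_x|$ by a quantity of order $1/\varepsilon$ in the relevant regime, with the $\log\log$ slack arising from rounding. The restriction $2^{-n/4} \le \varepsilon \le 1/4$ ensures the construction avoids degenerate cases (too few strings on one side, or no room for a non-trivial packing on the other).

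For the second bound, I would exploit the granularity imposed by bounded randomness: with at most $r$ random bits, every probability $\Pr[\ttc_{\varepsilon,m}(x) = p]$ is an integer multiple of $1/2^r$, so correct decoding requires at least $\lceil (1-\varepsilon)2^r \rceil$ randomness values to map $x$ into $B_x$, leaving at most $\lfloor \varepsilon 2^r \rfloor$ bad randomness values per $x$. The plan is then to use an adversarial $\ttd$ that partitions $\zo^n$ across roughly $(n - r - \log(2/\varepsilon))/(\Delta + 4)$ levels of description, and to show via a level-by-level accounting that the total absorbable bad slack is of order $\lceil \varepsilon 2^{r+1} \rceil$ times the number of levels, yielding the stated inequality after converting the iteration count into the fraction on the right-hand side.

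The main obstacle is that naive choices of adversarial $\ttd$ (such as the identity decompressor or a simple injection $\zo^k \to \zo^n$) give only the trivial bound $\Delta \ge 0$. The strength of both bounds comes from tailoring $\ttd$ to the specific structure of $\ttc$'s output distributions, which is why the computability hypothesis on $\ttc$ enters in an essential way in the first bound: without it, one could in principle design a compressor whose distributions spread very uniformly on $\zo^m$ and thereby evade the heavy-set argument. The delicate part of the second bound is propagating the ceiling $\lceil \varepsilon 2^{r+1}\rceil$ correctly through the multi-level accounting, which is what prevents the bound from collapsing when $\varepsilon 2^r$ is a small non-integer.
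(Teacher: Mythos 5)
Your structural observations about the disjointness of the heavy-code sets $L_x$ and the decoding cells $B_x$ are correct, but they already coincide with the notion of excess that the paper has formalised: they exactly say that $\ttc_{\varepsilon,m}$ restricted to $\{x : \C_\ttd(x) \le m - \Delta\}$ is a $(2^{m-\Delta},\varepsilon)$-conductor (Lemma~\ref{lem:compressor_to_conductor}). The paper then imports the real combinatorial work from a degree lower bound for condensers (Proposition~\ref{prop:degreeLowerboundCondensers}, proved by the probabilistic method: choose a random small set $Y \subseteq \mcY$ and show by a binomial tail estimate that if the stated inequality fails, then $\geq K$ inputs place a $2\varepsilon$-fraction of their $D$ seeds inside $Y$, violating the excess bound). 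Your plan for the second item never reaches a lemma of this kind; the ``level-by-level accounting'' across $(n-r-\log(2/\varepsilon))/(\Delta+4)$ levels is not spelled out to the point where one could check it closes, and there is no indication of where the probabilistic step --- which is where the non-trivial constant enters --- would come from. Similarly, the observation that all probabilities are multiples of $2^{-r}$ is true and is implicit in modelling $\ff$ as a two-argument function $f \colon \mcX \times [D] \to \mcY$ with $D = 2^r$, but it does not by itself give you the stated bound.

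The more serious gap is in the first item. You treat the two items as two parallel adversarial constructions, but in the paper the first is a \emph{corollary} of the second through a randomness-reduction step: Lemma~\ref{lem:reduce_randomness_invertibleFunc} shows (via the sampler of Lemma~\ref{lem:sampler}) that any $(K,\varepsilon)$-invertible function can be converted into a $(K,2\varepsilon)$-invertible one using only $\lceil\log(M/K)\rceil + 3 = \Delta + O(1)$ random bits, after quarantining the $<K$ inputs with oversized output supports onto fresh symbols. Plugging $r = \Delta + 3$ into the second item and clearing the resulting inequality is exactly what yields $\Delta \gtrsim \log\tfrac{n}{\varepsilon} - \log\log\tfrac{n}{\varepsilon}$. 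Nothing resembling this reduction is present in your plan, and I do not see how a direct ``tight packing'' on the heavy sets can give a $\log\tfrac{n}{\varepsilon}$-type bound without first controlling the amount of randomness --- as you yourself note, if $\ttc$ spreads its mass thinly, the $L_x$ are empty and the packing is vacuous. Finally, the computability hypothesis is not what rescues you from that spreading issue; the argument via condensers is entirely combinatorial and handles uniform spreading directly, and the computability assumption is orthogonal to the packing picture you describe.
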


\noindent
The second item for $r=0$ and $\varepsilon = 1/2$ implies that a deterministic computable compressor can only be $\Delta$-optimal for $\Delta \ge n - 6$.  
Essentially the same bounds also apply for a weaker version of $\Delta$-optimality in which we consider a single optimal Turing machine as decompressor, and allow probabilistic decompression, see Lemmas~\ref{lem:equivalence_optimality_probabilisticDecompressor} and~\ref{lem:equivalence_optimality_singleTM}.

\subsection*{Distributed compression}

In its most basic form,  distributed compression is the task of compressing correlated pieces of data by several parties, each one possessing one piece  and acting separately.  This task is also known as \emph{Slepian-Wolf coding.}

For illustration, let us consider the following simple example,  with two senders, Alice and Bob, and a receiver, Zack.  
Alice knows a line $L$ of the form $y = ax + b$ in the affine plane over a finite field with $2^n$ elements, 
and Bob knows a point $P$ with coordinates  $(u, v)$ on this line.
Each of $L$ and $P$ has $2n$ bits of information, but, because of the geometrical relation,
together they only have $3n$ bits of information. 
Alice and Bob want to email $L$ and $P$ to Zack, without wasting bandwidth.
If they have to compress the message in isolation, how many bits 
should they  send to Zack?

Obviously, Alice can send $L$, so $2n$ bits,  and then Bob can send
$n$ bits, or symmetrically Bob can send $P$, so $2n$ bits, and then Alice can send $n$ bits.
What about other compression lengths?  Some necessary requirements for the
compression lengths are easy to see.  Let $\na$ be the number of bits to which
Alice compresses her point $L$, and let $\nb$ have the analogous meaning for
Bob.  It is necessary that 
$\na + \nb  \geq 3n$,  $ \na \geq n$  and $\nb \geq n$.
The first inequality holds because Zack needs to acquire $3n$ bits, and the other two hold
 because if Zack gets somehow either $L$ or $P$, he still needs $n$ bits of information  about the remaining element.

Without any assumption regarding how $L$ and $P$ were generated, 
the version of Slepian-Wolf coding in this paper implies that  any numbers $\na$ and $\nb$
satisfying the above necessary conditions 
(such as, for instance,  $\na = \nb = 3n/2$), 
are also sufficient up to a small polylogarithmic overhead. 
More precisely, there exists  a probabilistic polynomial-time  compression algorithm
such that if $\na$ and $\nb$ satisfy these conditions, then Alice can apply the algorithm 
to compress the line $L$ to a binary string $\pa$ of length $\na + O(\log^2 n)$, 
Bob can compress point $P$ to a binary string $\pb$ of length $\nb + O(\log^2 n)$,  
and Zack can with high probability reconstruct $L$ and $P$  from $\pa$ and $\pb$.  

In the general case (presented here for simplicity for two senders), Alice has  a string $\xa$, Bob has $\xb$, which they want to send to Zack. Suppose Zack is using the decompressor $\ttd$. Let $\na, \nb$ represent the compression lengths as above, and let $\C_\ttd(\xa, \xb), \C_\ttd(\xa \mid \xb), \C_\ttd(\xb \mid \xa)$ denote the Kolmogorov complexities relative to $\ttd$ of $(\xa, \xb)$, and respectively, of $\xa$ conditioned by $\xb$ and of $\xb$ conditioned by $\xa$.  Similarly to the simple example with the line and the point, for Zack to be able to reconstruct $(\xa, \xb)$, it is necessary (up to a small additive term) that $\na$ and $\nb$ satisfy the following inequalities, called the Slepian-Wolf constraints:
\begin{equation}
\label{e:kswconditions}
\begin{array}{ll}
\na &\geq\; \C_\ttd(\xa \mid \xb)\\
\nb &\geq\; \C_\ttd(\xb \mid \xa)   \\
\na  + \nb &\geq\; \C_\ttd(\xa, \xb).
\end{array}
\end{equation}
Note that these requirements on the compression lengths are necessary even
when Alice and Bob are allowed to collaborate.  
We show that modulo a polylogarithmic overhead, the conditions are also sufficient.
Moreover, such compression is achieved by the optimal compressors from Definition~\ref{def:optimal_compressor}.

Formally, let $\ttd$ be a decompressor mapping pairs of strings to strings.
The {\em conditional} Kolmogorov complexity of $w$ given $z$ is 
\begin{equation} \label{e:kolmcond}
\C_\ttd(w|z) \;=\; \min \left\{|p|: \ttd(p,z) = w  \right\}.
\end{equation}
Let $[\ell] = \{1,2,\dots,\ell\}$. The parameter $\ell$ represents the number of senders.  Given $J \subseteq [\ell]$, and an $\ell$-tuple $\tx = (x_1, \dots, x_\ell)$, let $\tx_J$ be the set $\{(j,x_j): j \in J\}$. 
The Kolmogorov complexity of $\tx_J$ is defined by encoding this finite set as a string in some canonical way. 

\begin{definition}\label{def:SW_rate}
  An $\ell$-tuple $\tk = (k_1, \ldots, k_\ell)$ of integers satisfies the {\em $\ttd$-Slepian-Wolf constraints} for $\tx$ if 
 \[
 \C_\ttd\left(\tx_J \mid \tx_{[\ell]\setminus J} \right) \;<\;  \sum_{j\in J} k_j \quad \text{ for all non-empty } J.
 \]
\end{definition}

\begin{theorem}\label{th:main_distributed_compression}
  Assume $\ttc$ is $\Delta$-optimal.
  For every decompressor $\ttd$ there exists a decompressor $\ttd'$ such that for all $\varepsilon> 0$, $\ell$, 
  $\ell$-tuples $\tx$, and $\ell$-tuples $\tm$: 
 \[
 \Pr \big[ \ttd'\big(\varepsilon, \ttc_{\varepsilon,m_1}(x_1), \dots, \ttc_{\varepsilon,m_\ell}(x_\ell) \big) = \tx    \big] \;\ge\; 1- 8\ell\varepsilon,
 \]
 provided $(m_1 {-} \Delta_1 {-}\log \tfrac{\ell}{\varepsilon}, \;\dots,\ m_\ell {-} \Delta_\ell {-} \log \tfrac{\ell}{\varepsilon})$ 
 satisfies the $\ttd$-Slepian-Wolf constraints for $\tx$, where
 $\Delta_j$ is the value of $\Delta$ corresponding to the inputs $\varepsilon$, $m_j$ and $x_j$.
\end{theorem}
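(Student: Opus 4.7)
The plan is to extract a conditional form of $\Delta$-optimality from Definition~\ref{def:optimal_compressor}, and then build $\ttd'$ as a decoder that enumerates candidate tuples and verifies each one coordinate by coordinate using this conditional decoder.

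First, for every two-argument decompressor $\hat\ttd(p,s)$ there exists $\hat\ttd'(p,s)$ with $\Pr[\hat\ttd'(\ttc_{\varepsilon,m}(x),s)=x]\ge 1-\varepsilon$ whenever $m\ge\C_{\hat\ttd}(x\mid s)+\Delta$: for each fixed $s$, the one-argument restriction $\hat\ttd(\cdot,s)$ has Kolmogorov complexity function $\C_{\hat\ttd}(\cdot\mid s)$, so $\Delta$-optimality supplies a per-$s$ decoder, and the effectiveness noted in Remark~\ref{remark:effective_decompression} glues these into a uniform $\hat\ttd'$. Applying this to the SW decompressor $\ttd$ gives a conditional decoder $\ttd^+$, and $\ttd'(\varepsilon,p_1,\dots,p_\ell)$ is then defined to enumerate candidate tuples $\hat\tx$ in order of $\C_\ttd(\hat\tx)$ and output the first one passing all $\ell$ checks $\ttd^+(p_i,\hat\tx_{[\ell]\setminus\{i\}})=\hat x_i$.

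Correctness for the true $\tx$: the SW constraint with $J=\{i\}$ gives $\C_\ttd(x_i\mid\tx_{[\ell]\setminus\{i\}})<k_i=m_i-\Delta_i-\log(\ell/\varepsilon)$, so rescaling the conditional decoder's error parameter to $\varepsilon/\ell$ makes each check succeed with probability $\ge 1-\varepsilon/\ell$, and a union bound gives success probability $\ge 1-\varepsilon$ for $\tx$. To bound spurious acceptances of wrong $\hat\tx$ with $\C_\ttd(\hat\tx)\le\C_\ttd(\tx)$, group them by $J=\{i:\hat x_i\ne x_i\}$: approximate symmetry of information together with the SW constraint for $J$ gives at most $2^{\sum_{j\in J}k_j+O(\log)}$ candidates in that class, while $\Delta$-optimality forces the $\ttc$-mass that the decoder $\ttd^+(\cdot,s)$ puts on any ``wrong-target'' preimage to be small whenever both the target and the true value are conditionally decodable from $s$. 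Multiplying across the independent $\ttc$ randomness for $i\in J$ and summing over $J\subseteq[\ell]$ gives a spurious-acceptance probability of $O(\ell\varepsilon)$, contributing to the $8\ell\varepsilon$ bound.

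The main obstacle lies in the last step for $|J|\ge 2$: the check at each $i\in J$ then uses the ``wrong'' context $\hat\tx_{[\ell]\setminus\{i\}}$ (noise on $J\setminus\{i\}$, correct on $[\ell]\setminus J$), and the clean disjointness bound from $\Delta$-optimality applies only when both the true $x_i$ and the candidate $\hat x_i$ remain conditionally decodable from this perturbed context. Extracting the required complexity bound from the SW constraints for larger subsets, and using the $\log(\ell/\varepsilon)$ slack built into each $k_j$ to absorb the resulting case analysis, is the delicate quantitative step that produces the constant $8\ell$ in the failure probability.
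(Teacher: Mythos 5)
Your enumerate-and-test scheme takes a genuinely different route from the paper, and it has a gap that you flag in your last paragraph but do not (and, I believe, cannot) resolve. The paper's proof of Theorem~\ref{th:main_distributed_compression} is built on a \emph{random tree partitioning} (Theorem~\ref{th:main_componentwise_invertible}, Lemma~\ref{lem:random_tree}): the decompressor builds a random tree of depth $\ell$, randomly assigns each element of $\mcX_j$ to a child at level $j$, percolates candidate tuples down to leaves, and runs a bubble-up phase in which the single-source inverse $g_j$ is applied at each node to the set $B_j$ of $j$-th coordinates of competing siblings. The invertibility guarantee is only ever invoked in its intended form: with the \emph{true} $x_j$, the \emph{true} compressed code $\ff_j(x_j)$, and a set $B_j$ of size at most $\frac{\ell}{\varepsilon}K_j$ that contains $x_j$. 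Wrong candidates are never tested; they simply lose the competition because $g_j(B_j, \ff_j(x_j)) = x_j$ with high probability. Your scheme instead asks a conditional decoder $\ttd^+(p_i, s)$ to behave meaningfully for a \emph{wrong} context $s = \hat\tx_{[\ell]\setminus\{i\}}$. But the only guarantee Definition~\ref{def:optimal_compressor} gives about $\ttd^+(\ttc_{\varepsilon,m_i}(x_i), s)$ is that it equals $x_i$ with probability $\ge 1-\varepsilon$ \emph{when} $m_i \ge \C_\ttd(x_i \mid s) + \Delta$. For a context $s$ that mixes wrong coordinates on $J\setminus\{i\}$ with correct ones on $[\ell]\setminus J$, the quantity $\C_\ttd(x_i \mid s)$ is not controlled by anything in the hypotheses: the Slepian-Wolf constraints of Definition~\ref{def:SW_rate} only bound complexities of pieces of the \emph{true} tuple conditioned on complements of the \emph{true} tuple. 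A single bit-flip in one wrong coordinate can make $x_i$ conditionally incompressible given $s$, in which case $\ttd^+(p_i, s)$ may return $\hat x_i$ with probability close to $1$, and there is no disjointness argument to appeal to. The $\log(\ell/\varepsilon)$ slack in $k_j$ gives you room against counting, but not against this failure of the decoding guarantee.

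A second, separate gap: your spurious-candidate count invokes "approximate symmetry of information", but the theorem is stated for an \emph{arbitrary} decompressor $\ttd$, and $\C_\ttd$ need not satisfy any form of symmetry of information. No relation like $\C_\ttd(\hat\tx) \approx \C_\ttd(\hat\tx_J \mid \hat\tx_{[\ell]\setminus J}) + \C_\ttd(\hat\tx_{[\ell]\setminus J})$ is available, only the Slepian-Wolf constraint system itself. This is precisely why the paper routes the argument through the combinatorial notion of $\tK$-small slices (Definition~\ref{def:small_slices}) and Lemma~\ref{lem:fromKolm_to_smallSlices}: that lemma derives the slice bound directly from the SW constraints, with no symmetry of information needed, and the random tree uses only this slice bound. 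Switching from "enumerate by $\C_\ttd$ and test" to the random-tree partition is the mechanism that keeps the argument inside what $\Delta$-optimality actually guarantees; I do not see how to patch your approach without reinventing some version of it.
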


\noindent
The compressors in this definition work in isolation in a very strong sense: besides their string $x_j$, 
they only use the targets $\varepsilon$ and~$m_j$. They do not use the decompressor~$\ttd$ or even~$\ell$. 

If we apply this theorem to the compressor in the proof of Theorem~\ref{th:main_single_source}, 
then a similar remark for the effectivity of the transformation from $\ttd$ to $\ttd'$ 
holds as for Definition~\ref{def:optimal_compressor}:
if $\ttd$ is partial computable, respectively, computable, and computable in exponential time,
then so is $\ttd'$, see Remark~\ref{remark:effective_decompression_distributed}.
Moreover, for growing $\ell$, the overhead $\Delta_j + \log \tfrac{\ell}{\varepsilon}$ is logarithmic in $\ell$ and the error is linear in~$\ell$.
Hence, the result can be meaningful even when the number of senders is exponential in the length of the strings.

\subsection*{Related works and comparison with our results}\label{s:relatedworks}

The study of the fundamental limits of data compression has been undertaken in both Information Theory (IT) and Algorithmic Information Theory (AIT). We recall that the major conceptual difference between the two approaches is that in IT  data is viewed as being the outcome of  a generative process, whereas in AIT the data consists of individual strings without any assumption regarding their provenance. One can say that IT focuses on \emph{random processes}, that is processes whose outcomes have a degree of uncertainty, and AIT focuses on \emph{random strings}, that is strings that do not admit a concise description due to the lack of discernible patterns. Our approach is in the AIT spirit, but is more general, because it is not restricted to  computable decompressors. 

\smallskip
We start by reviewing basic results in IT. For concreteness, let us first suppose that an $n$-bit string is produced by a single activation  of some generative process (this is called the ``one-shot" scenario). The process is represented by a random variable $X^{(n)}$ which takes values in $\zo^n$. A zero-error compression procedure  is an injective mapping $\ttc: \zo^n \mapping \zo^{*}$ and if $\ttc(u)=v$ we think that $v$ is the compressed encoding of $u$. The goal is to minimize $\Lavg(\ttc)$, which is the expected length of $\ttc(X^{(n)})$. If the range of $\ttc$ is a prefix-free set (or, more generally, a so-called instantaneous code), which is desirable if we want to extend compression to a ``multiple-shot" setting, then the Kraft inequality implies that the average length has to be at least the Shannon entropy of $X^{(n)}$, \ie,  $\Lavg(\ttc) \ge H(X^{(n)})$. 

Huffman coding~\cite{huf:j:compression} is an algorithm that computes a prefix-free coding $\ttc_{\textnormal{Huffman}}$ with minimal average length satisfying  $H(X^{(n)}) \le \Lavg(\ttc_{\textnormal{Huffman}}) \le H(X^{(n)} )+ 1$. The drawback is that Huffman compression takes time exponential in $n$ and requires the knowledge of the distribution of $X^{(n)}$.  

One way to overcome these issues it to depart from the ``one-shot" scenario and to consider that the data is produced by some stochastic process $X_1, X_2,  \ldots $ which often satisfies simplifying assumptions, and also to allow some error.  The standard stochastic process of this type is the \emph{memoryless source}, which means that $X^{(n)} = (X_1, X_2, \ldots, X_n)$ are $n$ $\mbox{i.i.d.}$ draws from some distribution $p_x$ on $\zo$. A fixed-length code with  compression rate $R$ is given by a family of functions $(\ttc_n, \ttd_n)_{n \in \nat}$ of type $\ttc_n : \zo^n \mapping \zo^{Rn}$, $\ttd_n : \zo^{Rn} \mapping \zo^{n}$. The error of the code is defined to be $\varepsilon_n = {\rm Prob} (\ttd_n(\ttc_n(X^{(n)})) \not = X^{(n)})$. 

The Shannon Source Coding Theorem~\cite{sha:j:communication} shows that if $R > H(p_x)$ then there exists a code
$(\ttc_n, \ttd_n)_{n \in \nat}$ with compression rate $R$ and $\lim_{n \rightarrow \infty} \varepsilon_n = 0$, and if $R < H(p_x)$, then every code $(\ttc_n, \ttd_n)_{n \in \nat}$ with compression rate $R$  has $\inf \varepsilon_n > 0$. The compression in the positive part of the Source Coding theorem takes time exponential in $n$ and, since the error is not zero,  the decompression fails for some strings. 

\smallskip
Similarly to the Source Coding theorem for single-source compression, the Slepian-Wolf theorem~\cite{sle-wol:j:distribcompression} characterizes the possible compression lengths for \emph{distributed  compression} in the case of memoryless sources.
For notational convenience,  our short description of the Slepian-Wolf theorem is restricted  to the case of  two senders (Alice and Bob) and of data represented in the binary alphabet.  Recall that in distributed compression, Alice observes $X^{(n)}$, and Bob observes $Y^{(n)}$, where $(X^{(n)}, Y^{(n)})$ have a joint distribution over $\zo^n \times \zo^n$. 

In the case of a \emph{2-Discrete Memoryless Source} (2-DMS), $(X^{(n)}, Y^{(n)})$ are obtained by  $n$ $\mbox{i.i.d.}$ draws from some distribution $p_{x,y}$ on $\zo \times \zo$. In other words, a 2-DMS is given by a sequence $(X_1, Y_1), (X_2, Y_2), \ldots$, where $(X_i, Y_i)$ is the $i$-th independent draw from $p_{x,y}$. Alice observes $X^{(n)}= (X_1, \ldots, X_n)$ and Bob observes $Y^{(n)} = (Y_1, \ldots, Y_n)$.  A distributed compression procedure  with compression rates $(R_1, R_2)$ consists of a family of functions $(\ttc_{1,n}, \ttc_{2,n}, \ttd_n)_{n \in \nat}$ of the type
$\ttc_{1,n} : \zo^n \mapping \zo^{R_1 n}$, $\ttc_{2,n} : \zo^n \mapping \zo^{R_2 n}$, $D_{n} : \zo^{R_1 n} \times \zo^{R_2 n} \mapping \zo^{n} \times  \zo^{n}$. The error is defined as $\varepsilon_n = {\rm Prob} (\ttd_n (\ttc_{1,n}(X^{(n)}), \ttc_{2,n}(Y^{(n)})) \not= (X^{(n)}, Y^{(n)}))$. 

The compression rates $(R_1, R_2)$ are \emph{achievable} if $\lim_{n \mapping \infty} \varepsilon_n = 0$, and the \emph{achievable rate region} is the closure of the set of achievable rates. The Slepian-Wolf theorem states that for a 2-DMS as above the achievable rate region is given by all $(R_1, R_2)$ satisfying $R_1 \geq H(X_1 \mid Y_1)$, $R_2 \geq H(Y_1 \mid X_1)$, and $R_1 + R_2 \geq H(X_1, Y_1)$ (these inequalities form the Slepian-Wolf constraints).

\smallskip
The Source Coding Theorem has been extended to sources that are stationary and ergodic (using their  asymptotic equipartition property~\cite[Theorem 16.8.1]{cov-tho:b:inftheory}), and even to  arbitrary sources~\cite{han-ver:j:infspectrum}, using the \emph{information-spectrum concepts} initiated by  Han and Verd\'{u} (see the monograph~\cite{han:b:infspectrum}). 

Similar extensions exist for the Slepian-Wolf theorem: it has been generalized to sources that are stationary and ergodic~\cite{cov:j:slepwolfergodic}, and, using the  information-spectrum framework, a Slepian-Wolf coding theorem has been obtained for general sources~\cite{miy-kan:j:slepwolfinfspectrum}. However, these  latter results involving non-memoryless sources have a strong asymptotic nature and require that the distribution of $X^{(n)}$ (and, respectively,  of $(X^{(n)}, Y^{(n)})$ for Slepian-Wolf compression) is known. 

\medskip
\noindent
In summary, the IT results mentioned above show that for data produced by a stochastic process: 
\begin{itemize}[leftmargin=*]
  \item
 the minimal compression length for single-source compression  is arbitrarily close to Shannon entropy in case of a memoryless source, or to some asymptotical version of entropy for more general sources;  in the case of distributed compression, the optimal compression lengths are dictated by the Slepian-Wolf constraints, and 

 \item 
 optimal compression relies  either on  the very simple memoryless property, or on complete knowledge of the distribution.  
\end{itemize}

\noindent
Theorem~\ref{th:main_single_source} and  Theorem~\ref{th:main_distributed_compression}  can be viewed as a strengthening  of the Source Coding and Slepian-Wolf theorems. The main merit of our results compared to the classical theorems  is that the compressor works in polynomial time, and is universal in the strong sense of Definition~\ref{def:optimal_compressor}, which, in particular,  means that it does not assume any generative model.  

Another significant difference is that whereas in  the IT results mentioned above (with the exception of Huffman coding)   compression fails for some realizations of the sources, the compression in our results  works for all input tuples satisfying the promise condition. 

Our theorems imply their IT counterparts for memoryless sources and for stationary and ergodic sources using the fact 
that  a sequence of i.i.d. random variables (or, more generally, the outcome of a stationary ergodic source)
gives with a high probability a string of letters whose  Kolmogorov complexity is close to Shannon's entropy of the random source (this was stated in \cite[Proposition~5.1]{zvo-lev:j:kol} and a proof appears in~\cite{brudno1982entropy}; see also~\cite{ming2014kolmogorov, horibe2003note}).

\smallskip
Compression procedures for individual inputs (\ie, without using any knowledge regarding the generative process) have been previously designed using the celebrated Lempel-Ziv methods~\cite{lem-ziv:j:compress,ziv:j:compressindivid}.  This approach has lead to efficient compression algorithms that are used in practice.  Such methods have been used for distributed compression as well~\cite{ziv:j:compresshelper,due-wol:j:slep-wolf-individ,kas:j:slep-wolf-individ}.
For such procedures two kinds of optimality have been established, both valid for infinite sequences and thus having an asymptotic nature. 

First,  the procedures achieve a compression length that is asymptotically equal to the so-called finite-state complexity,  which is the minimum length that can be  achieved by finite-state encoding/decoding procedures~\cite{ziv:j:compressindivid}.  

Secondly, the compression rates are asymptotically optimal in case the infinite sequences are generated by sources that are stationary and ergodic~\cite{wyn-ziv:j:compress}.  

In contrast, the compression in Theorem~\ref{th:main_single_source} applies to finite strings and achieves a compression length close to minimal description length.  Unfortunately, it  cannot be practical, because, as we have explained, when compression is done at this level of optimality, decompression requires time longer than any computable function.  Our theoretical results show that compression is not a fundamental obstacle in the design of an efficient compressor-decompressor pair  that is optimal in some rigorous sense, and we hope that this principle will inspire and guide future research lines with applicative objectives. 

\bigskip
\noindent
We now move to  previous results obtained in the AIT framework. 
How well can we compress a string $x$? By a simple counting argument, there is no compressor (even probabilistic, and even incomputable) that compresses all strings $x$  down to length $\C(x) - c$, for some constant $c$.  As already mentioned no computable procedure can compress $x$ down to a string of length $\C(x)$. 

On the other hand, an easy argument shows that, if $\C(x)$ is also given (such an additional information regarding the input is known as \emph{help}), then  a description of $x$ of length $\C(x)$  can be obtained by exhaustive search, 
which, unfortunately, is an exceedingly slow procedure whose running time grows faster than any computable function. 
Corollary~\ref{cor:main_single_source} shows that in fact a description of $x$ of  almost minimal length can be found in probabilistic polynomial time. 

An interesting problem is the compression of $x$ conditioned by some unavailable information $y$, also known as,  the \emph{asymmetric} version of the Slepian-Wolf coding or 
\emph{source coding with side information at the receiver} in the information-theory literature, and 
as \emph{information reconciliation} or the \emph{document exchange problem} in the theoretical computer science literature.  
%
This time, a  compressor which knows $\C(x \mid y)$, but not $y$, cannot perform the exhaustive search. 

Muchnik's theorem~\cite{muc:j:condcomp} gives a surprising solution: There exist algorithms $\ttc$ and $\ttd$ such that for all $n$ and 
for all $n$-bit strings $x$ and $y$,  $\ttc$ on input $x$, $\C(x \mid y)$ and $O(\log n)$ help bits outputs a string $p$ of  length $\C(x \mid y)$,   and $\ttd$ on input $p$, $y$, and $O(\log n)$ help bits reconstructs $x$. 
Thus, given $\C(x \mid y)$,  Alice can compute from her string $x$  and only $O(\log n)$ additional help bits 
a string $p$ of minimum description length such that Zack using $p$, $y$ and $O(\log n)$ help bits can reconstruct~$x$.

Muchnik's theorem has been strengthened in several ways. Musatov, Romashchenko and Shen~\cite{mus-rom-she:j:muchnik} have obtained a version of Muchnik's theorem for space bounded Kolmogorov complexity, in which both compression and decompression are space-efficient.  Romashchenko~\cite{rom:j:slepwolf} has extended Muchnik's theorem to the general (\ie, non-asymmetric) Slepian-Wolf coding. His result is valid for any constant number of senders, but, for simplicity, we present it for the case of two senders: For any two $n$-bit strings $x$ and $y$ and any two numbers $n_1$ and $n_2$ satisfying the Slepian-Wolf constraints~\eqref{e:kswconditions}, there exist two strings $p_1$ and $p_2$ such that $|p_1| = n_1 + O(\log n), |p_2| = n_2 + O(\log n), \C(p_1 \mid x) = O(\log n), \C(p_2 \mid y) = O(\log n)$ and $\C(x,y \mid p_1, p_2) = O(\log n)$. In words, for any $n_1$ and $n_2$ satisfying the Slepian-Wolf constraints, Alice can compress $x$ to a string $p_1$ of length just slightly larger than $n_1$, and Bob can compress $y$ to a string $p_2$ of length just slightly larger than $n_2$ such that Zack can reconstruct $(x,y)$ from $(p_1, p_2)$, provided all the parties  use $O(\log n)$ help bits.
\if01
 These results raise the following questions: (a) can the help bits be eliminated?,
\footnote{In Muchnik's theorem, Alice computes a program $p$ of minimum description length such that $U(p,y)=x$ from $x$, $\C(x \mid y)$ and $O(\log n)$ help bits, where $U$ is the universal Turing machine underlying Kolmogorov complexity. One can hope to eliminate the $O(\log n)$ help bits (as we ask in question (a)), but not the $\C(x \mid y)$ component. This is not possible even when $y$ is the empty string. Indeed, it is known that for some strings $x$, the computation of $\C(x)$ from $x$, and therefore also the computation of a short program $p$ for $x$, requires that some information of size $\log|x|-O(1)$ bits is available~\cite{bau-she:j:compcomp,gac:j:symmetry}.}
and (b) is it possible to  implement the protocol efficiently, 
i.e., in polynomial time?  
\fi

Bauwens et al.~\cite{bmvz:c:shortlist}, Teutsch~\cite{teu:j:shortlists} and Zimand~\cite{zim:c:shortlistshortproof}  have obtained versions of Muchnik's theorem with polynomial-time compression, but in which the help bits are still present. In fact, their results are stronger in that the compression  procedure on input $x$ outputs a polynomial-size list of strings guaranteed to contain a short program for $x$ given $y$. This is called list approximation. Note that using $O(\log n)$ help bits, the compression procedure can pick the right element from the list, re-obtaining Muchnik's theorem. The gain is that this compression procedure halts even with incorrect help bits, albeit in this case  the  result may not be the desired $x$. 


Bauwens and Zimand~\cite{bau-zim:c:linlist} show the existence of  polynomial-time list approximation algorithms for programs of minimal description lengths  that do not require help bits, but which, instead, are probabilistic. Zimand~\cite{zim:stoc:kolmslepwolf}  obtains Slepian-Wolf coding in the Kolmogorov complexity setting via probabilistic polynomial-time encoding without help bits. 
  
\smallskip
We were inspired by~\cite{bau-zim:c:linlist}  and~\cite{zim:stoc:kolmslepwolf},  which represent the starting point for our study. This work  contains conceptual and technical ideas that depart significantly from these two papers.  The main conceptual novelty is that the optimal compressors in this paper are universal, i.e., they work for any decompressor, computable or not, in the sense explained in Definition~\ref{def:optimal_compressor}.  

On the technical side, we use a new fingerprinting technique, based on condensers and conductors, which leads to smaller overhead  $(\log^2 n$, compared to $\log^3 n$ in ~\cite{zim:stoc:kolmslepwolf}).  

Also, the Slepian-Wolf coding in Theorem~\ref{th:main_distributed_compression} provides a double exponential improvement of  the dependence of the overhead on the number $\ell$ of senders compared to~\cite{zim:stoc:kolmslepwolf}. The latter paper implicitly assumes that $\ell$ is constant; otherwise,  the overhead obtained there makes the result essentially  meaningless. The same is true for the classical Slepian-Wolf theorem: no non-trivial compression can be achieved if the number of senders is significantly larger than the length of the compressed strings.  One of our new ideas is the use of {\em random tree partitioning}, and this technique can also be applied to the standard proof of the classical result to obtain an exponential improvement on the number~$\ell$ of sources.

\section{Optimal single source compression: Proof of Theorem~\ref{th:main_single_source}}
\label{s:techniques}

The main novel features of the proof  are  an efficient fingerprinting construction based on condensers and conductors and a method to handle fingerprints that produce many collisions. We present below the basic ideas.  

\subsection*{Invertible functions}

In Definition~\ref{def:optimal_compressor}, the optimal  compressor and the corresponding decompressor $\ttd'$ only need to work correctly for strings $x$ in 
\begin{equation}\label{e:sets}\tag{*}
  \{x : \C_D(x)< k\},
\end{equation}
where $k = m - \Delta + 1$. 
For different $\ttd$, these sets are different, however all of them have size less than~$2^k$. 
This set contains the initial suspects, from which the decompressor has to find the compressed string~$x$.
The optimal compressor provides a {\em fingerprint}, i.e., a small amount of information 
about the input string $x$ that allows its  identification among these suspects. Because of the universality property, the compressor needs to generate fingerprints without knowing the list of suspects. For any set $\mcX$, let $\mcX^{\le K}$ denote the set of lists of elements in $\mcX$ with size at most~$K$. 
The above discussion leads to the following definition.

\begin{definition}\label{def:invertible}
  A probabilistic function $\ff \colon \mcX \rightarrow \mcY$ is {\em $(K,\varepsilon)$-invertible} if there exists a deterministic partial function 
  \mbox{$g \colon \mcX^{\le K} \times \mcY \rightarrow \mcX$} such that for all $S \in \mcX^{\le K}$ and all~$x \in S$:
 \[
   \Pr \left[g_S(\ff(x)) = x\right] \;\ge\; 1-\varepsilon\,,
 \]
 where $g_S(y) = g(S,y)$.
 $\ff$ is  {\em online} $(K,\varepsilon)$-invertible if there exists such a function $g$ that is {\em monotone} in $S$: 
 if list $S'$ extends $S$,  then the function $y \mapsto g_{S'}(y)$ is an extension of~$y \mapsto g_S(y)$.
\end{definition}

\noindent
The interpretation is that $g$, on input a list $S$ of suspects and a random  fingerprint  $\ff(x)$ of $x$,   identifies $x$ with high probability  among the suspects. The main technical result, from which Theorem~\ref{th:main_single_source} follows, is a probabilistic polynomial time algorithm that computes an invertible function.

\begin{theorem}\label{th:main_invertible}
  There exists  a probabilistic algorithm $\ff$ that on input $\varepsilon>0$, $k$ and string $x$, outputs  
  in time polynomial in~$|x|$ a string $\ff_{\varepsilon,k}(x)$ of length $k+O(\log k \cdot \log (|x|/\varepsilon))$, 
  such that for all $\varepsilon > 0$ and $k$, the function $x \mapsto \ff_{\varepsilon,k}(x)$ is 
  online $(2^k,\varepsilon)$-invertible. 

  Moreover, there exists a family of monotone inverses $g_{\varepsilon,k}$ of $\ff_{\varepsilon,k}$, 
  for which the mapping $(\varepsilon, k, S, y) \mapsto g_{\varepsilon,k,S}(y)$ can be evaluated in space polynomial in $\max_{z \in S} |z|$. 
\end{theorem}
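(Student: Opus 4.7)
The plan is to build $\ff_{\varepsilon,k}$ from a lossless condenser (or, to obtain the exact overhead, from a composition of such condensers, equivalently conductors, as hinted at in the introduction). Recall that a polynomial-time $(k,\varepsilon)$-lossless condenser is a function $C\colon \{0,1\}^n \times \{0,1\}^d \to \{0,1\}^m$ such that for every set $S \subseteq \{0,1\}^n$ of size at most $2^k$, for at least a $(1-\varepsilon)$-fraction of seeds $r$ the map $z \mapsto C(z,r)$ is injective on $S$. Known explicit constructions, iterated across $O(\log k)$ stages that successively reduce the ambient entropy, yield such a function with $d + m = k + O(\log k \cdot \log(|x|/\varepsilon))$. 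Given such $C$, the compressor samples $r$ uniformly from $\{0,1\}^d$ and outputs $\ff_{\varepsilon,k}(x) := (r,\, C(x,r))$, of total length $d+m$ as required by the theorem.

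For the inverse, given a list $S=(z_1,\ldots,z_L)$ with $L \le 2^k$ and a candidate fingerprint $(r,w)$, set $g_{\varepsilon,k,S}(r,w) := z_i$ where $i$ is the smallest index such that $C(z_i,r)=w$ (undefined if no such index exists). Monotonicity is immediate: if $S'$ extends $S$, then the smallest-index element of $S'$ with hash $w$ coincides with the corresponding element of $S$ whenever the latter exists, so $g_{\varepsilon,k,S'}$ extends $g_{\varepsilon,k,S}$. For correctness, fix any list $S$ and any $x \in S$; by the condenser guarantee, with probability at least $1-\varepsilon$ over $r$ the map $C(\cdot,r)$ is injective on the elements of $S$, and in this event $x$ is the unique (hence first-indexed) preimage of $C(x,r)$ in $S$, so $g_{\varepsilon,k,S}(\ff_{\varepsilon,k}(x))=x$.

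For the complexity claims, $\ff_{\varepsilon,k}$ runs in time polynomial in $|x|$ because sampling $r$ and evaluating $C$ are polynomial-time, and $g_{\varepsilon,k,S}$ iterates through $S$ and evaluates $C$ at each element, using space polynomial in $\max_{z\in S}|z|$ (polynomial-time computability of $C$ implies it is computable in polynomial space). The main obstacle is the pseudorandom construction itself: one must supply a condenser (or an iterated/conductor-based composition) that is polynomial-time computable for all input parameters and achieves the stated overhead $O(\log k \cdot \log(|x|/\varepsilon))$ uniformly in $k$ and $\varepsilon$. Once such a construction is in hand, the ``first-index'' convention gives the online property essentially for free, so the remaining argument reduces to the standard injectivity analysis of the condenser and a straightforward complexity accounting.
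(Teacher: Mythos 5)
The proposal has a genuine gap, and it is one the paper itself explicitly names and rules out. Your compressor outputs $(r, C(x,r))$ and your inverse is a first-match inverse: given $(r,w)$, return the first $z \in S$ with $C(z,r)=w$. Your ``lossless condenser'' requirement---that for every $S$ of size at most $2^k$, a $(1-\varepsilon)$-fraction of seeds $r$ make $z \mapsto C(z,r)$ injective on $S$---cannot hold with $d+m = k + O(\log k \cdot \log(n/\varepsilon))$, and this is precisely the content of Lemma~\ref{lem:simple_inverse}. Recast in the lemma's notation, your fingerprint sets are $F_x = \{(r,C(x,r)): r\}$, and the lemma states that whenever $\#\mcY < \min\{\tfrac{1}{4}K^2,\ \#\mcX - \tfrac{1}{2}K\}$, there exist $x$ and $S\ni x$ with $F_x \subseteq \bigcup_{z\in S,\,z\neq x} F_z$. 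Concretely: if $d+m < 2k-2$ (and $2k<n$), then for some $S$, every seed $r$ makes $x$ collide with a preceding element of $S$, so your first-match inverse fails with probability $1$. This also agrees with the elementary birthday lower bound: to have all $\binom{2^k}{2}$ pairs in $S$ collision-free for a $(1-\varepsilon)$-fraction of seeds, you need roughly $m \ge 2k + \log\tfrac{1}{\varepsilon}$. So the premise ``once such a construction is in hand'' can never be satisfied with the stated overhead, and the whole argument collapses.

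What the paper actually does is engineered around exactly this obstacle. Instead of demanding zero collisions, it uses a conductor with a bounded \emph{excess} property (Definition~\ref{def:conductor}). Lemma~\ref{lem:condenser_as_filter} shows such a conductor yields a \emph{list}-invertible function: for each list $S$, all but half of the elements have a ``light'' fingerprint that causes only a modest number (roughly $2^r\log(2\#S)$) of collisions with other elements of $S$. Those collisions are then resolved by appending a second, prime-based hash (Lemma~\ref{lem:prime_hashing}), whose cost is only $O(\log(tn/\varepsilon))$ where $t$ is the collision bound. The ``deficient'' elements with heavy fingerprints are handled by a recursion of depth $\log(2\#S)$ (Corollary~\ref{cor:pruning}), and Lemma~\ref{lem:compose_listInvertible_and_invertible} glues the pieces together. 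So the correct route is: conductor for a light fingerprint, plus a short secondary hash to disambiguate among a quasi-polynomial number of candidates, plus recursion for the rejected half; not a perfect (injective-for-most-seeds) hash with a first-match inverse. You should reread the discussion surrounding Lemma~\ref{lem:simple_inverse} and the subsection ``Fingerprints from condensers and conductors.''
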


\begin{proof}[Proof of Theorem~\ref{th:main_single_source} (assuming Theorem~\ref{th:main_invertible}).] 
  Without loss of generality, we assume that $\varepsilon$ is a negative power of $2$, because rounding down $\varepsilon$ increases the overhead by less than a constant factor. Hence, we can represent $\varepsilon$ in binary using at most $O(\log \tfrac{1}{\varepsilon})$~bits.
  The optimal compressor outputs the string $\ttc_{\varepsilon,m}(x)$ which is an $m$-bit representation of the triple 
  $(\varepsilon, k(\varepsilon,m,|x|), \ff_{\varepsilon,k}(x))$,
  where the function $k$ is chosen large enough such that $m-k(\varepsilon,m,n) \le O(\log m \cdot \log (n/\varepsilon))$, 
  but not too large such that the triple can still be represented in binary using precisely $m$ bits.  
  On input a triple $(\varepsilon, k, y)$, the decompressor $\ttd'$ first enumerates the set $S$ of equation~\eqref{e:sets}, 
  and outputs $g_S(y)$, where $g$ is the monotone inverse of~$\ff_{\varepsilon,k}$.
\end{proof}

\begin{remark}\label{remark:inverse_lengths}
The theorem considers inputs $x$ of arbitrary length. 
However, it is enough to construct for each fixed input size $n$, 
an invertible $\ff \colon \{0,1\}^n \mapsto \{0,1\}^m$ with inverse~$g$. 
Indeed, by adding $n$ to the output, 
we obtain an inverse in $\{0,1\}^*$, simply by evaluating the inverse for $n$-bit inputs on the $n$-bit elements in~\eqref{e:sets}. 
\end{remark}

\begin{remark}\label{remark:effective_decompression}
The definition of $\Delta$-optimality 
requires the existence of some mapping $\ttd \mapsto \ttd'$, but does not impose any computability requirement on this mapping. 
However, from the proof of Theorem~\ref{th:main_single_source} we obtain that if $\ttd$ is given as an oracle, we can evaluate $\ttd'$ for an $n$-bit compressed string 
in space polynomial in $n$, and hence in time exponential in~$n$. Indeed, assume the length $n$ is a part of the output, as explained in the previous remark. On input $(\varepsilon, k, y, n)$, the decompressor $\ttd'$ uses the oracle to enumerate the set $S$ of $n$-bit strings in \eqref{e:sets}. Each time such a string is found, the value of $g_S(y)$ is calculated, and the output of $\ttd'$ is the corresponding value. The monotonicity property guarantees that the output of $\ttd'$ does not change after new elements appear in~$S$.
If $\ttd$ is computable in polynomial space, then so is the corresponding $\ttd'$. 
Similarly, if $\ttd$ is computable, respectively, partial computable, then so is $\ttd'$.
\end{remark}

\noindent
It remains to prove Theorem~\ref{th:main_invertible}. 
To better understand the main difficulty, we first review weaker results based on standard fingerprinting techniques.

%

\subsection*{Fingerprints from random hashing}

Fix $\varepsilon > 0$ and $K$. Theorem~\ref{th:main_invertible} provides $(K,\varepsilon)$-invertible functions with output length close to~$\log K$.
Consider the variant of Definition~\ref{def:invertible} in which the functions $\ff$ and $g$ are evaluated using shared randomness in their evaluations. In this model, $\ff$ and $g$ have an extra argument~$\rho$ representing a random string. 
Fingerprints given by random subset parities provide $(K,\varepsilon)$-invertible functions for this easier setting.

\medskip
\noindent
{\em The fingerprints.} 
On input $x$ and a sufficiently long~$\rho$, let $\ff_\rho(x)$ be a string of size $m = \lceil\log \tfrac{K}{\varepsilon}\rceil$ evaluated
by taking $m$ disjoint segments $\rho_1, \dots, \rho_m$ of length~$|x|$ from~$\rho$, and setting the $i$-th bit equal to $\sum_{j \le n} \rho_{i,j} x_j \bmod 2$.  In other words, $\ff_\rho(x) = Rx$, where $R$ is the matrix having the rows $\rho_1, \dots, \rho_m$,  and the $\rho_i$'s and $x$ are viewed as $n$-vectors with elements in the field GF[$2$]. 

\medskip
\noindent
{\em The inverse $g$.} 
On input $\rho$, a list $S$ of strings, and $y$, the value $g_{\rho,S}(y)$ is given by the first $z$ in $S$ for which $\ff_\rho(z) = y$.

\begin{lemma}\label{lem:subset_parity}
  If $S$ is a list of at most~$K$ strings of equal length and $x \in S$, then
  $
  \Pr_\rho \left[ g_{\rho,S}(\ff_\rho(x)) \mathop{=} x \right] \;\ge\; 1-\varepsilon.
  $
\end{lemma}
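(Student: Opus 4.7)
The plan is a very short union-bound argument exploiting the linearity of $\ff_\rho$ over $\mathrm{GF}[2]$.

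First I would identify the failure event. Since $g_{\rho,S}(y)$ returns the first element of $S$ whose fingerprint equals $y$, the only way for $g_{\rho,S}(\ff_\rho(x))$ to differ from $x$ is that some $z \in S \setminus \{x\}$ satisfies $\ff_\rho(z) = \ff_\rho(x)$. Writing the fingerprint as the matrix-vector product $\ff_\rho(x) = Rx$ over $\mathrm{GF}[2]$, this collision event is exactly $R(z-x) = 0$, where $v := z - x$ is a fixed nonzero vector of length $n = |x|$.

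Next I would compute the collision probability for a single $z$. For any fixed nonzero $v \in \mathrm{GF}[2]^n$ and a uniformly random row $\rho_i \in \mathrm{GF}[2]^n$, the inner product $\rho_i \cdot v$ is uniform on $\{0,1\}$, since flipping any coordinate of $\rho_i$ where $v$ has a $1$ flips the value of the inner product. Hence $\Pr[\rho_i \cdot v = 0] = 1/2$. Because the rows $\rho_1, \dots, \rho_m$ are disjoint segments of $\rho$ and therefore independent, we get
\[
\Pr_\rho[R v = 0] \;=\; 2^{-m}.
\]

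Finally, a union bound over the at most $K - 1$ choices of $z \in S \setminus \{x\}$ gives
\[
\Pr_\rho\bigl[\, g_{\rho,S}(\ff_\rho(x)) \neq x \,\bigr] \;\le\; (K-1)\cdot 2^{-m} \;\le\; K \cdot 2^{-\lceil \log(K/\varepsilon)\rceil} \;\le\; \varepsilon,
\]
which is the claim. There is no real obstacle here; the only subtlety worth noting explicitly is that the shared randomness $\rho$ must contain enough bits to supply $m$ independent rows of length $n$, so one should remark that $|\rho| \ge mn = O\bigl(n \log(K/\varepsilon)\bigr)$ suffices — this justifies speaking of independent $\rho_i$'s and is why this construction requires the shared-randomness model rather than satisfying Theorem~\ref{th:main_invertible} directly.
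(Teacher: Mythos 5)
Your proof is correct and is essentially the same as the paper's: identify the collision event $\ff_\rho(z)=\ff_\rho(x)$ for some $z\in S\setminus\{x\}$, bound each such collision by $2^{-m}\le \varepsilon/K$, and apply a union bound over the at most $K-1$ other elements. You supply a bit more detail (the explicit $\mathrm{GF}[2]$ linearity argument for the $2^{-m}$ bound, and the remark on the required length of $\rho$), which the paper leaves implicit, but the structure and key steps are identical.
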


\begin{proof} 
  For two different strings $x$ and $z$ of equal length, the probability over $\rho$ that $\ff_{\rho}(x) = \ff_\rho(z)$
 is at most $2^{-m} \le \varepsilon /K$. 
 By the union bound, the probability that some $z \in S$ different from $x$, has the same subset parities, is at most~$\varepsilon$.
 This bounds the probability that $g$ returns an element different from~$x$.
\end{proof}

\bigskip
\noindent
How can shared randomness be eliminated?  One idea is to attach the random bits to the output of~$\ff$.
However,  on input $x$, the technique above uses  $m \cdot |x|$ random bits, 
and after attaching, the length exceeds the output length of the identity function.
We can try another well known hashing technique based on arithmetic modulo prime numbers, which only requires about $\log K$  random bits, 
and provides output lengths close to $2\log K$.
This hashing technique will be used later, and is based on the following.

\begin{lemma}\label{lem:simple_hashing}
 If $x, z_1, \dots, z_K$ are different nonnegative integers less than $2^n$ and $P$ 
 is a set of at least $Kn/\varepsilon$ prime numbers, 
 then for a fraction $1-\varepsilon$ of primes $p$ in $P$:
 $x \bmod p \;\not\in\; \{z_1 \bmod p, \dots, z_K \bmod p\}$.
\end{lemma}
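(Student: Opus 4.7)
The plan is to prove the lemma by bounding, for each index $i$, the number of primes $p$ for which $x \equiv z_i \pmod p$, and then applying a union bound over~$i$.

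First, I would fix an index $i \in \{1, \dots, K\}$ and consider the integer $d_i = x - z_i$. Since $x \ne z_i$ and both are nonnegative integers strictly less than $2^n$, we have $d_i \ne 0$ and $|d_i| < 2^n$. The key observation is that a nonzero integer of absolute value less than $2^n$ has at most $n$ distinct prime divisors, because any product of more than $n$ distinct primes is at least $2^{n+1} > |d_i|$. Now $x \equiv z_i \pmod p$ if and only if $p$ divides $d_i$, so there are at most $n$ primes $p$ (in all of $\mathbb{Z}$, hence in $P$) for which this congruence holds.

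Next, I would apply the union bound over $i = 1, \dots, K$: the number of primes $p \in P$ for which $x \bmod p$ equals some $z_i \bmod p$ is at most $Kn$. By the hypothesis $|P| \ge Kn/\varepsilon$, the fraction of such ``bad'' primes is at most $Kn / (Kn/\varepsilon) = \varepsilon$, so the complementary fraction of good primes is at least~$1-\varepsilon$, which is exactly the statement of the lemma.

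I do not anticipate a serious obstacle here: the argument is an entirely elementary counting of prime divisors combined with a union bound, and the quantitative bound in the hypothesis is designed to match it exactly. The only mild care needed is the observation that ``a nonzero integer of absolute value less than $2^n$ has at most $n$ distinct prime divisors,'' which follows from the fact that the smallest prime is $2$ so that $p_1 p_2 \cdots p_{n+1} \ge 2^{n+1}$ for any $n+1$ distinct primes.
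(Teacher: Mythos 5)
Your proof is correct and follows essentially the same route as the paper: bound the distinct prime factors of each difference $x - z_i$ by $n$, apply the union bound over $i$ to get at most $Kn$ bad primes, and compare to $|P| \ge Kn/\varepsilon$. The only difference is that you spell out the elementary justification for the "at most $n$ distinct prime factors" step, which the paper leaves implicit.
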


\begin{proof}
 Each number $(x-z_i)$ has at most $n$ different prime factors. Thus all $(x-z_i)$'s together 
 have at most $Kn$ different prime factors. Hence a random  element in a set of at least
 $Kn/\varepsilon$ primes, is a prime factor with probability at most~$\varepsilon$.
\end{proof}

\noindent
Interpret strings $x$ as nonnegative integers smaller than $2^{|x|}$.   
Let the fingerprint $\fh_{\varepsilon,K}(x) = (p, x \bmod p)$ be given by selecting $p$ randomly among the primes of bit size at most $s + \lceil \log s\rceil + 1$, where $s = \log (K|x|/\varepsilon)$. 

\begin{lemma}\label{lem:prime_hashing}
  For all $\varepsilon > 0, K$ and $n$, the fingerprint $\fh_{\varepsilon,K}$ applied to $n$-bit strings, defines a $(K, \varepsilon)$-invertible function
  with output size at most $2\log K + O(\log \tfrac{n}{\varepsilon})$. 
\end{lemma}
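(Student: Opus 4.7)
The plan is to reduce the lemma directly to Lemma~\ref{lem:simple_hashing} once we certify that the prime pool is large enough. Let $s = \log(K|x|/\varepsilon)$ and $N = s + \lceil \log s \rceil + 1$, and let $P$ be the set of primes of bit size at most~$N$ (the set from which $p$ is drawn uniformly). The first step is to show that $|P| \ge Kn/\varepsilon$, so that Lemma~\ref{lem:simple_hashing} is applicable with this $P$.

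To establish $|P| \ge Kn/\varepsilon$, I would use a Chebyshev-type lower bound $\pi(M) \ge c M/\ln M$. We have $2^N \ge 2s \cdot 2^s = 2s \cdot (Kn/\varepsilon)$ and $N = O(s)$, which gives $\pi(2^N) \ge \Omega(2^s) = \Omega(Kn/\varepsilon)$; the $+\lceil\log s\rceil + 1$ slack in the bit-size bound was chosen precisely to absorb the $1/N$ factor from Chebyshev, and a mild constant adjustment in the exponent guarantees at least $Kn/\varepsilon$ primes. The inverse $g$ is then the obvious one: on input a list $S \in \mcX^{\le K}$ and a pair $(p,r)$, let $g_S(p,r)$ be the first element $z$ of $S$ (under any fixed order) with $z \bmod p = r$, and undefined otherwise.

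Given $x \in S$ with $|S| \le K$, interpret $x$ and the elements of $S \setminus \{x\}$ as distinct nonnegative integers below $2^n$; Lemma~\ref{lem:simple_hashing} applied to this list of fewer than $K$ ``other'' strings and to the prime set $P$ guarantees that for a fraction at least $1-\varepsilon$ of $p \in P$, the residue $x \bmod p$ lies outside $\{z \bmod p : z \in S \setminus \{x\}\}$. On this event, $g_S(p, x \bmod p) = x$, giving the required success probability.

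Finally, the output-length bound is immediate: both $p$ and $x \bmod p < p$ are integers below $2^N$, so each is encoded in $N$ bits (padding as needed), giving total length $2N = 2s + 2\lceil \log s\rceil + 2 = 2\log K + 2\log(n/\varepsilon) + O(\log s)$, which is $2\log K + O(\log(n/\varepsilon))$. The only genuine content is the prime-counting estimate, and the main (though still mild) obstacle is being careful with the constants there so that the slack of $\lceil \log s \rceil + 1$ bits actually suffices; everything else is a mechanical invocation of Lemma~\ref{lem:simple_hashing}.
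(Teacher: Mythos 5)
Your proposal is correct and follows essentially the same route as the paper: both reduce to Lemma~\ref{lem:simple_hashing} after certifying that the pool of primes of bit size at most $s + \lceil \log s\rceil + 1$ contains at least $2^s = Kn/\varepsilon$ primes, and both then read off the length bound $2N = 2\log K + O(\log(n/\varepsilon))$. The paper phrases the prime-count step via an upper bound on the $i$-th prime (``$p_i < 2i\log i$ for large $i$'', so $p_{2^s}$ has bit size at most $s + \lceil\log s\rceil + 1$), whereas you phrase it via a Chebyshev-type lower bound on $\pi$; these are equivalent restatements of the same estimate and neither needs any ideas beyond what you have.
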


\begin{proof}
  The prime number theorem implies that the $i$-th prime is less than~$2i \log i$ for large~$i$.
  Hence, the set of primes of bit size at most $s + \lceil \log s\rceil + 1$ contains at least $2^s = Kn/\varepsilon$ primes. 
  On input a set $S$ and a pair $(p,j)$, the function $g$ outputs an element $x$ in $S$ for which $x = j \bmod p$.
  By the lemma above, with probability $1-\varepsilon$, there exists no such $z$ different from~$x$, and hence, the decompressor outputs the correct value.
\end{proof}

\noindent
Let us summarize: a $(2^k,\varepsilon)$-invertible function has output size at least~$k$.
The fingerprints of Lemma~\ref{lem:subset_parity} have output size $k + O(\log \frac{1}{\varepsilon})$,  which is close to optimal, but use shared randomness. 
Lemma~\ref{lem:prime_hashing} does not use shared randomness, 
but the method cannot achieve compression lengths better than $2k + O(\log \frac{1}{\varepsilon})$.\footnote{
  An idea would be to improve   Lemma~\ref{lem:subset_parity} by computing  fingerprints with  fewer random bits. 
  A possibility would be to  use randomness generated by a pseudo-random generator.   
  In this case, one can prove a weaker variant of  Theorem~\ref{th:main_single_source} where one is restricted to 
  decompressors $\ttd$ running in polynomial time. 
  Moreover, the statement is  valid only conditional on a hardness assumption from computational complexity.  
   
  Another possibility would be to use Newman's theorem  from communication complexity, that converts protocols with shared randomness into protocols with private randomness. In this way, we can only achieve invertible functions for a fixed set $S$. This means that we obtain a variant of Theorem~\ref{th:main_single_source} where the optimality requirement considers only a single decompressor. Even if this decompressor runs in polynomial time, the resulting compressor runs in exponential time. If the decompressor is optimal, as in Corollary~\ref{cor:main_single_source}, the resulting compressor is not even computable.
  }

  The technique used in the above lemmas produces inverses $g$ of $\ff$ of a simple form: $g_S(y)$  searches for the first $x \in S$ such that $\ff(x) = y$ has positive probability. We show that this technique has a limitation that precludes optimal compression for all inputs. 
Let $F_x$ denote the set of all values of $\ff(x)$ that appear with positive probability. 
For the technique to work correctly, we need  that with positive probability $F_x$ is not included in $\bigcup \{F_z : z \in S, z \not= x\}$, in other words, the fingerprint $\ff(x)$ does not cause $x$ to collide with any other $z \in S$.
Unfortunately, such inverses for lists of suspects of size $2^k$, always require output sizes $m \ge 2k$, instead of the desired~$k$. 

\newcommand{\lemmaSimpleInverse}{
For all $x \in \mcX$, let $F_x \subseteq Y$. If $\#\mcY < \min\{\tfrac{1}{4} K^2, \# \mcX-\tfrac{1}{2}K\}$, 
  there exists an $x \in \mcX$ and a set $S \subseteq \mcX$ of size less than $K$ such that
  \[
  F_x \;\subseteq\; \bigcup_{ z \in S, z \not= x} F_z\,. 
  \]
  }
\begin{lemma}\label{lem:simple_inverse}
  \lemmaSimpleInverse
\end{lemma}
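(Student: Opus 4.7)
The plan is to argue by contradiction: assume that for every $x \in \mcX$ there is no $S \subseteq \mcX$ with $|S| < K$ satisfying $F_x \subseteq \bigcup_{z \in S,\,z \neq x} F_z$. The argument partitions $\mcX$ into ``easy'' and ``hard'' parts and combines an injectivity count with a Cauchy--Schwarz double-count that uses both hypotheses $\#\mcY < K^2/4$ and $\#\mcY < \#\mcX - K/2$.

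First, let $Y_1 = \{y \in \mcY : \#\{x : y \in F_x\} = 1\}$ and $A_1 = \{x \in \mcX : F_x \cap Y_1 \neq \emptyset\}$. Sending each $x \in A_1$ to some $y \in F_x \cap Y_1$ is an injection into $Y_1$ (each such $y$ determines a unique $x$), so $\#A_1 \leq \#Y_1 \leq \#\mcY$. For $x$ in $A_2 := \mcX \setminus A_1$, every $y \in F_x$ lies in $Y_{\geq 2} := \mcY \setminus Y_1$ and hence admits a ``witness'' $z \neq x$ with $y \in F_z$. If some such $x$ had $|F_x| < K$, taking one witness per element of $F_x$ would give an $S$ with $\#S \leq |F_x| < K$ covering $F_x$, contradicting the assumption. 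Hence every $x \in A_2$ has $|F_x| \geq K$, and the second hypothesis yields $\#A_2 \geq \#\mcX - \#A_1 > K/2$.

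The crux is a Cauchy--Schwarz double-count on $d_y := \#\{x \in A_2 : y \in F_x\}$. All contributions live in $Y_{\geq 2}$ of size $< K^2/4$, and $\sum_y d_y = \sum_{x \in A_2} |F_x| \geq \#A_2 \cdot K$, so
\[
\sum_y d_y^2 \;\geq\; \frac{\big(\sum_y d_y\big)^2}{\#Y_{\geq 2}} \;>\; \frac{(\#A_2 \cdot K)^2}{K^2/4} \;=\; 4(\#A_2)^2.
\]
Combined with $\sum_{x_1 \neq x_2 \in A_2} |F_{x_1} \cap F_{x_2}| = \sum_y d_y(d_y - 1)$, the total pairwise intersection in $A_2$ exceeds $4(\#A_2)^2 - \#A_2 \cdot K$. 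Averaging, some $x^* \in A_2$ satisfies $\sum_{z \in A_2 \setminus \{x^*\}} |F_z \cap F_{x^*}| \gtrsim 4\#A_2 - |F_{x^*}|$. A greedy cover of $F_{x^*}$ -- repeatedly taking the $z$ with largest overlap with the uncovered residual until what remains has at most $K - 1 - t$ elements (where $t$ is the number of greedy steps used), then covering each leftover $y$ (which, being in $Y_{\geq 2}$, has a witness) by a singleton $F_z$ -- assembles a cover of $F_{x^*}$ of size at most $K - 1$, the required contradiction.

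The delicate point, which I expect to be the main obstacle, is the last quantitative bookkeeping. When $|F_{x^*}|$ is close to $K$ a single overlapping pair $(x^*, z^*)$ already saves the one extra element we need. When $|F_{x^*}|$ is substantially larger than $K$, one must iterate the Cauchy--Schwarz savings across several greedy $z$'s to harvest the required $|F_{x^*}| - K + 1$ ``extra'' covered elements. This is precisely where the two hypotheses combine: $\#Y_{\geq 2} < K^2/4$ packs the $\#A_2 > K/2$ sets of size $\geq K$ into a universe small enough that the average pairwise overlap is bounded below by a constant strictly greater than $1$, which -- refined over iterated greedy steps -- produces the needed savings.
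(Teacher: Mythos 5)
Your first phase matches the paper's: the injection $A_1 \hookrightarrow Y_1$ (send each $x\in A_1$ to a private element of $F_x$) is precisely the counting argument the paper uses to show that at least $K/2$ elements $x$ satisfy $F_x \subseteq \bigcup_{z\neq x} F_z$, and your $A_2$ is exactly that set. Your observation that any $x \in A_2$ with $|F_x| < K$ yields the required cover by taking one witness per element is also the paper's easy case. Up to this point the two arguments coincide.

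The gap is exactly where you flag it, and it is genuine. After fixing $x^*$ by averaging and then running a greedy cover, nothing forces the residual to shrink fast enough. All you know about a point $y\in F_{x^*}$ is that $y\in Y_{\geq 2}$, i.e.\ it has at least one witness $z\neq x^*$; after a few greedy steps, every remaining $y$ could be covered only by a witness $F_z$ that contributes a single new element, so if $|F_{x^*}|\gg K$ you still need $\gg K$ sets. The Cauchy--Schwarz argument controls the aggregate $\sum_{z\neq x^*}|F_z\cap F_{x^*}|$, but a large aggregate overlap does not control the marginal gain of a greedy set cover, which is the quantity you actually need. (A secondary slip: the claim that the total pairwise intersection exceeds $4(\#A_2)^2 - \#A_2 K$ does not follow, because $\sum_y d_y$ is only bounded below by $\#A_2 K$ and appears with a minus sign in $\sum_y d_y^2 - \sum_y d_y$; the inequality has to be kept in terms of $T=\sum_y d_y$ throughout.)

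The paper sidesteps this set-cover obstruction by refusing to commit to the target $x^*$ in advance. It processes the candidates $x_1,x_2,\dots$ from your $A_2$ one at a time, maintaining a running union $Y$ of their fingerprint sets. At each stage $i$, either $F_{x_i}$ enlarges $Y$ by at least $K/2$ -- in which case $x_i$ is appended to $S$, and this can happen fewer than $K/2$ times because $Y\subseteq\mcY$ and $\#\mcY < K^2/4$ -- or else $F_{x_i}\setminus Y$ has fewer than $K/2$ elements, in which case $x_i$ is \emph{chosen at that moment} to be the target $x$, and the small residual $F_{x_i}\setminus Y$ is covered by singletons. The crucial twist is that the candidates in $A_2$ serve simultaneously as potential targets and as covering sets, so the greedy process never has to cover a large residual with arbitrary sets. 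Your outline would need to rediscover this adaptive choice of the target; a post-hoc greedy cover of a pre-averaged $x^*$ does not follow from the overlap estimate alone.
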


\noindent
The lemma follows from a more general theorem about union-free sets, see Jukna~\cite[Th 8.13]{juk:b:extremcombinatseconded}.
In appendix~\ref{sec:proof_simple_inverse} we give a simple proof of the lemma.

\subsection*{Fingerprints from condensers and conductors}

We bypass the issue from  Lemma~\ref{lem:simple_inverse} by allowing a limited amount of collisions. 
By using a second fingerprint and a more complex inverse function $g$, each input can still be recovered with high probability. 
The required first fingerprints are obtained from {condensers} and {conductors}, which have been studied in the theory of pseudo-randomness.  
We introduce the specific type of conductor that we use, after which we present an overview of the proof of Theorem~\ref{th:main_invertible}. 

Condensers have been introduced in various studies of extractors~\cite{raz-rei:c:extcon, rareva:c:extractor,ats-uma-zuc:j:expanders}, and
conductors  were explicitly introduced by Capalbo et al.~\cite{cap-rei-vad-wig:c:conductors}, under the name \emph{simple conductors}.  
The following closely related variant is tailored to our purposes. 
  Let $P$ be the probability measure of a random variable~$Y$ with finite range~$\mcY$. 
The  {\em $\gamma$-excess} of~$Y$ is 
   \[
     \sum_{y \in \mcY} \;\max \{0, P(y) - \gamma\} \;.
     \]

\noindent
For a set $S$, let $U_S$ be the random variable that is uniformly distributed in~$S$.

\begin{definition}\label{def:conductor}
  A probabilistic function $\ff \colon \mcX \rightarrow \mcY$ is a {\em \condenser{K}{K'}}
      if for every set $S \subseteq \mcX$  of size $\# S = K$, the $(1/K')$-excess of $\ff(U_S)$ is at most~$\varepsilon$.
      $\ff$ is a {\em $(K,\varepsilon)$-conductor} if it is a \condenser{K'}{K'} for all $K' \le K$.
\end{definition}

\noindent
Equivalently, $\ff$ is a $(K,\varepsilon)$-conductor if for every set $S$ of size at most $K$, 
the $(1/\#S)$-excess is at most~$\varepsilon$.
The equivalence with standard definitions in the literature is discussed in appendix~\ref{s:standard_def}.
The following lemma shows that every invertible function is a conductor. 
It is used in the proof of  Proposition~\ref{prop:lowerbounds}.

\begin{lemma}\label{lem:compressor_to_conductor}
  Every $(K,\varepsilon)$-invertible function is a $(K, \varepsilon)$-conductor.
\end{lemma}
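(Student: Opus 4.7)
The plan is to use the inverse $g_S$ from the invertibility hypothesis to partition $\mcY$ and then control the excess by a direct bookkeeping argument. Fix any $S \subseteq \mcX$ with $K' = \#S \le K$, and write $p_z(y) = \Pr[\ff(z) = y]$ for $z \in \mcX$, so that $P_S(y) := \Pr[\ff(U_S) = y] = \tfrac{1}{K'} \sum_{z \in S} p_z(y)$. Let $g_S$ be the inverse guaranteed by $(K,\varepsilon)$-invertibility, and define the disjoint preimages $A_x = \{y \in \mcY : g_S(y) = x\}$ for $x \in S$. The invertibility property becomes $\sum_{y \in A_x} p_x(y) = \Pr[\ff(x) \in A_x] \ge 1 - \varepsilon$ for every $x \in S$.

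The next step is to rewrite the $(1/K')$-excess in a form amenable to a lower bound. Using $\max\{0,a-b\} = a - \min\{a,b\}$ and $\sum_y P_S(y) = 1$, one gets
\[
\sum_{y \in \mcY} \max\!\left\{0,\, P_S(y) - \tfrac{1}{K'}\right\} \;=\; 1 - \sum_{y \in \mcY} \min\!\left\{P_S(y),\, \tfrac{1}{K'}\right\}.
\]
So it suffices to show $\sum_y \min\{P_S(y), 1/K'\} \ge 1 - \varepsilon$.

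To bound this sum from below, I restrict to the disjoint union $\bigcup_{x \in S} A_x$. On $A_x$, the crude bound $P_S(y) \ge p_x(y)/K'$ together with $p_x(y) \le 1$ gives $\min\{P_S(y), 1/K'\} \ge p_x(y)/K'$. Summing over $y \in A_x$ and then over $x \in S$ yields
\[
\sum_{y \in \mcY} \min\!\left\{P_S(y), \tfrac{1}{K'}\right\} \;\ge\; \sum_{x \in S} \frac{1}{K'} \sum_{y \in A_x} p_x(y) \;\ge\; \sum_{x \in S} \frac{1-\varepsilon}{K'} \;=\; 1-\varepsilon,
\]
which gives the desired bound on the excess and shows $\ff$ is a \condenser{K'}{K'} for every $K' \le K$, hence a $(K,\varepsilon)$-conductor.

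There is no real obstacle here, only a small bookkeeping trap: one must avoid trying to use $P_S(y) \le 1/K'$ (which is false in general) and instead argue via the complementary identity and the piecewise lower bound $p_x(y)/K'$ on $A_x$. The disjointness of the $A_x$ is what lets the single summation telescope into the required $1 - \varepsilon$.
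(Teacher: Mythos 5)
Your proof is correct. It uses the same key object as the paper — the preimages $A_x = g_S^{-1}(x)$, which are disjoint because $g_S$ is a (partial) function — but it bounds the excess by a genuinely different route. The paper first pushes $\ff(U_S)$ forward through $g_S$, argues that the pushed-forward variable $g_S(\ff(U_S))$ has small $(1/\#S)$-excess (by subtracting off the error mass, which yields a semimeasure with zero excess), and then invokes the general fact that applying a deterministic map can only increase the excess. You stay on the codomain side the whole time: you rewrite the excess via the complementary identity $\sum_y \max\{0, P_S(y) - 1/K'\} = 1 - \sum_y \min\{P_S(y), 1/K'\}$ and then lower-bound the latter sum directly with the pointwise estimate $\min\{P_S(y), 1/K'\} \ge p_x(y)/K'$ valid on each $A_x$. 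Your version is more elementary and self-contained: it does not require observing (or proving) the monotonicity of excess under deterministic transformations, and it avoids introducing the semimeasure on $\mcX$. The paper's version is shorter if one already has the monotonicity fact in hand (which the paper does, since it reuses it in Remark~\ref{r:extractor_condenser} territory). Both proofs are correct and both localize the invertibility hypothesis to the same partition of $\mcY$; the difference is whether you process it by pushing forward or by restricting in place.
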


\begin{proof}
  Let $\ff \colon \mcX \rightarrow \mcY$ be $(K,\varepsilon)$-invertible.
  For any $S \in \mcX^{\le K}$ and $x \in S$, we have 
  $g_S(\ff(x)) = x$ with probability at least~$1-\varepsilon$. 
  Thus $g_S( \ff(U_S)) \not= U_S$ with probability at most~$\varepsilon$. 

  Let $P$ be the measure of $g_S( \ff(U_S))$. 
  We show that this variable has $(1/\#S)$-excess at most~$\varepsilon$. 
  If we subtract from $P(x)$ the probability that $g_S( \ff(x)) \not= x$, the resulting 
  semimeasure has $(1/\#S)$-excess~$0$.
  Hence, $g_S(\ff(U_S))$ has $(1/\#S)$-excess at most~$\varepsilon$.

  Note that after applying a deterministic function, the excess can only increase. Since $y \mapsto g_S(y)$ 
  is deterministic, also $\ff(U_S)$ has excess at most~$\varepsilon$.
\end{proof}

\noindent
Remarkably, the relation in the other direction also holds true: from conductors we obtain invertible functions. This is the content of Theorem~\ref{th:main_invertible}. We present a sketch of the proof.
The main observation is that conductors produce fingerprints with some relaxed properties, but which are still good enough for decompression.
The first relaxation is that we do  not need to insist  on a fingerprint that produces zero collisions as in Lemma~\ref{lem:simple_inverse}. It is sufficient if, for each list of suspects  $S$, a string has a random fingerprint that causes at most $t$ collisions with other elements in~$S$, for $t$~polynomial or quasi-polynomial in~$n$.
Such a fingerprint is called a \emph{light} fingerprint.
This works, because to the light fingerprint, we simply append a second hash code  based on prime numbers.   By Lemma~\ref{lem:simple_hashing}, its size is~$O(\log (tn/\varepsilon))$, and this constitutes the overhead. Then the string can be isolated from the $t$ collisions, and be reconstructed.  The second relaxation is that we do not require that all the elements of $S$ have light fingerprints, but only at least half of them.  More precisely, we say that a string $x$ is \emph{deficient} if with probability $\varepsilon/2$, the fingerprint
$p= \ff(x)$ is not light, where $\ff$ is the conductor that we use as a hash function that produces fingerprints. Then, as we explain in the next paragraph, it suffices if at most half of the elements in $S$ are deficient, and a conductor $\ff$ indeed has this property.

Let $x$ be the string that we want to compress and let $p = \ff(x)$ be a random fingerprint of $x$ produced by the conductor $\ff$. We now sketch the decompression procedure which reconstructs $x$ from $p$ and the prime-based hash code.
Assume $S$ is  a valid set of suspects, i.e., $x \in S$.  Initially, we collect the first $t$ strings in $S$ that have $p$ as a fingerprint. If $x$ is non-deficient, $x$ will be among the collected strings with high probability.
But $x$ may very well be deficient, and, in this case, $x$ needs to be reanalyzed at a later stage. So, we form a smaller set 
$\fr(S)$ with all the deficient strings, and the decompressor is applied recursively to $\fr(S)$,  the new list of suspects. 
Since each recursive call decreases the set of suspects by half, the recursion has at most $\log (2\#S)$ levels, and we collect at most $t\log (2\#S)$ strings at all the levels of recursion. Now, the second prime-based hash code will allow us to distinguish $x$ among the collected strings and reconstruct it. The details are presented in the next section, where we prove Theorem~\ref{th:main_invertible}.

\bigskip

\noindent
We next present the condensers and conductors that we use in the proofs.
In our construction of $(2^k,\varepsilon)$-invertible functions, the difference between the output size and $k$ is proportional to the length of the second hash code, which is in turn proportional to the number of random bits used in the evaluation of the conductor.
In the pseudo-randomness literature, this is called the \emph{seed length}, and 
a $\Omega(\log \tfrac{n}{\varepsilon})$ lower bound has been proven,
see~\cite{nis-zuc:j:extract,rad-tas:j:extractors} and Proposition~\ref{prop:degreeLowerboundCondensers} below.
By a standard construction using the probabilistic method, there exist condensers that can be evaluated with $O(\log \tfrac{n}{\varepsilon})$ random bits.
They are computable, but unfortunately not computable in polynomial time.

\begin{proposition}\label{prop:existsPrefixExtractor}
  For all $\varepsilon$, $n$ and $k$, there exists a $(2^k,\varepsilon)$-conductor 
  $\ff \colon \{0,1\}^n \rightarrow \{0,1\}^{k+2}$ that uses  $\log \tfrac{4n}{\varepsilon}$ bits of randomness. 
\end{proposition}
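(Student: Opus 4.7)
The plan is to use the probabilistic method. I would sample a random function $\ff \colon \zo^n \to \zo^{k+2}$ by choosing $\ff_\rho(x)$ uniformly and independently in $\zo^{k+2}$ for each $x \in \zo^n$ and each seed $\rho \in \zo^r$, where $r = \lceil \log(4n/\varepsilon) \rceil$. The goal is to show that with positive probability the resulting $\ff$ satisfies the $(2^k,\varepsilon)$-conductor property for every set $S \subseteq \zo^n$ of size at most $2^k$ simultaneously.

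The first step is to rewrite the $(1/K')$-excess in a dual form that is easier to control. Fix $S$ with $|S| = K' \le 2^k$, let $N = K'\cdot 2^r$, and for $Y_0 \subseteq \zo^{k+2}$ set $T(Y_0) = \#\{(x,\rho) \in S \times \zo^r : \ff_\rho(x) \in Y_0\}$. Then $\Pr[\ff(U_S) \in Y_0] = T(Y_0)/N$, and the optimal $Y_0$ in the definition of the excess is $\{y : \Pr[\ff(U_S)=y] > 1/K'\}$, so the $(1/K')$-excess equals $\max_{Y_0}\bigl(T(Y_0)/N - |Y_0|/K'\bigr)$. Hence the conductor condition for $S$ fails iff there exists a non-empty $Y_0$ with $T(Y_0) > (|Y_0| + \varepsilon K')\cdot 2^r$.

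Under the random choice of $\ff$, $T(Y_0) \sim \mathrm{Bin}(N,|Y_0|/2^{k+2})$ with mean $\mu = N|Y_0|/2^{k+2}$. Since $K' \le 2^k$, the target $(|Y_0|+\varepsilon K')2^r$ is at least $4\mu$, and the multiplicative Chernoff bound $\Pr[X \geq t] \le (e\mu/t)^t$ for $t \geq e\mu$ gives
\[
\Pr\bigl[T(Y_0) > (|Y_0| + \varepsilon K')\, 2^r\bigr] \;\le\; \Bigl(\tfrac{eK'}{2^{k+2}}\Bigr)^{(|Y_0| + \varepsilon K')\, 2^r}.
\]
Union-bounding over the $\binom{2^{k+2}}{L} \le (e\cdot 2^{k+2}/L)^L$ choices of $Y_0$ of size $L$ and using $eK'/2^{k+2} \le e/4 < 1$, the sum $\sum_{L\ge 1}\binom{2^{k+2}}{L}(eK'/2^{k+2})^{L\cdot 2^r}$ is a geometric series bounded by $O(1)$ provided $2^r$ is somewhat larger than $k+2$, which holds since $2^r = 4n/\varepsilon \ge 4k$ (we may assume $k \le n$). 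This leaves $\Pr[\text{excess}(S) > \varepsilon] \le O(1)\cdot (eK'/2^{k+2})^{\varepsilon K' 2^r}$.

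Finally, I would union-bound over $S$: there are at most $(e\cdot 2^n/K')^{K'}$ sets of size $K'$, so the probability that some $S$ of size $K'$ fails is at most $\bigl[O(1)\cdot (e\cdot 2^n/K')\cdot (eK'/2^{k+2})^{\varepsilon 2^r}\bigr]^{K'}$. With $2^r = 4n/\varepsilon$, the exponent $\varepsilon\cdot 2^r\cdot \log(2^{k+2}/(eK'))$ easily dominates $\log(e\cdot 2^n/K')$ for all $K' \le 2^k$ (the worst case is $K' = 2^k$, where the two terms compare $\log(4/e)\cdot 4n$ against $n-k+O(1)$), so this inner bracket is bounded by $1/4$, giving a failure probability $\le 4^{-K'}$ for each size. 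Summing over $K' = 1,\dots,2^k$ yields a total probability strictly less than $1$, so some $\ff$ is a $(2^k,\varepsilon)$-conductor. The main obstacle is not conceptual but computational: verifying that the constants line up so that the seed length $\log(4n/\varepsilon)$ (and not, say, $\log(n/\varepsilon^2)$ obtained from Hoeffding-type concentration) suffices to close all union bounds.
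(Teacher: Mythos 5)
Your proposal is correct and uses the same basic strategy as the paper's proof: choose the hash values $f(x,\rho)$ i.i.d.\ uniform in the range, control a deviation event with a multiplicative Chernoff/Hoeffding tail, and union-bound over sets. The genuine difference is at the middle layer. You work with the variational form $\max_{Y_0}\bigl(\Pr[\ff(U_S)\in Y_0]-|Y_0|/K'\bigr)$ of the $(1/K')$-excess and union-bound over $Y_0$ of \emph{every} size $L$, then argue the resulting series in $L$ is geometric. The paper instead proves a small reduction lemma: it is enough to guarantee $\Pr[\ff(U_X)\in Y]\le\varepsilon$ for sets $Y$ of the \emph{single} size $\lceil\varepsilon|X|\rceil$, because the heavy set $Y^\ast=\{y:\Pr[\ff(U_X)=y]>1/|X|\}$ then has at most $\varepsilon|X|$ elements, and the excess is bounded by $\Pr[\ff(U_X)\in Y^\ast]\le\varepsilon$. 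That lemma collapses your sum over $L$ to a single term, which both shortens the constant-chasing and sidesteps the convergence condition $2^{k+2}(e/4)^{2^r}<1$ that your geometric series needs and that requires a side check when $k$ is tiny. Your sketch also glosses over the regime $\varepsilon 2^k\le 1$, which the paper dispatches explicitly by outputting a uniformly random element of a $2^{\lceil\log(1/\varepsilon)\rceil}$-element subset. These are bookkeeping gaps of exactly the kind you anticipated at the end; the approach itself is sound.
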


\begin{remark*}
  Because conductors are finite objects, they can be computably constructed by exhaustive search.
  On input $\varepsilon, k$ and $n$, this search can be done in space exponential in $n$.
\end{remark*}

\noindent
The result of  Theorem~\ref{th:main_invertible} considers a polynomial time construction. 
A family of condensers or conductors is {\em explicit} if there exists a probabilistic algorithm that on every input, consisting of $\varepsilon$, $k$ and an $n$-bit $x$, outputs  $\ff^{\smallerbrackets{n}}_{\varepsilon, k}(x)$ in polynomial time. 

We obtain explicit families of conductors from known constructions of extractors.
Based on a result of Raz, Reingold, and Vadhan~\cite{rareva:j:extractor}, Capalbo et al. \cite{cap-rei-vad-wig:c:conductors} show that there exists an explicit conductor that uses $r= O(\log k \cdot \log^2 \tfrac{n}{\varepsilon})$ random bits.  In section~\ref{s:conductor_construction}, we obtain from explicit extractors given by Guruswami, Umans, and Vadhan~\cite{guv:j:extractor}, an improved explicit conductor that uses less randomness. 

\newcommand{\propExplicitConductors}{
  There exists an explicit family $\ff^{\smallerbrackets{n}}_{\varepsilon, k}$ 
  of $(2^k,\varepsilon)$-conductors $\ff^{\smallerbrackets{n}}_{\varepsilon, k} \colon \{0,1\}^n \rightarrow \{0,1\}^k$ that 
  uses $O(\log k \cdot \log \tfrac{n}{\varepsilon})$ random bits, for all $\varepsilon$, $k$ and $n$. 
  }

\begin{proposition}\label{prop:conductors1}
  \propExplicitConductors
\end{proposition}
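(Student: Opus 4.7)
The strategy is to derive the family $\ff^{\smallerbrackets{n}}_{\varepsilon,k}$ from the explicit lossless condensers of Guruswami, Umans and Vadhan, combining them across $O(\log k)$ entropy scales so that the conductor condition holds simultaneously for every $K' \le 2^k$. First I would invoke GUV's theorem to obtain, for any choice of parameters $n, k, \varepsilon$, an explicit lossless condenser $C_{n,k,\varepsilon} \colon \{0,1\}^n \times \{0,1\}^d \to \{0,1\}^{k+a}$ with seed length $d = O(\log(n/\varepsilon))$ and entropy slack $a = O(\log(1/\varepsilon))$, in the sense that for every source $X$ on $\{0,1\}^n$ with $H_\infty(X) \ge k$, the pair $(U_d, C_{n,k,\varepsilon}(X, U_d))$ is $\varepsilon$-close to having min-entropy $d+k$.

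Because a lossless condenser tuned to threshold $k$ gives no direct guarantee on sources of smaller min-entropy, the next step is to cover the whole range of subset sizes $K' \le 2^k$. I would fix a geometric sequence of thresholds $k_i = \lceil k/2^{\,i}\rceil$ for $i=0,1,\dots, t$ with $t = \lceil \log k\rceil$, and pick for each $i$ a GUV condenser $C_i$ with error $\varepsilon/(t+1)$. Each scale contributes a fresh independent seed of length $O(\log(nt/\varepsilon)) = O(\log(n/\varepsilon))$, so the combined seed length is $O(\log k \cdot \log(n/\varepsilon))$ as required. For any target set $S$ with $\# S = K'$ and $k' := \lceil \log K' \rceil$, there is some scale $i^\star$ with $k_{i^\star+1} < k' \le k_{i^\star}$; the $i^\star$-th condenser applied to $U_S$ then outputs a distribution that is $\varepsilon/(t+1)$-close to having min-entropy at least $k' - O(\log(1/\varepsilon))$. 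Translating the closeness to min-entropy into a bound on $(1/K')$-excess (via the standard equivalence $H_\infty(Y) \geq \log K'$ iff the $(1/K')$-excess of $Y$ is $0$, up to a small statistical-distance correction), and taking a union bound over scales, yields the conductor guarantee with total excess below $\varepsilon$.

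The final step is to truncate the composed output to exactly $k$ bits. I would feed the output of the chain into a last explicit GUV extractor with output length $k$, chosen with entropy threshold slightly below $k_0 = k$ (absorbing the $O(\log(1/\varepsilon))$ slack accumulated in the chain), and using once again $O(\log(n/\varepsilon))$ additional random bits. Since an extractor composed with a conductor is again a conductor (excess can only shrink under applying a deterministic post-processing of the randomness, as used in the proof of Lemma~\ref{lem:compressor_to_conductor}), the final function $\ff^{\smallerbrackets{n}}_{\varepsilon,k}$ satisfies Definition~\ref{def:conductor} with the promised output length and seed length.

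The main obstacle I anticipate is bookkeeping across scales: each condenser introduces an additive slack of $O(\log(1/\varepsilon))$ in both output length and effective min-entropy, and these have to be absorbed without blowing up either the final output length (which must be exactly $k$) or the seed length (which must stay at $O(\log k \cdot \log(n/\varepsilon))$). Setting the per-scale error to $\varepsilon/(t+1)$ and choosing the final extractor's parameters to account for the cumulative $O(\log k \cdot \log(1/\varepsilon))$ slack is the delicate point; everything else is essentially a union bound and a direct invocation of GUV's explicit constructions.
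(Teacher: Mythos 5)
Your high-level strategy (stitch together $O(\log k)$ GUV objects tuned to a geometric scale of min-entropy thresholds, so the overall seed length is $O(\log k \cdot \log\tfrac{n}{\varepsilon})$) is the same skeleton as the paper's. But the way you glue the scales together has two genuine gaps.

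First, the crucial step in your last paragraph rests on a false claim: you assert that ``excess can only \emph{shrink} under applying a deterministic post-processing,'' citing the proof of Lemma~\ref{lem:compressor_to_conductor} — but that proof says exactly the opposite (``after applying a deterministic function, the excess can only \emph{increase}''). Min-entropy, and hence excess, cannot be improved by post-processing with a deterministic function. So the final extractor you append cannot restore the conductor guarantee: for a source of size $K' \ll 2^k$, after feeding the $\approx 2k$-bit concatenation through an extractor set to threshold near $k$, you have no control on the $(1/K')$-excess of the $k$-bit output. Without this step your output is roughly $2k + O(\log k \log\tfrac{n}{\varepsilon})$ bits, which does not meet the $k$-bit output requirement of the proposition.

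Second, even the per-scale analysis is off. If $k_{i^\star+1} < k' \le k_{i^\star}$, a lossless condenser tuned to threshold $k_{i^\star} \ge k'$ has no guarantee on a source of min-entropy $k'$; and the next-scale condenser, tuned to $k_{i^\star+1}$, only promises output min-entropy about $k_{i^\star+1} \approx k'/2$, which falls far short of what the $(1/K')$-excess bound needs. Concatenating all scales does not obviously repair this, because the outputs of the different condensers applied to the same source are correlated.

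What the paper does instead is structurally different in a way that matters. It works with the \emph{extractor} form of GUV (Theorem~\ref{th:GUV_partial}, a \condenser{2^{2\kappa}}{2^\kappa}), and builds the conductor \emph{inductively}: it starts from a small conductor and repeatedly prepends one GUV condenser, using the Composition Lemma (Lemma~\ref{lem:compositionConductors}) which states that a \condenser{K'}{\#\mcY_1} concatenated with a conductor on $\mcY_2 \supseteq [K']$ yields a conductor on $\mcY_1 \times \mcY_2$. The nontrivial content is entirely in that lemma: it splits the analysis into $K \le K'$ (where the inner conductor already works and appending the condenser output can only help) and $K > K'$ (which requires the careful list-counting argument in appendix~\ref{s:conductor_construction}). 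Each composition step grows the output length by a constant factor, so $O(\log k)$ steps suffice and the output length is exactly $k$ — no truncation is ever needed. If you want to pursue the lossless-condenser route, you would still need some analogue of that composition lemma to turn the multi-scale family into a single conductor with output length $k$; the naive concatenate-and-extract plan does not work.
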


\noindent
We also present a closely related, but technically less cumbersome approach, which does not use conductors, but instead obtains invertible functions by combining a few condensers given by the following result.

\newcommand{\lemmaGUVpartial}{
  For all $n,\kappa$ and $\varepsilon$, 
  there is an explicit family of \condenser{2^{2\kappa}}{2^\kappa}s $\ff \colon \{0,1\}^n \rightarrow \{0,1\}^\kappa$ 
  whose evaluation requires $O(\log \tfrac{n}{\varepsilon})$ bits of randomness.
  }

\begin{theorem}[\cite{guv:j:extractor}, Theorem 1.5 or 4.17]\label{th:GUV_partial}
  \lemmaGUVpartial
\end{theorem}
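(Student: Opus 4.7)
The plan is to recognize the statement as essentially a restatement of the GUV extractor theorem in the condenser language of this paper, together with the standard reduction from min-entropy sources to flat sources. First, I would unpack the notation: an $\ff \colon \{0,1\}^n \rightarrow \{0,1\}^\kappa$ that is a \condenser{2^{2\kappa}}{2^\kappa} is required to produce, on every flat source $U_S$ with $\#S = 2^{2\kappa}$, an output distribution whose $(1/2^\kappa)$-excess is at most~$\varepsilon$. Since the output alphabet has exactly $2^\kappa$ elements, the quantity $\sum_{y} \max\{0, P(y) - 1/2^\kappa\}$ is precisely the statistical distance of $\ff(U_S)$ from the uniform distribution on $\{0,1\}^\kappa$. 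Thus being a \condenser{2^{2\kappa}}{2^\kappa} into $\{0,1\}^\kappa$ is equivalent, for flat sources, to $\ff$ being a seeded $(2\kappa,\varepsilon)$-extractor with output length~$\kappa$.

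Next, I would invoke the standard convex combination argument that any function which is a seeded extractor for all flat sources of min-entropy $2\kappa$ is automatically a seeded extractor for every distribution of min-entropy at least $2\kappa$, so the two formulations coincide up to the choice between flat and general sources (this is precisely the kind of translation that should be collected in appendix~\ref{s:standard_def}). Once that equivalence is in hand, I would apply Theorem~1.5 of~\cite{guv:j:extractor} with parameters $k = 2\kappa$ and $\alpha = 1/2$: that theorem gives an explicit $(k,\varepsilon)$-extractor with output length $(1-\alpha) k = \kappa$ and seed length
\[
d \;=\; (1 + 1/\alpha)\bigl(\log n + \log(1/\varepsilon)\bigr) + O(1) \;=\; 3\log(n/\varepsilon) + O(1) \;=\; O(\log(n/\varepsilon)),
\]
which matches the required bound exactly. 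Explicitness (polynomial-time evaluability of $\ff^{\smallerbrackets{n}}_{\varepsilon,\kappa}$) is inherited from the explicitness clause of GUV.

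The main obstacle, as far as there is one, is purely notational: making sure that GUV's definition of a seeded extractor (statistical distance from uniform for min-entropy sources) lines up unambiguously with the $(1/K')$-excess formulation of Definition~\ref{def:conductor} restricted to flat sources of the right size, and that ``explicit'' is being used in the same polynomial-time sense in both works. Both points are routine but worth stating carefully; once they are settled, the theorem is nothing more than an application of GUV's construction at the parameter choice $\alpha = 1/2$.
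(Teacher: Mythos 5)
Your proposal is correct and matches the paper's own justification almost exactly: the paper's remark following the theorem likewise reduces the claim to the extractor/condenser equivalence collected in appendix~B (Remark~\ref{remark:extractor_vs_condenser} and Lemma~\ref{lem:equivalence_condensers}) and then invokes GUV Theorem~1.5 (or 4.17) with $\alpha = 1/2$. The only detail you gloss over is that GUV guarantees output length \emph{at least} $(1-\alpha)k$ rather than exactly $\kappa$, so one must truncate by merging output values --- a step the paper's remark spells out explicitly but which is routine and does not affect the argument.
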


\begin{remark*}\label{r:extractor_condenser}
  The cited theorem in~\cite{guv:j:extractor} gives an extractor 
  with output length  $m \ge k/2$.  We can set $m = \ceil(k/2)$, 
  because in extractors, we can reduce the output size by merging equal numbers of outputs, 
  and this does not increase the statistical distance to the uniform measure. 
 Finally, we obtain a condenser, since a function is a $(2^{2\kappa},\varepsilon)$-extractor if and only if it is a~\condenser{2^{2\kappa}}{2^\kappa}, see Remark~\ref{remark:extractor_vs_condenser} in appendix~\ref{s:standard_def}.
\end{remark*}

\subsection*{Proof of Proposition~\ref{prop:single_source_nonpoly} and Theorem~\ref{th:main_invertible}}\label{s:proof_single}

The following corresponds to the weak invertibility requirement in the above proof sketch. 

\begin{definition*}\label{def:invertible_with_rejections}
  $\ff \colon \mcX \rightarrow \mcY$ is online {\em $(K,t,\varepsilon)$-list invertible} with $T$ {\em rejections} 
  if there exist deterministic monotone functions $\fr \colon \mcX^{\le K} \rightarrow \mcX^{\le T}$ 
  and $\fg \colon \mcX^{\le K} \times \mcY \rightarrow \mcX^{\le t}$ such that for all $S \in \mcX^{\le K}$ 
  and all $x \in S \setminus \fr(S)$:
  \[
    \Pr \left[ x\in \fg(S,\ff(x)) \right] \;\ge\; 1-\varepsilon.
  \]
\end{definition*}

\noindent
The interpretation is that $G$ is the ``pruning" function, which on inputs $S$ 
and $y$,  reduces the list of $K$ suspects to a smaller list of at most $t$ elements. 
For $y = F(x)$, this short list should contain $x$, provided that $x \in S$ and $x$ does not belong to the set $\fr(S)$ of at most $T$ ``deficient elements."  $\fr$ is the  ``reanalyze'' function that determines the elements in $S$ that can be incorrectly lost in the pruning, and need to be reanalyzed again.  

\begin{lemma}
  \label{lem:condenser_as_filter}
  Let $a$ be an integer.
  Every \condenser{K}{\tfrac{1}{a}K} that can be evaluated with $r$ random bits is 
  online $(K,a2^r,2\varepsilon)$-list invertible with $K/2$ rejections.
\end{lemma}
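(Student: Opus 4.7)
The plan is to construct explicit monotone $\fr, \fg$ directly from~$\ff$. I will set $T^S_y := \{z \in S : \Pr[\ff(z) = y] > 0\}$ and define $\fg(S, y)$ as the first $\min(|T^S_y|, a 2^r)$ elements of $T^S_y$ in the order inherited from the list $S$, with
\[
\fr(S) \;:=\; \{x \in S : \Pr_\ff[x \notin \fg(S, \ff(x))] > 2\varepsilon\}
\]
listed in the same order. The key monotonicity observation is that when $S'$ extends $S$ as a list, any new element appears after every $x \in S$, so $x$'s position within $T^{S'}_{\ff(x)}$ equals its position within $T^S_{\ff(x)}$; hence $x \in \fg(S, \ff(x))$ if and only if $x \in \fg(S', \ff(x))$. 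This simultaneously shows that $\fg(S, y)$ is a prefix of $\fg(S', y)$ and that $\fr(S) = \fr(S') \cap S$, giving monotonicity of both functions.

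The main step will be to bound, for $|S| = K$, the quantity $\Sigma := \sum_{x \in S} \Pr[x \notin \fg(S, \ff(x))]$ by $K\varepsilon$. Writing $\mu_S(y) := \Pr[\ff(U_S) = y]$, so that the condenser condition reads $\sum_y \max\{0, \mu_S(y) - a/K\} \le \varepsilon$, I will call $y$ \emph{seed-heavy} when $|T^S_y| > a 2^r$; only such $y$ can cause failures. Because $\ff$ uses $r$ random bits, any nonzero $\Pr[\ff(z) = y]$ is at least $2^{-r}$, so the first $a 2^r$ elements of $T^S_y$ contribute at least $a 2^r \cdot 2^{-r} = a$ to $\sum_{z \in T^S_y} \Pr[\ff(z) = y] = K \mu_S(y)$. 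The failing (late) elements in $T^S_y$ therefore contribute at most $K \mu_S(y) - a = K(\mu_S(y) - a/K)$, and seed-heaviness also forces $\mu_S(y) > a/K$. Summing over seed-heavy $y$,
\[
\Sigma \;\le\; K \sum_{y\text{ seed-heavy}} \bigl(\mu_S(y) - a/K\bigr) \;\le\; K \sum_y \max\{0, \mu_S(y) - a/K\} \;\le\; K\varepsilon.
\]
Since each $x \in \fr(S)$ contributes strictly more than $2\varepsilon$ to $\Sigma$, a Markov bound yields $|\fr(S)| \cdot 2\varepsilon < \Sigma \le K\varepsilon$, hence $|\fr(S)| < K/2$.

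For $|S| < K$, I will extend $S$ to any list $\bar S$ of length $K$; the rank-preservation observation from the first paragraph gives $\fr(S) = \fr(\bar S) \cap S$, so $|\fr(S)| \le |\fr(\bar S)| \le K/2$. By the definition of $\fr$, every $x \in S \setminus \fr(S)$ satisfies $\Pr[x \in \fg(S, \ff(x))] \ge 1 - 2\varepsilon$, completing the proof. The main obstacle will be selecting $\fg$ and $\fr$ that are simultaneously monotone and plug cleanly into the condenser's excess; the crucial trick is that skimming off the first $a 2^r$ elements of each seed-heavy $T^S_y$ extracts exactly $a$ units of probability, which precisely matches the offset needed to pair each $\mu_S(y)$ with the threshold $a/K$ in the \condenser{K}{\tfrac{1}{a}K} condition.
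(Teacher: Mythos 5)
Your proof is correct and follows essentially the same route as the paper's: the same definitions of $\fg$ and $\fr$, the same monotonicity argument based on positions being preserved under list extension, and the same Markov-style bound showing $\#\fr(S) \le K/2$. Where the paper packages the failure-probability bound in an abstract observation about sublists of the length-$K2^r$ list $L$ of all values $\ff_\rho(x)$, you carry out the equivalent direct computation using the lower bound $2^{-r}$ on nonzero output probabilities; this makes the role of the condenser's excess explicit, but it is the same underlying argument.
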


\noindent
In this section we use the lemma with~$a=1$.

\begin{proof} 
  For $S \in \mcX^{\le K}$ and $y \in \mcY$, let $\fg(S,y)$ be the list containing the first $a2^r$ appearances of elements $x$ in $S$ for which 
  $\ff(x) = y$ has positive probability  (or all appearances, if there are less than $a2^r$ of them).

  Note that $\fg$ is  monotone in $S$.
  We assume that $S$ has size exactly $K$, since the invertibility conditions only become easier to prove for smaller~$S$. 
  For a list $L$, let $U_L$ denote the random variable obtained by selecting a random element in~$L$.
  Observation: 
  \begin{quote}
    \textit{If $L'$ is the sublist of some $L \in \mcY^B$ obtained by retaining the first $b$ appearances of each element $y \in L$, 
  then $\Pr[U_L \not\in L']$ is at most the $(b/B)$-excess of $U_L$.
  }
  \end{quote}
  
  \noindent
  We \st{first} show that
  \[
    \Pr \left[ U_S \not\in \fg(S,\ff(U_S)) \right] \;\le\; \varepsilon.
  \]
  Let $L$ be the list of length $2^r K$ of the values $\ff_\rho(x)$ for all $x \in S$ and all assignments $\rho$ of~$r$ random bits in $\ff(x)$.
  More precisely, for $S = [x_1, \dots, x_K]$, concatenate $[\ff_{0^r}(x_i), \dots, \ff_{1^r}(x_i)]$ for increasing values of~$i$.
  Note that $U_L$ has the same distribution as $\ff(U_S)$, which 
  by definition of condenser, has $\tfrac{a}{K}$-excess at most~$\varepsilon$.
  The procedure for $\fg$ defines a sublist $L'$ of $L$, containing all first $a2^r$ appearances of some~$y \in \mcY$.
  By choice of $L$ and $L'$, the events $U_L \not\in L'$ and $U_S \not\in \fg(S,\ff(U_S))$ have precisely the same probability.
  The inequality follows by applying the above observation, with $b = a 2^r, B = K2^r$.

  The function $\fr$ is obtained by selecting the elements $x \in S$ for which $\Pr \left[x \not\in \fg(S,\ff(x)) \right] \,>\, 2\varepsilon$.
  $\fr$ is monotone, because after adding an element to $S$, these probabilities for its other elements do not change.
  The proof finishes by showing that $\fr(S)$ contains at most $K/2$ elements, i.e.,  
  $\Pr[U_S \in \fr(S)] \le 1/2$. This follows from
  \[
   2\varepsilon \cdot \Pr [U_S \in \fr(S)] \,\;\le\;\, \Pr [ U_S \not\in \fg(S,\ff(U_S))] \,\;\le\;\, \varepsilon.
   \qedhere
  \]
\end{proof}

\begin{remark*}
  If the function $\ff \colon \{0,1\}^n \rightarrow \mcY$ can be evaluated in space polynomial in $n$, 
  then also the functions $\fr$ and $\fg$ constructed in the proof above can be evaluated in space polynomial in~$n$.
\end{remark*}

\begin{corollary}\label{cor:pruning}
  If $\ff$ is a $(M,\varepsilon)$-conductor that can be evaluated with $r$ random bits, 
  then $\ff$ is online $(M,2^r \log (2M), 2 \varepsilon)$-list invertible (with $0$ rejections). 
\end{corollary}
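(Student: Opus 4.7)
\textbf{Proof plan for Corollary~\ref{cor:pruning}.}
The plan is to iterate Lemma~\ref{lem:condenser_as_filter} with $a=1$. Since $\ff$ is a $(M,\varepsilon)$-conductor, it is in particular a \condenser{K}{K} for every $K \le M$, so the lemma supplies monotone deterministic functions $\fr$ and $\fg^{(0)}$ satisfying $|\fr(S)| \le |S|/2$, $|\fg^{(0)}(S,y)| \le 2^r$, and
\[
\Pr\!\left[x \in \fg^{(0)}(S,\ff(x))\right] \;\ge\; 1 - 2\varepsilon \quad \text{for all } x \in S \setminus \fr(S).
\]
I would then eliminate the rejection set $\fr(S)$ by feeding it back into the same procedure.

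Concretely, on input $S$ with $|S| \le M$, define the descending sequence $S_0 = S$ and $S_{i+1} = \fr(S_i)$. Since $|S_i| \le M/2^i$, the list $S_\ell$ is empty once $\ell \ge \lfloor \log M\rfloor + 1$, so at most $\ell \le \log(2M)$ levels are non-empty. I would then take as the new pruning function the concatenation
\[
\fg(S,y) \;=\; \fg^{(0)}(S_0, y) \,\cdot\, \fg^{(0)}(S_1, y) \,\cdots\, \fg^{(0)}(S_{\ell-1}, y),
\]
whose total length is bounded by $\ell \cdot 2^r \le 2^r \log(2M)$, and take the new rejection function to be identically empty.

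For correctness, fix any $x \in S$. Because $x \in S_0$ but $x \notin S_\ell = \emptyset$, there exists a unique smallest index $i$ with $x \in S_i$ and $x \notin \fr(S_i) = S_{i+1}$. The lemma applied at level $i$ then guarantees $\Pr[x \in \fg^{(0)}(S_i,\ff(x))] \ge 1-2\varepsilon$, and hence $\Pr[x \in \fg(S,\ff(x))] \ge 1-2\varepsilon$, which is the desired list-invertibility guarantee. Crucially, the error bound is not multiplied by the recursion depth, because each $x$ is ``charged'' to a single level. Monotonicity is inherited from $\fr$ and $\fg^{(0)}$: if the list $S'$ extends $S$, then by iterated monotonicity of $\fr$ each $S'_i$ extends $S_i$, so by monotonicity of $\fg^{(0)}$ each factor of $\fg(S',y)$ extends the corresponding factor of $\fg(S,y)$.

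The only potentially delicate point is verifying the iteration actually terminates in $\log(2M)$ steps with zero rejections, but this is immediate from the halving property of $\fr$ and the integrality of list sizes; no substantive obstacle arises.
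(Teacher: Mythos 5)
Your proof is correct and takes essentially the same approach as the paper: the paper defines the concatenated pruning function recursively as $\fg'(S,y) = \fg(S,y)\cdot\fg'(\fr(S),y)$, which unrolls to exactly your $\fg^{(0)}(S_0,y)\cdots\fg^{(0)}(S_{\ell-1},y)$. The key observation that the error does not accumulate over recursion depth---because each $x$ succeeds at the unique level $i$ with $x\in S_i\setminus\fr(S_i)$---is phrased in the paper as a short induction on $\#S$, matching your single-level charging argument.
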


\begin{proof}
  Let $\fg$ and $\fr$ be the functions defined above for~$a=1$. Note that
  we have $\# \fr(S) \le \tfrac{1}{2}\# S$ for all $S$, since the algorithms do not depend on $K$, 
  and the assumption holds for all $K \le M$, by definition of conductors.

  For $S \in \mcX^{\le M}$ and $y \in \mcY$, the inverse $\fg'$ that satisfies the conditions is defined recursively. 
  If $S$ is empty, then $G'(S,y)$ is empty. Otherwise, $G'(S,y)$ is the concatenation of
  $\fg(S,y)$ and $\fg'(\fr(S),y)$.  

  Each recursive call adds at most $2^r$ elements, and at most $\log (2M)$ recursive calls are made.  
  The probability that $x \not\in \fg'(S,\ff(x))$ is at most $2\varepsilon$. Indeed, 
  if $x \not\in \fr(S)$,  this follows from Lemma~\ref{lem:condenser_as_filter},  
  and otherwise, this follows by an inductive argument on the size of~$S$.
\end{proof}

\begin{remark}\label{remark:inverse_polynomial_space}
  If the function $\ff \colon \{0,1\}^n \rightarrow \mcY$ can be evaluated in space polynomial in $n$, 
  then also the functions $\fr'$ and $\fg'$ constructed in the proof above can be evaluated in space polynomial in~$n$.
  Indeed, these functions operate on sets of $n$-bit strings and can be of exponential size. But, 
  to evaluate the functions $\fr(S)$ and $\fg(S,y)$, one does not need to store the full set $S$,
  but only needs to iterate over elements in~$S$. 
  This implies that the space needed to evaluate $\fg'(S,y)$ 
  equals the recursion depth times the space needed to iterate over $\fr(\cdot)$ and $\fg(\cdot, y)$, 
  and this is polynomial in~$n$. 
\end{remark}

\noindent
The next result follows almost directly from the definitions.

\begin{lemma}\label{lem:compose_listInvertible_and_invertible}
  If $\ff$ is $(K,t,\varepsilon)$-list invertible with $T$ rejections, $\ff'$ is $(t\mathop{+}T,\varepsilon')$-invertible, and both functions have the same domain,
  then $x \mapsto (\ff(x), \ff'(x))$ is $(K,\varepsilon \mathop{+}\varepsilon')$-invertible.
\end{lemma}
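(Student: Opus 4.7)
The plan is to build the inverse of the combined function $x \mapsto (\ff(x), \ff'(x))$ in two stages: first use the list invertibility of $\ff$ to shrink the set of suspects from size $K$ to size at most $t+T$, and then use the ordinary invertibility of $\ff'$ to pinpoint $x$ inside this smaller list. Let $\fr$ and $\fg$ witness the list invertibility of $\ff$, and let $g'$ witness the invertibility of $\ff'$. On input a list $S \in \mcX^{\le K}$ and a pair $(y, y')$, the inverse is defined by first forming the list $L := \fg(S, y) \cup \fr(S)$, which has size at most $t + T$, and then returning $g'(L, y')$.

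For correctness, fix $x \in S$ and evaluate the combined inverse on $(\ff(x), \ff'(x))$, where $\ff$ and $\ff'$ are evaluated with independent randomness. By a union bound, the failure probability is bounded by
\[
\Pr[x \notin L] \;+\; \Pr\bigl[x \in L \text{ and } g'(L, \ff'(x)) \neq x\bigr].
\]
For the first term: if $x \in \fr(S)$ then $x \in L$ deterministically; otherwise $x \in S \setminus \fr(S)$ and list invertibility gives $\Pr[x \in \fg(S, \ff(x))] \ge 1 - \varepsilon$, so in either case the first term is at most $\varepsilon$. For the second term, condition on the realization of $L$, which depends only on the randomness of $\ff$; since $|L| \le t+T$ and we are on the event $x \in L$, the $(t+T,\varepsilon')$-invertibility of $\ff'$ yields $\Pr[g'(L, \ff'(x)) \neq x \mid L] \le \varepsilon'$, and integrating over $L$ gives a bound of $\varepsilon'$ for the second term.

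The only subtle point, and essentially the only place care is needed, is that the randomness used by $\ff$ and $\ff'$ in the composed function must be drawn from independent sources, so that when we condition on the random list $L$ (a function of the $\ff$-randomness alone) we may still invoke the invertibility guarantee of $\ff'$ for the conditionally independent random variable $\ff'(x)$. Beyond this bookkeeping I do not anticipate any technical obstacle: the argument is a clean union bound combined with a conditioning step, and it gives exactly the claimed error $\varepsilon + \varepsilon'$.
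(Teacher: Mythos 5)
Your proof is correct, and it is exactly the intended argument; the paper gives no proof, merely remarking that the lemma ``follows almost directly from the definitions.'' Your construction $L := \fg(S,y) \cup \fr(S)$ followed by $g'(L, y')$, together with the case split on $x \in \fr(S)$ versus $x \in S \setminus \fr(S)$ and the conditioning on $L$ using independence of the $\ff$- and $\ff'$-randomness, is precisely the natural one-line proof, spelled out in detail. One small remark: the lemma is silently used in the online setting downstream (Corollary~\ref{cor:pruning} produces an online list-inverse and Theorem~\ref{th:main_invertible} claims an online result), and your construction does preserve monotonicity as long as $L$ is formed as a list that grows when $S$ does (e.g.\ appending each new element of $S$ to $L$ the moment it first enters either $\fg(S,y)$ or $\fr(S)$), rather than by concatenating $\fg(S,y)$ with $\fr(S)$; worth noting, though the lemma as stated does not require it.
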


\noindent
The next result follows directly from this lemma and Corollary~\ref{cor:pruning}. 

\begin{corollary}\label{cor:conductor_invertible}
  If $\ff \colon \{0,1\}^n \mapsto \mcY$ is a $(K,\varepsilon)$-conductor that can be evaluated using at most $r$ random bits, 
  then the function \[x \;\;\longmapsto\;\; (\ff(x), \fh_{\varepsilon,s}(x)) \quad \textnormal{ with } \quad s = 2^r \log (2K)\] is $(K,3\varepsilon)$-invertible, where $ \fh_{\varepsilon,s}$ is the prime-based hash function from Lemma~\ref{lem:prime_hashing}.
\end{corollary}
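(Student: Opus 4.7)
The plan is to assemble the two complementary pieces from immediately preceding results. The conductor $\ff$, by Corollary~\ref{cor:pruning}, already gives a weak form of invertibility: on any list of at most $K$ suspects, it narrows down the possibilities to a short candidate list of size $s = 2^r \log(2K)$, with error at most $2\varepsilon$ and with zero deficient elements. Separately, the prime-based hash $\fh_{\varepsilon,s}$ from Lemma~\ref{lem:prime_hashing} is a $(s,\varepsilon)$-invertible function, meaning it can uniquely identify an element inside any list of size at most $s$. Appending the hash to $\ff(x)$ should exactly close the gap between the short candidate list produced by $\ff$ and unique decoding.

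Concretely, I would proceed as follows. First, apply Corollary~\ref{cor:pruning} to $\ff$ with the bound $M = K$, giving that $\ff$ is online $(K, s, 2\varepsilon)$-list invertible with $T = 0$ rejections, where $s = 2^r \log(2K)$. Second, invoke Lemma~\ref{lem:prime_hashing} with parameter $K$ set to $s + T = s$, to conclude that $\fh_{\varepsilon,s}$ is $(s,\varepsilon)$-invertible. Finally, plug both into Lemma~\ref{lem:compose_listInvertible_and_invertible} with the list-invertible component $\ff$ (with parameters $K$, $t = s$, $T = 0$, error $2\varepsilon$) and the invertible component $\ff' = \fh_{\varepsilon,s}$ (on a list of size $t + T = s$, error $\varepsilon$). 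Since the two functions share the domain $\{0,1\}^n$, the hypothesis of the lemma is satisfied, and the conclusion is exactly that $x \mapsto (\ff(x), \fh_{\varepsilon,s}(x))$ is $(K, 2\varepsilon + \varepsilon) = (K, 3\varepsilon)$-invertible, as required.

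There is essentially no obstacle here: this corollary is a pure composition statement, and the parameter bookkeeping (checking that the short-list size $s$ in the output of the pruning step matches the input size required by the prime hash, and that the error budgets $2\varepsilon + \varepsilon = 3\varepsilon$ add up correctly) is the only thing to verify. The only mild subtlety worth spelling out is that the combined inverse, given input $S$ and a pair $(y, (p,j))$, first calls the pruning inverse $\fg'(S, y)$ of Corollary~\ref{cor:pruning} to get a sublist of size at most $s$, and then calls the prime-hash inverse on this sublist with fingerprint $(p,j)$; the union bound over the two failure events yields the $3\varepsilon$ error.
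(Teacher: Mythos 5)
Your proof is correct and matches the paper's intent exactly: the paper states that the corollary ``follows directly'' from Lemma~\ref{lem:compose_listInvertible_and_invertible} and Corollary~\ref{cor:pruning}, which is precisely the composition you carry out, with the parameter bookkeeping $t+T = s$ and $2\varepsilon + \varepsilon = 3\varepsilon$ checked correctly.
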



\begin{proof}[Proof of  Theorem~\ref{th:main_invertible}.]
  We apply Corollary~\ref{cor:conductor_invertible} to the conductor given in Proposition~\ref{prop:conductors1}. 
  The result is a family of online $(2^k,\varepsilon)$-invertible functions where 
  $\ff \colon \{0,1\}^n \rightarrow \{0,1\}^m$ with $m-k \le O(\log k \cdot \log \tfrac{n}{\varepsilon})$.
  By Remark~\ref{remark:inverse_polynomial_space}, the inverse can be evaluated in space polynomial in~$n$.
  Finally, by Remark~\ref{remark:inverse_lengths}, we obtain an inverse function over the set of strings of all~lengths.
\end{proof}

\begin{proof}[Proof of  Proposition~\ref{prop:single_source_nonpoly}.]
  We apply  Corollary~\ref{cor:conductor_invertible} to the conductor given in  Proposition~\ref{prop:existsPrefixExtractor}. 
  The result is a family of online $(K,\varepsilon)$-invertible functions where $\# \mcY/K$ is polynomial in~$n/\varepsilon$.
  Proposition~\ref{prop:single_source_nonpoly} follows by converting invertible functions to compressors, 
  as explained in the proof of Theorem ~\ref{th:main_single_source} assuming Theorem~\ref{th:main_invertible} presented at the beginning of section~\ref{s:techniques}.
\end{proof}

\subsection*{Proof of Theorem~\ref{th:main_invertible} based on condensers}

The proof of  Theorem~\ref{th:main_invertible} given in the previous section requires the explicit conductor  from Proposition~\ref{prop:conductors1}, whose proof  is rather technical.  In this subsection, we  give an easier and more direct  proof that is based only on condensers, bypassing conductors. The two  proofs are similar, 
still, they each have their own advantage:
if explicit conductors are discovered that use less randomness, then this leads to improved compressors through the first proof. 
On the other hand, if explicit condensers are found that use logarithmic randomness and extract more of their minentropy, then the approach of this section 
leads to improved compressors.
The following is a consequence of Lemma~\ref{lem:condenser_as_filter}. 

\begin{lemma}\label{lem:pruning_large}
  Let $a$ and $b$ be integers. If $\ff$ is a \condenser{M}{\tfrac{1}{a}M} that can be evaluated with $r$ random bits, 
  then $\ff$ is online $(bM,ab2^r \log (2b),2\varepsilon)$-list invertible with $M/2$ rejections.
\end{lemma}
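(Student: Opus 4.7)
The plan is to mimic the recursive pruning of Corollary~\ref{cor:pruning}, but with two modifications suited to the weaker condenser-only hypothesis: the recursion halves the suspect list and \emph{stops} once the list size drops below~$M$ (leaving the remainder as the $M/2$ rejections), and the parameter passed to Lemma~\ref{lem:condenser_as_filter} is rescaled at each level to reflect the current list size.

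The first step will be to bootstrap the $\condenser{M}{M/a}$ hypothesis into a family of condensers that handle larger sets. Given $S$ of size $jM$ with $j \le b$, I would partition $S$ into $j$ blocks $S_1,\dots,S_j$ of size~$M$, write $\ff(U_S) = \tfrac{1}{j}\sum_i \ff(U_{S_i})$, and apply subadditivity of $\max(0,\cdot)$ together with the $(a/M)$-excess bound on each block to conclude that the $(a/M)$-excess of $\ff(U_S)$ is still at most~$\varepsilon$. This shows that $\ff$ is a $\condenser{jM}{M/a}$, equivalently a $\condenser{K}{\tfrac{1}{aj}K}$ for $K = jM$. (Non-multiples of $M$ are handled by rounding up; the constants absorb.) Consequently Lemma~\ref{lem:condenser_as_filter} applies to sets of size $K \le bM$ with parameter $a' = aK/M \le ab$, yielding a prune list of size $\le ab\cdot 2^r$ and a rejection set of size $\le K/2$.

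The second step is the recursion itself. Starting from $S$ of size $\le bM$, I apply the reasoning above to obtain a prune list $\fg_0(S,y)$ of size $\le ab\cdot 2^r$ and a rejection set $\fr_0(S)$ of size $\le |S|/2$, and then recurse on $\fr_0(S)$ with parameter $a' \approx ab/2$. After at most $\log(2b)$ halvings the suspect list shrinks to size $\le M$, at which point a final application of Lemma~\ref{lem:condenser_as_filter} with parameter $a$ produces a rejection set of size $\le M/2$, which becomes the declared $\fr(S)$. Unioning the prune lists across all levels gives a total prune size bounded by $\log(2b) \cdot ab\cdot 2^r$, matching the stated overhead.

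The main obstacle is ensuring that the error remains $2\varepsilon$ across $\log(2b)$ recursion levels rather than compounding. This is handled exactly as in Corollary~\ref{cor:pruning}: for any $x \in S \setminus \fr(S)$ there is a unique level at which $x$ first fails to be rejected, and at that level Lemma~\ref{lem:condenser_as_filter} guarantees $\Pr[x \in \fg_i(\cdot,\ff(x))] \ge 1-2\varepsilon$; since our final $\fg(S,y)$ contains the prune lists from every level, the same inequality transfers. Monotonicity of $\fg$ and $\fr$ in $S$ is inherited from the monotonicity of the level-wise functions built in the proof of Lemma~\ref{lem:condenser_as_filter} and is preserved by the recursion, since extending $S$ extends each $\fr_i$ and $\fg_i$ in step.
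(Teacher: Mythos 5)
Your argument is correct, but it follows a genuinely different decomposition than the paper's. The paper keeps the condenser hypothesis fixed (only for sets of size $M$) and at each recursion level partitions the current list $S$ into $b$ sublists of length at most $M$, applying the filter functions $\fg,\fr$ from Lemma~\ref{lem:condenser_as_filter} with the original parameter $a$ to each block and taking unions. You instead upgrade the hypothesis itself: writing $\ff(U_S)$ as the mixture $\tfrac{1}{j}\sum_i \ff(U_{S_i})$ over $M$-sized blocks and using convexity of $t\mapsto\max(0,t)$, you observe the $(a/M)$-excess survives mixing, so a \condenser{M}{M/a} is automatically a \condenser{jM}{M/a} for all $j$. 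This lets you invoke Lemma~\ref{lem:condenser_as_filter} once per level on the whole current list. Both routes give per-level prune size at most $ab2^r$ and halve the rejection set, so both terminate after at most $\log(2b)$ levels with total $ab2^r\log(2b)$. Your route isolates a reusable structural fact about excess under mixture (the same phenomenon underlying Lemma~\ref{lem:equivalence_condensers}); the paper's route is slightly more self-contained since it never re-derives a condenser property, only re-applies the filter functions blockwise.

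One detail to tighten before this could replace the paper's proof: you say the parameter passed to Lemma~\ref{lem:condenser_as_filter} is ``rescaled at each level to reflect the current list size,'' and earlier you write $a'=aK/M$ with $K=\#S$. If the threshold $a'2^r$ inside $\fg$ were allowed to depend on the actual current $\#S$, then appending elements to $S$ in the online setting would change that threshold, and ``first $a'2^r$ appearances'' would no longer define a monotone $\fg$ (and hence not a monotone $\fr$). You should pin the level-$i$ parameter to a value independent of the current list, e.g.\ $a'_i=\max\bigl(a,\lceil ab/2^i\rceil\bigr)$, which your phrase ``$a'\approx ab/2$'' suggests you intend; with that made explicit, the monotonicity you inherit from Lemma~\ref{lem:condenser_as_filter} does indeed propagate through the recursion exactly as you claim.
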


\begin{proof}
  The pruning and reanalyze functions $\fg'$ and $\fr'$ are defined recursively in~$b$.
  Given $S \in \mcX^{\le bM}$, partition $S$ in $b$ sublists $S'$ of length at most~$M$, 
  and apply the functions $\fg$ and $\fr$ from  Lemma~\ref{lem:condenser_as_filter}.
  Collect the  to-be-reanalyzed strings from all $\fr(S')$'s in a set $S_\textnormal{rec}$, and the selected 
  strings from all sets $\fg(S',y)$ in a set $S_\textnormal{sel}$.
  If $\# S_\textnormal{rec} \le M/2$ we are done, and we let $\fr'(S) = S_\textnormal{rec}$ and $\fg'(S,y) = S_\textnormal{sel}$.
  Otherwise, we continue recursively by letting $\fg'(S,y)$ be the concatenation of 
  $S_{\textnormal{sel}}$ and $\fg'(S_\textnormal{rec},y)$, and letting $\fr'(S) = \fr'(S_\textnormal{rec})$. 

  We show that the algorithm works correctly. By construction, $\fr'$ 
  outputs the required number of elements.
  In each recursive call, $\# S_\textnormal{rec}$ is at least halved.
  Thus, the number of recursive calls is at most $\log (2b)$. 
  In each call, the size of $S_{\textnormal{sel}}$ is at most $b \cdot (a2^r)$, and hence this number 
  of elements is appended to $\fg'$. 
  Thus, this function outputs a list of size at most $ab2^r \log (2b)$.
\end{proof}

\noindent
Using Lemma~\ref{lem:compose_listInvertible_and_invertible}, this implies the following for $a = b = K$ and $M = K^2$. 

\begin{lemma}\label{lem:compose_condenser_invertible}
  If $\ff$ is a \condenser{K^2}{K} that can be evaluated with $r$ random bits, $\ff'$ is $(K^2 2^r \log (4K), \varepsilon')$-invertible, 
  and both functions have the same domain, 
  then $x \mapsto (\ff(x), \ff'(x))$ is $(K^3,2\varepsilon+\varepsilon')$-invertible.
\end{lemma}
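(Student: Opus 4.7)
The plan is to chain Lemma~\ref{lem:pruning_large} with Lemma~\ref{lem:compose_listInvertible_and_invertible} by choosing parameters so that the conclusion of one matches the hypothesis of the other. First, I would instantiate Lemma~\ref{lem:pruning_large} with $M = K^2$, $a = K$, and $b = K$. With these choices the premise ``\condenser{M}{\tfrac{1}{a}M}'' reads exactly ``\condenser{K^2}{K}'', which matches our assumption on $\ff$. The lemma then gives that $\ff$ is online $(bM,\, ab\, 2^r \log(2b),\, 2\varepsilon) = (K^3,\, K^2 2^r \log(2K),\, 2\varepsilon)$-list invertible with $M/2 = K^2/2$ rejections.

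Next, I would verify that the invertibility hypothesis on $\ff'$ is strong enough to feed into Lemma~\ref{lem:compose_listInvertible_and_invertible}, which requires $(t+T,\varepsilon')$-invertibility of $\ff'$ for $t = K^2 2^r \log(2K)$ and $T = K^2/2$. A small arithmetic bound
\begin{equation*}
t + T \;=\; K^2 2^r \log(2K) + K^2/2 \;\le\; K^2 2^r \bigl(\log(2K) + 1\bigr) \;=\; K^2 2^r \log(4K),
\end{equation*}
using $2^r \ge 1$, shows that the hypothesized $(K^2 2^r \log(4K),\varepsilon')$-invertibility of $\ff'$ implies the needed $(t+T,\varepsilon')$-invertibility, since invertibility is monotone in the size-of-suspects parameter.

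Finally, I would invoke Lemma~\ref{lem:compose_listInvertible_and_invertible} with the list-invertible function $\ff$ obtained in the first step (with parameters $K' = K^3$, list size $t$, error $2\varepsilon$, rejections $T$) and the invertible function $\ff'$ to conclude that $x \mapsto (\ff(x),\ff'(x))$ is $(K^3,\, 2\varepsilon + \varepsilon')$-invertible, which is exactly the claim.

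There is essentially no technical obstacle: the proof is pure parameter chasing, and the only mildly nontrivial point is the inequality $K^2 2^r \log(2K) + K^2/2 \le K^2 2^r \log(4K)$ that absorbs the rejection count $T$ into the bump from $\log(2K)$ to $\log(4K)$ in the hypothesis on $\ff'$.
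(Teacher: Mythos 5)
Your proof is correct and follows exactly the route the paper intends: the paper states the lemma as a one-line consequence of Lemma~\ref{lem:pruning_large} (with $a=b=K$, $M=K^2$) and Lemma~\ref{lem:compose_listInvertible_and_invertible}, and you have filled in precisely that instantiation, including the small but necessary arithmetic step showing $t+T=K^2 2^r\log(2K)+K^2/2\le K^2 2^r\log(4K)$ so that the hypothesis on $\ff'$ is strong enough.
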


%

\noindent
We apply this lemma to the following condensers given by Guruswami, Umans, and Vadhan~\cite{guv:j:extractor}. 

\begin{theorem*}[Restated]
  \lemmaGUVpartial
\end{theorem*}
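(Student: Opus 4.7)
The statement is a restatement of the Guruswami--Umans--Vadhan theorem, so the plan is to follow their construction and analysis rather than devise something new. The high-level approach has three stages: (i) exhibit an explicit $\ff$ built from univariate polynomial evaluation over a suitable finite field; (ii) bound the condenser failure probability by reducing it to an algebraic list-decoding question; and (iii) adjust output length and parameters to match the $2^{2\kappa}\to 2^\kappa$ condensing ratio with output in $\{0,1\}^\kappa$.

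For (i), fix a finite field $\mathbb{F}_q$ with $q=\poly(n/\varepsilon)$, so that a uniformly random seed $y \in \mathbb{F}_q$ consumes the required $O(\log(n/\varepsilon))$ random bits. Parse the input $x \in \{0,1\}^n$ as the coefficient vector of a polynomial $f_x$ over $\mathbb{F}_q$ of appropriate degree. The Parvaresh--Vardy-style output is the tuple
\[
 \bigl(f_x(y),\; f_x(y)^h,\; f_x(y)^{h^2},\; \dots,\; f_x(y)^{h^{s-1}}\bigr),
\]
with powers computed in an extension of $\mathbb{F}_q$ (equivalently, modulo a fixed irreducible polynomial), and with $h$ and $s$ tuned to balance output length against the strength of the list-decoding bound. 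Using the merging trick from Remark~\ref{r:extractor_condenser}, the concatenated output can then be trimmed to exactly $\kappa$ bits without hurting the condenser property.

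For (ii), suppose $\ff$ fails the desired condensing property: some $S \subseteq \{0,1\}^n$ of size $2^{2\kappa}$ has image distribution $\ff(U_S)$ placing excess mass above $\varepsilon$ on some output set of size at most $2^\kappa$. An averaging step produces a function $\mathbb{F}_q \to \mathbb{F}_q^s$ that agrees, on many seeds, with the Parvaresh--Vardy evaluation vector of $f_x$ for many $x \in S$. The algebraic heart of the GUV argument then shows that all such $f_x$ must satisfy a common low-degree bivariate equation over $\mathbb{F}_q$, forcing $|S|$ to be far smaller than~$2^{2\kappa}$ and yielding the contradiction.

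The main obstacle, and the place where I would lean most heavily on the GUV paper, is this list-decoding bound for the Parvaresh--Vardy code: this is where the specific structure of raising $f_x$ to successive powers pays off, and where the small list size is extracted from the existence of a nontrivial bivariate annihilator. Once that bound is in hand, the remaining work is parameter bookkeeping --- choosing the degree of $f_x$ large enough to accommodate all $n$ bits of input, $q$ small enough to keep the seed length $O(\log(n/\varepsilon))$, and $s$ so the output length is at least $\kappa$ --- followed by the output-trimming step from Remark~\ref{r:extractor_condenser} to land in $\{0,1\}^\kappa$ while preserving the $2^{2\kappa}\to 2^\kappa$ condenser ratio.
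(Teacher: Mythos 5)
The paper does not prove this statement from scratch: it cites the Guruswami--Umans--Vadhan \emph{extractor} theorem (Theorem 1.5 or 4.17 of~\cite{guv:j:extractor}) as a black box, and the entire ``proof'' consists of the translation in Remark~\ref{r:extractor_condenser} --- take the GUV extractor with $k = 2\kappa$, merge output values to shrink the output to exactly $\kappa$ bits (legitimate because the output is $\varepsilon$-close to uniform), and observe that a $(2^{2\kappa},\varepsilon)$-extractor with $\kappa$-bit output is by definition a \condenser{2^{2\kappa}}{2^{\kappa}}. Your proposal instead attempts to reconstruct GUV's construction directly via Parvaresh--Vardy codes, which is a genuinely different (and much heavier) route.

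There is a real gap in your route, though. The Parvaresh--Vardy construction gives GUV's \emph{lossless condenser}: for a $k$-source the output has min-entropy close to $k$, but its length is roughly $(1+1/\alpha)k$, strictly longer than its entropy, so the output is \emph{not} close to uniform. The merging trick of Remark~\ref{r:extractor_condenser} only preserves the relevant property for extractors, i.e.\ when the longer output is already $\varepsilon$-close to uniform on its full length; trimming a lossless condenser's output down to $\kappa$ bits does not produce a distribution that is $\varepsilon$-close to uniform on $\{0,1\}^\kappa$, which is exactly what a \condenser{2^{2\kappa}}{2^{\kappa}} must be. To land on the stated object you would still need the composition step that GUV actually perform (apply a further extractor to the short, high-min-entropy condensed output --- e.g.\ a Leftover-Hash-style or recursive extraction step) before the merging trick becomes applicable. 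As written, your plan goes from the PV tuple straight to trimming, skipping this extraction step, so it would not yield the claimed condenser.
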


\begin{proof}[Proof of Theorem~\ref{th:main_invertible}.]
  By Remark~\ref{remark:inverse_lengths}, it suffices to present the construction for inputs of a fixed length~$n$. 
  If $k \ge n$, then let $\ff$ be the identity function. 
  Otherwise, we apply Lemma~\ref{lem:compose_condenser_invertible} recursively for decreasing~$k$. 
  In other words, we concatenate $O(\log k)$ condensers with geometrically decreasing values of~$\kappa$. 

  We present the details. 
  Let $r$ denote the $O(\log \tfrac{n}{\varepsilon})$ bound on the randomness in the condenser of Theorem~\ref{th:GUV_partial}.  
  Let $b$ be 
  such that $2^b \ge 2^r \log (4K)$ for all $K \le 2^n$, but still satisfies $b \le O(\log \tfrac{n}{\varepsilon})$. 

  If $k < 100 \cdot b$, we obtain an $(2^k,\varepsilon)$-invertible function using prime hashing from Lemma~\ref{lem:prime_hashing}.
  Otherwise, the invertible function is obtained by concatenating the condenser for $\kappa = \ceil(k/3)$ 
  and the recursive application of the construction for~$k \leftarrow 2\kappa + b$. 
  
  At most $d \le O(\log k)$ recursive calls are made, because if $k \ge 100 \cdot b$, 
  the recursive value for $k$ is close to $\tfrac{2}{3}k$, say at most $\tfrac{5}{6} k$.
  We verify that if the recursion depth is $d$, then
  we obtain a $(2^k, 2d\varepsilon)$-invertible function with output length $k + bd$.
  Indeed, in each concatenation, the error given by Lemma~\ref{lem:compose_condenser_invertible} 
  increases with $2\varepsilon$, and the output length increases with $\kappa$, while the value of $k$ increases by $\kappa - b$.
  This proves the first part of the theorem.
  The moreover part follows from similar observations as in Remark~\ref{remark:inverse_polynomial_space}.
\end{proof}

\section{Optimal distributed compression: Proof of Theorem~\ref{th:main_distributed_compression}}\label{s:proof_distributed}

Observe that if $\ff_1 \colon \mcX \rightarrow \mcY_1$ and $\ff_2 \colon \mcX \rightarrow \mcY_2$ 
are $(K_1, \varepsilon_1)$ and $(K_2, \varepsilon_2)$-invertible, 
then $x \mapsto (\ff_1(x), \ff_2(x))$ is $(K_1K_2, \varepsilon_1\mathop{+}\varepsilon_2)$-invertible. 
We can not apply this observation to the setting of distributed compression. Indeed, if we choose $\mcX = \mcX_1 \times \mcX_2$, 
and consider a function $\ff_1$ acting on the left and $\ff_2$ on the right coordinate,
then $\ff_1$ can not be invertible on $\mcX$, because it can not distinguish tuples with the same right  coordinate. 
The following proposition obtains invertible functions suitable for distributed compression with 2 sources.

\newcommand{\propInvertibleTwoFunctions}[1]{ 
 If $\ff_1\colon \mcX_1 \rightarrow \mcY_1$ is $(K_1, \varepsilon)$-invertible 
 and $\ff_2 \colon \mcX_2 \rightarrow \mcY_2$ is $(K_2, \varepsilon)$-invertible, then 
 $(x_1, x_2) \mapsto (\ff_1(x_1), \ff_2(x_2))$ is $(3\varepsilon)$-invertible#1
 in sets $S \subseteq \mcX_1 \times \mcX_2$ 
 for which
\begin{gather*}
 \# S \;\le\; K_1K_2, \\
 \# \left\{ (z_1, z_2) \in S : z_2 = x_2  \right\} \;\le\; K_1 \qquad \forall x_2 \in \mcX_2,\\
 \# \left\{ (z_1, z_2) \in S : z_1 = x_1  \right\} \;\le\; K_2 \qquad \forall x_1 \in \mcX_1.
 \end{gather*}
}

\begin{proposition}\label{prop:invertible_two_functions}
  \propInvertibleTwoFunctions{\footnote{
    We say that $\ff$ is invertible in a set $S$ if there exists a function $h$ such that 
    $\Pr[h(\ff(x)) = x] \ge 1-\varepsilon$ for all~$x \in S$.
    }
 }
\end{proposition}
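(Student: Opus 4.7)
The plan is to construct the inverse $h \colon \mcY_1 \times \mcY_2 \to \mcX_1 \times \mcX_2$ for the product map via a two-round sequential decoding that exploits the row and column degree bounds on $S$, using the given inverses $h_1, h_2$ of $f_1, f_2$.

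First I would normalize $h_1, h_2$ so that $h_i(L, y) \in L$ always (by outputting a canonical element of $L$ as a fallback). For the received $(y_1, y_2)$, define two auxiliary maps: $\phi(z_2) := h_1(R_{z_2}, y_1)$ for $z_2 \in \pi_2(S)$, where $R_{z_2} := \{z_1 : (z_1, z_2) \in S\}$ is the ``row'' at $z_2$; and $\psi(z_1) := h_2(C_{z_1}, y_2)$ for $z_1 \in \pi_1(S)$, where $C_{z_1} := \{z_2 : (z_1, z_2) \in S\}$ is the ``column'' at $z_1$. The hypotheses give $|R_{z_2}| \le K_1$ and $|C_{z_1}| \le K_2$, so both maps are well defined inputs to the single-source inverses, and by construction $\phi(z_2) \in R_{z_2}$ and $\psi(z_1) \in C_{z_1}$. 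Since $x_1 \in R_{x_2}$ and $x_2 \in C_{x_1}$, the invertibility of $f_1, f_2$ yields $\Pr[\phi(x_2) = x_1] \ge 1 - \varepsilon$ and $\Pr[\psi(x_1) = x_2] \ge 1 - \varepsilon$.

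Second, define $h(y_1, y_2)$ to return the first pair $(z_1, z_2) \in S$ (under a fixed enumeration of $S$) satisfying the mutual consistency $\phi(z_2) = z_1$ and $\psi(z_1) = z_2$, or a default otherwise. By the union bound, on an event of probability at least $1 - 2\varepsilon$ the target $(x_1, x_2)$ itself satisfies these conditions. Moreover, on this event, any competing solution that agrees with $(x_1, x_2)$ in exactly one coordinate is automatically ruled out: if $z_1 = x_1$ and $z_2 \neq x_2$ then $\psi(x_1) = z_2 \neq x_2$, contradicting $\psi(x_1) = x_2$, and symmetrically for the other side.

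The main obstacle is to bound the probability that some competitor $(z_1, z_2) \in S$ with both $z_1 \neq x_1$ and $z_2 \neq x_2$ satisfies both consistency relations and precedes $(x_1, x_2)$ in the enumeration. Because $\phi$ and $\psi$ depend on the disjoint randomness of $f_1$ and $f_2$, the two relations are independent for such a pair. When $x_1 \in R_{z_2}$, the relation $\phi(z_2) = z_1 \neq x_1$ is literally an $h_1$-failure at $x_1$, bounded by $\varepsilon$; when $x_2 \in C_{z_1}$, the relation $\psi(z_1) = z_2 \neq x_2$ is an $h_2$-failure at $x_2$; and the cases where $x_1 \notin R_{z_2}$ and $x_2 \notin C_{z_1}$ are controlled by the row/column degree bounds $|R_{z_2}| \le K_1$ and $|C_{z_1}| \le K_2$, which limit how many such doubly-off pairs can jointly be consistent with the independently drawn $y_1, y_2$. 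A careful case decomposition along these lines should charge the existence of any doubly-off competitor to a single additional $\varepsilon$-slot, yielding a total error of at most $3\varepsilon$, as required.
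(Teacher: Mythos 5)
Your construction---decode each coordinate against the row/column slice of $S$ and accept a pair only when the two decodings mutually agree---handles the single-off competitors correctly, but the ``doubly-off'' case is a genuine gap, not merely a case-analysis left for later: the scheme is actually wrong there, and no amount of careful bookkeeping will save it.

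Consider $S = \{(1,1),\,(2,2)\}$ with $K_1 = K_2 = 2$, so $\#S = 2 \le K_1K_2$ and every row and column slice is a singleton, and take $x = (1,1)$. After your normalization $h_i(L,y) \in L$, a singleton argument forces the output: $\phi(1) = h_1(\{1\},y_1) = 1$, $\phi(2) = h_1(\{2\},y_1) = 2$, $\psi(1) = 1$, $\psi(2) = 2$, \emph{irrespective of the randomness}. Hence both $(1,1)$ and $(2,2)$ pass your mutual-consistency test with probability one, and the candidate set contains both pairs deterministically. Any fixed enumeration must place one of them first; for the other value of $x$, your $h$ is then wrong with probability one, which exceeds $3\varepsilon$ for every $\varepsilon < 1/3$. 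The heuristic that the case $x_1 \notin R_{z_2}$ and $x_2 \notin C_{z_1}$ is ``controlled by the degree bounds'' is where this goes wrong: when a row slice does not contain $x_1$, the invertibility guarantee gives no information at all about what $\phi$ returns there, so such slices can ``accept'' arbitrary competing pairs at no randomness cost, and no degree constraint prevents this.

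What the paper does differently is insert a second agreement check on the \emph{first} coordinate whose purpose is purely combinatorial: to prune the candidate set down to size at most $K_2$ so that $g_2$ can be invoked. It pre-partitions $S$ into at most $K_2$ blocks $R_1,\dots,R_{K_2}$, each of size at most $K_1$ and containing at most one pair per first coordinate. The candidate set $T$ consists of the pairs $(z_1,z_2)\in S$ for which $g_1$ returns $z_1$ both on the row slice $S_{z_2}$ (your $\phi$) and on the projection of the block containing $(z_1,z_2)$. Because $g_1$ applied to a block's projection returns a single first coordinate, and the block contains at most one pair with that first coordinate, each block contributes at most one element to $T$; hence $\#T_2 \le K_2$ and $g_2(T_2,y_2)$ can be used to select $x_2$ among the survivors. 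Your approach has no analogue of this pruning step---the candidate set $C$ can be large and $g_2$ is never invoked on a bounded list---and the example above shows that a bare ``first consistent element'' rule cannot substitute for it.
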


\noindent
The proof is given in appendix~\ref{s:sw-easy}.
For the proof of Theorem~\ref{th:main_distributed_compression}, 
we need an online variant of this proposition for an arbitrary number of invertible functions.
Its statement requires some more definitions. 

\begin{definition}\label{d:kbound}\label{def:small_slices}
  Let $S \subseteq \mcX_1 \cdots \mcX_\ell$ and $\tK  = (K_1 \dots K_\ell)$ be an $\ell$-tuple of integers.
  $S$ has {\em $\tK$-small slices} if for all $\tx \in S$ and for all  $J \subseteq [\ell]$:
  \[
  \# \{\tz \mathop{\in} S : \tz_J \mathop{=} \tx_J \} \,\;\le\;\,  \prod_{\!\!j \in [\ell] {\setminus} J\!\!} K_j .
  \]
\end{definition}


\noindent
Let $\mcX^{<\omega}$ denote the set of finite sequences of elements in $\mcX$.

\begin{definition}\label{def:distribinvertible}
  Let $\mcS$ be a collection of subsets of~$\mcX$.
  $\ff$ is $\varepsilon$-invertible {\em on $\mcS$} if there exists a mapping $g\colon \mcX^{<\omega} \times \mcY \rightarrow \mcX$,  
  such that for all $S$, whose elements form a set in $\mcS$
 \[
   \Pr \left[g_S(\ff(x)) = x\right] \;\ge\; 1-\varepsilon\,,
 \]
 $\ff$ is {\em online} $\varepsilon$-invertible on $\mcS$ if there exists a monotone such~$g$, as in Definition~\ref{def:invertible}.
\end{definition}

\noindent
Theorem~\ref{th:main_distributed_compression} follows from the proof of the following.

\begin{theorem}\label{th:main_componentwise_invertible}
  If $\ff_j \colon \mcX_j \rightarrow \mcY_j$ is online $(\tfrac{\ell}{\varepsilon}K_j, \varepsilon)$-invertible for all $j \in [\ell]$, 
  then the function 
  \[
   (x_1, \ldots, x_\ell)\;\; \longmapsto\;\; (\ff_1(x_1), \ldots, \ff_\ell(x_\ell)) 
  \]
  is online $(8\ell\varepsilon)$-invertible on the collection of sets with $\tK$-small slices.
\end{theorem}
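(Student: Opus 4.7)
The plan is to build the monotone inverse $g$ as a recursive decoder guided by a random balanced binary tree $T$ over $[\ell]$ --- this is the ``random tree partitioning'' technique from the introduction --- with the two-variable Proposition~\ref{prop:invertible_two_functions} as the combining engine at each internal node. The random bits defining $T$ can be encoded into a prefix of one of the $\ff_j$'s hash outputs, so that $g$ remains deterministic as required by Definition~\ref{def:distribinvertible}. At each leaf the online inverse of the corresponding $\ff_j$ is applied; at each internal node the decoder combines the two inductively obtained decodings via a two-source argument modelled on Proposition~\ref{prop:invertible_two_functions}.

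The key combinatorial lemma I rely on is that the $\tK$-small slices property is inherited in the right form by each sub-problem: for any partition $[\ell] = A \sqcup B$ and any $\tx \in S$, the conditional set $\{\tz_A : \tz \in S,\ \tz_B = \tx_B\}$ has $\tK_A$-small slices with $\tK_A = (K_j)_{j \in A}$, and symmetrically for $B$. This follows immediately by restricting the defining inequality of Definition~\ref{d:kbound} to subsets $J \subseteq A$, so the structural hypothesis propagates cleanly down the tree. Monotonicity of $g$ in $S$ propagates from the leaves upward because slicing, projecting and the canonical truncations used in the two-source combining step are all monotone, and each leaf inverse is monotone by hypothesis.

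For the failure analysis the error is distributed across $T$: the $\ell$ leaves contribute at most $\ell\varepsilon$ in total from the online inverses of the $\ff_j$'s, and each of the $\ell-1$ internal nodes contributes $O(\varepsilon)$ from its two-source combining step for another $O(\ell\varepsilon)$; tuning constants then yields the claimed $8\ell\varepsilon$. The random balanced structure of $T$ is what enables this $O(\ell)$ bookkeeping and replaces the $2^\ell$ union bound over subsets $J$ that appears in the classical Slepian--Wolf proof. The main technical obstacle is the online, multi-level extension of Proposition~\ref{prop:invertible_two_functions}: one must verify that its two-source argument goes through on the non-product-shaped conditional sets above and remains monotone under the conditioning at every level of the tree. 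A plain uniformly random permutation of $[\ell]$ is not sufficient --- the diagonal example $\ell = 2$, $K_1 = 1$, $K_2 = K$, $S = \{(i,i) : i \in [K]\}$ already forces half of all orderings to fail at the first step --- which is precisely why the hierarchical balanced partition, together with the two-variable combining engine, is needed.
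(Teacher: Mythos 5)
There is a genuine gap in the error analysis that invalidates the stated bound. The combining step in Proposition~\ref{prop:invertible_two_functions} is asymmetric: its proof applies the left inverse~$g_1$ \emph{twice} (once on~$R'_\tx$ and once on~$S_{x_2}$) and the right inverse~$g_2$ once, so combining an $\varepsilon_1$-inverse on the left with an $\varepsilon_2$-inverse on the right yields error~$2\varepsilon_1 + \varepsilon_2$, not~$\varepsilon_1 + \varepsilon_2 + O(\varepsilon)$. Plugging this into your balanced binary tree with $\ell = 2^d$ leaves gives the recursion $\varepsilon(\ell) = 2\varepsilon(\ell/2) + \varepsilon(\ell/2) = 3\varepsilon(\ell/2)$, hence $\varepsilon(\ell) = 3^d\varepsilon = \ell^{\log_2 3}\varepsilon \approx \ell^{1.585}\varepsilon$, which exceeds the target $8\ell\varepsilon$ already for $\ell \ge 64$. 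Your accounting of ``each internal node contributes $O(\varepsilon)$'' treats the combining cost as additive, but it is multiplicative in the left child's accumulated error. A caterpillar tree with the singleton always on the doubled (left) side would in fact give $O(\ell\varepsilon)$, which is precisely why a balanced shape is the wrong choice here --- and the paper itself warns that the $\ell$-ary extension of Proposition~\ref{prop:invertible_two_functions} only achieves $\ell 2^\ell\varepsilon$. Your diagonal counterexample against a fixed ordering is correct as far as it goes, but it does not establish that a balanced tree is what fixes it.

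Two further issues. First, ``encoding the tree randomness into a prefix of one of the $\ff_j$'s hash outputs'' is not available: the theorem fixes the map as $(x_1, \dots, x_\ell) \mapsto (\ff_1(x_1), \dots, \ff_\ell(x_\ell))$ with the given~$\ff_j$, so you cannot graft bits onto the output; what you \emph{can} do, and what the paper does, is run a probabilistic inverse~$g$ and derandomize by majority vote at a factor-2 cost in the error. Second, the paper's proof uses a structurally different tree from yours: one of depth~$\ell$ with $\tfrac{\ell}{\varepsilon}K_j$ children at level~$j$, a fixed random assignment of $\mcX_j$-elements to children (so the tree does not depend on~$S$, which is what makes the procedure monotone), a percolation step whose collision probability is tamed by the small-slices bound together with the estimate $\sum_{J} (1/\ell)^{|J|} \le e$, and a bubble-up step that applies each~$g_j$ exactly once. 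That one-application-per-coordinate structure is what yields the additive $(\exp(1) + \ell)\varepsilon$ before derandomization and hence $8\ell\varepsilon$; the recursive pairwise combining you propose is precisely what the paper's construction was designed to avoid.
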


\begin{remark*}
  The offline variant also holds, in which we start with a weaker assumption and obtain a weaker conclusion. 
\end{remark*}

\noindent
The proof of Proposition~\ref{prop:invertible_two_functions} can be modified for the online version and extended to arbitrary $\ell$. 
Moreover, compared to Theorem~\ref{th:main_componentwise_invertible} it is easier, 
and provides slightly better parameters when~$\ell$ and~$\varepsilon$ are small.
Unfortunately, the obtained error is~$\ell 2^\ell\varepsilon$,  which can be much larger than~$8\ell\varepsilon$. 

The difficulty in proving both the proposition and the theorem is to partition the large set $S$ of initial suspects 
into smaller subsets on which the inverses of the functions $\ff_j$ can be applied.
For the proposition, this partition is deterministic and rather intuitive. 
For the theorem, we use a different and novel technique, called \emph{randomized-tree partition}, and 
this allows to reduce the error to~$O(\varepsilon \ell)$.

\subsection*{Proof of Theorem~\ref{th:main_componentwise_invertible}}

We construct a probabilistic algorithm $g$ that on input $\ty \in \mcY$ and a list $S$ of tuples in $\mcX_1 \times \cdots \times \mcX_\ell$  
with $\tK$-small slices, selects an element $g(S,\ty)$  from $S$ such that for each fixed $\tx \in S$: 
\begin{equation*}\tag{*}\label{eq:error_prob}
  \Pr \left[ g(S,\ff(\tx)) \not= \tx \right] \; \le \; (\exp(1)+\ell) \varepsilon,
\end{equation*}
in which the probability is over the random choices of both $g$ and~$\ff$. 
The algorithm operates in an {\em online} way for the input~$S$. 
This means that $g$ starts by executing an initialization procedure that does not depend on~$S$.
Next, for each subsequent element $\tx$ in~$S$, an update procedure is run that manipulates some variables.
This update procedure might decide to assign the output $g(S, \ty) = \tx$. 
Once, committed to an output, this output may no longer be changed. 
In this way, the obtained function $g$ is monotone.

A deterministic online inverse is obtained by taking majority votes of the probabilistic outcome of~$g$.  
This  at most doubles the error. 
Indeed, given a set $S$, a value $\ty$ is {\em bad} for some $\tx \in S$ if $g(S, \ty) = \tx$ has probability at most $1/2$. 
By the inequality above, the output $\ff(\tx)$ is bad for $x$ with probability at most $2 \cdot (\exp(1)+\ell) \varepsilon \le 8 \ell \varepsilon$, 
and if $\ff(\tx)$ is not bad, then the majority vote returns~$\tx$.
Hence, the derandomized $g$ satisfies the invertibility conditions of the theorem, 
where the correspondence with the notation from the definition is obtained by~$g_S(\ty) = g(S,\ty)$.

The initialization algorithm for $g$  creates a random tree. 
It starts by creating the root, which has depth~$0$. 
Afterwards, for $j = 1,\dots, \ell$ and for each node at depth $j-1$, 
it creates $\tfrac{\ell}{\varepsilon} K_j$ children and associates each element in~$\mcX_j$ to a random child.
After the random tree is created, the initialization procedure finishes.

Recall that the update procedure for $g(S, \ty)$ is run when a new element $\tx$ is added to~$S$. 
This procedure is deterministic and places copies of $\tx$ on some nodes of the tree. 
We can view the copies of $\tx$ as pebbles labeled by~$\tx$. 
Elements  are never removed from nodes, and we ensure that at each moment, there is at most one element on a node. 
The update procedure proceeds in two steps, which we call  \emph{top-down percolation} and \emph{bubble up}.

\medskip
\noindent
{\em The top-down percolation step.} 
\\Place $\tx$ on a leaf of the tree by descending it in the natural way: 
for moving down from depth $j-1$ to depth $j$, select
the child node associated to $x_j$. 
If the leaf already contains an element, then $\tx$ is not placed on any leaf, 
and the update procedure terminates without selecting $\tx$ for the output.

\medskip
\noindent
Before presenting the second step, we first bound the probability that a fixed $\tx$ in $S$ 
is not placed on a leaf during the percolation step.

\begin{lemma}\label{lem:random_tree}
  Assume $S$ has $\tK$-small slices and that $\tx$ belongs to $S$. 
  The probability that $\tx$ is not placed on a leaf is at most $\varepsilon \exp(1)$.
\end{lemma}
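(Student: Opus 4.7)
The plan is a union bound: $\Pr[\tx \text{ is not placed on a leaf}]$ is at most the probability that some other $\tx' \in S$ percolates to the same leaf as $\tx$. (Assuming \emph{bubble up} only moves elements upward, a leaf can only become occupied by a previously successful percolation; in any case, dropping the ``previously processed'' qualifier only weakens the bound.) So it suffices to show that
\[
  \sum_{\tx' \in S \setminus \{\tx\}} \Pr\big[\tx \text{ and } \tx' \text{ percolate to the same leaf}\big] \;\le\; \varepsilon\exp(1).
\]

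First I would compute the per-pair probability. The random associations at distinct nodes are mutually independent, and within a single depth-$(j{-}1)$ node the associations of distinct elements of $\mcX_j$ to the $\tfrac{\ell}{\varepsilon}K_j$ children are i.i.d.\ uniform. Hence, conditional on $\tx$ and $\tx'$ being co-located at depth $j{-}1$, they move to the same depth-$j$ node with probability $1$ if $x_j = x'_j$ and with probability $\varepsilon/(\ell K_j)$ otherwise. Multiplying over the $\ell$ levels, the probability of ending at the same leaf equals $\prod_{j \notin J} \varepsilon/(\ell K_j)$, where $J = \{j : x_j = x'_j\}$.

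Next I would group the sum by the agreement set $J \subsetneq [\ell]$ and apply the $\tK$-small-slices hypothesis, which yields $\#\{\tx' \in S : \tx'_J = \tx_J\} \le \prod_{j \notin J} K_j$. This count slightly overcounts (it also includes $\tx'$ whose agreement set properly extends $J$), but that only loosens an upper bound. Combining,
\[
  \sum_{\tx' \ne \tx} \Pr[\text{same leaf}] \;\le\; \sum_{J\subsetneq[\ell]} \Big(\prod_{j\notin J} K_j\Big)\Big(\prod_{j\notin J} \frac{\varepsilon}{\ell K_j}\Big) \;=\; \sum_{s=1}^{\ell} \binom{\ell}{s} \Big(\frac{\varepsilon}{\ell}\Big)^{s} \;=\; \Big(1+\frac{\varepsilon}{\ell}\Big)^{\ell} - 1.
\]

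Finally I would use $(1+\varepsilon/\ell)^\ell \le e^\varepsilon$ together with $e^\varepsilon - 1 \le \varepsilon\cdot e$ for $\varepsilon \in [0,1]$ (which follows from $e - e^\varepsilon \ge 0$ on this range, so that $\varepsilon \mapsto e^\varepsilon - 1 - \varepsilon e$ is non-increasing and vanishes at $0$), giving the claimed $\varepsilon\exp(1)$. The only nontrivial step is the matching of the per-pair probability factor $\prod_{j\notin J} \varepsilon/(\ell K_j)$ with the small-slices count $\prod_{j\notin J} K_j$: they are designed so that the $K_j$'s cancel and one is left with a clean binomial sum in $\varepsilon/\ell$. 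Everything else is bookkeeping.
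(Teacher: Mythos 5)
Your proof is correct and follows essentially the same route as the paper: group collision candidates by their pattern of coordinate-wise agreement with $\tx$, use the $\tK$-small-slices bound to cancel the $K_j$ factors, and union-bound to get a binomial sum bounded by $\varepsilon \exp(1)$. The only cosmetic difference is that you index by the agreement set rather than the disagreement set, and you keep the $\varepsilon^s$ factors until the end (obtaining $(1+\varepsilon/\ell)^\ell - 1 \le e^\varepsilon - 1 \le \varepsilon e$), whereas the paper factors out $\varepsilon$ early and bounds $(1+1/\ell)^\ell \le e$ directly.
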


\begin{proof}
  We bound the probability that some tuple $\tz$ in $S$, different from $\tx$,  descends to the same leaf as~$\tx$. 
  If this is not the case, than $\tx$ is indeed placed on a leaf.

  Let $J$ be a non-empty subset of~$[\ell]$ and let $\tz$ be an $\ell$-tuple for which $z_j \not= x_j$ for all $j \in J$, 
  and $z_j = x_j$ for all $j \in [\ell] \setminus J$. 
  The probability that $\tz$ is placed in the same leaf as $\tx$ is at most 
 \[
   \prod_{j \in J} \frac{\varepsilon}{\ell K_j}.
 \]
 By the slice assumption, the set $S$ contains at most $\prod_{j \in J} K_j$ such elements $\tz$,
 and hence, the probability that this happens for such a $\tz$ is at most 
 \[
 (\varepsilon/\ell)^{\# J}  \;\le\; \varepsilon (1/\ell)^{\# J},
 \] 
 since $J$ is nonempty.
 By the union bound over all nonempty subsets $J \subseteq [\ell]$, the probability that some element of $S$ is placed in $\tx$'es leaf is at most
 \[
 \le \;\; \varepsilon \sum_J \left(\frac{1}{\ell}\right)^{\# J}
 \;\; \le \;\; \varepsilon \left(1+ \frac{1}{\ell} \right)^\ell 
 \;\;\le\;\; \varepsilon  \exp(1).
 \qedhere
 \]
\end{proof}

\noindent
The second step of the update procedure is the {\em bubble up} step. 
In this step, the element $\tx$ assigned to some leaf,  attempts to be copied upwards to the root. 
There is a ``competition,'' and out of the elements placed on the children of a node, at most one progresses up. 
If $\tx$ reaches the root, then it is the output of~$g(S, \ty)$.
Let $g_j$ be the online function that inverts~$F_j$.

\medskip
\begin{samepage}
\noindent
{\em The bubble up step.}
\\ Repeat the rounds that we describe next until termination. At the current round, execute the following four steps:
\begin{itemize}[leftmargin=*, itemsep=3pt, topsep=3pt, parsep=0pt, label=-]
  \item 
    Consider the node with minimal depth that contains $\tx$ and let $j$ be this depth  (so, at the first round,  $j$ is $\ell$).
  \item 
    Let $B_j$ be the set of $j$-th coordinates of tuples placed on  the siblings of this node. We view these tuples as competing to create a copy of themselves in the parent.
  \item 
    If $g_j(B_j, y_j) = x_j$,  place a copy of $\tx$ on the parent node (in other words, $\tx$ is the winner).  Otherwise, terminate without selecting~$\tx$.
  \item 
    If $\tx$ is placed in the root, terminate and select $\tx$ for the output.
\end{itemize}
\end{samepage}

\medskip
\noindent
{\em Remarks.}

The above loop always terminates, 
because if termination does not happen in the third step of a round, 
$\tx$~is raised up in the tree and the algorithm continues with the next round.  
Therefore, if termination does not happen in a third step of any round, 
$\tx$~reaches the root and the algorithm terminates in the fourth step. 

As promised above, in every node, at most one tuple is placed. Indeed, for leaves this follows by the  construction in the 
percolation step. For the other nodes, this follows by the  definition of online inverses $g_j$, because if $g_j(B_j, y_j)$ has committed to an output $x_j$, then further
additions of elements to $B_j$ will not change this output. By the construction of the tree, 
all siblings in the third step contain tuples with different $j$-th coordinates, and therefore only at most one can win. 

When the algorithm commits to an output, this answers remains. Indeed, by the previous 
step, at most one tuple is placed in the root, and after being placed, it is never removed.

\medskip
\noindent
We prove~\eqref{eq:error_prob}. For a tuple $\tx \in S$, consider  the following events:
\\- ${\cal E}_0$  is the event that $\tx$ is assigned to a leaf of the tree at the percolation step.
\\- For each $j \in [\ell]$, consider the node at depth $j$ on the branch of $\tx$. 
Let $B_j$ be the set of $j$-th coordinates of tuples placed in the node's siblings  after processing all elements of~$S$.
 ${\cal E}_j$  is the event that $g_j(B_j, F_j(x_j)) = x_j$.

\medskip
\noindent
Conditioned on all events ${\cal E}_0$ and ${\cal E}_\ell, \ldots, {\cal E}_1$ being true, $\tx$ will appear in the root and, thus, is the output of $g(S, \ff(\tx))$, in other words,  $g$ has correctly inverted $\ff(\tx)$. Indeed, by ${\cal E}_0$, $\tx$~is placed to some leaf during the percolation step.
And by ${\cal E}_j$, at each depth $j$ in the bubble up step, the tuple $\tx$ wins the competition and is placed on its parent, 
since, no other sibling has the same $j$-th coordinate.

We now bound the error probability that at least one of the above events does not happen. 
By Lemma~\ref{lem:random_tree}, the probability that $\tx$ is not placed on a leaf at the percolation phase is at most~$\varepsilon \exp(1)$.
Otherwise, $\tx$ is placed to some leaf. Consider its branch. Note that  the size of $B_\ell$ is at most $\tfrac{\ell}{\varepsilon}K_\ell$, because this is the number of sibling nodes, and each node contains  at most one tuple.
By the $(\tfrac{\ell}{\varepsilon}K_\ell,\varepsilon)$-invertibility of~$\ff_\ell$,  
the value of $g_\ell (B_\ell,  \ff_\ell(\tx_\ell))$ differs from $\tx_\ell$ with probability at most $\varepsilon$. 
Otherwise, if this does not happen, then $\tx$ is placed in the branch at depth $\ell - 1$, and $x_{\ell-1}$ belongs to $B_{\ell-1}$. 
By induction, the same argument is valid for all levels, and thus every event ${\cal E}_j$ fails with probability at most~$\varepsilon$.

The probability that one of the events does not happen is bounded by the sum of all mentioned probabilities, 
which is precisely the right-hand side of~\eqref{eq:error_prob}. 
The theorem is proven.


\subsection*{Proof of  Theorem~\ref{th:main_distributed_compression}.}

We first connect the Slepian-Wolf constraints from Definition~\ref{def:SW_rate} to $K$-small slices. 

\begin{lemma}\label{lem:fromKolm_to_smallSlices}
The set $S$ of tuples $\tx$ such that $\tk$ satisfies
the $\ttd$-Slepian-Wolf constraints for $\tx$ has $(2^{k_1}, \dots, 2^{k_\ell})$-small slices. 
\end{lemma}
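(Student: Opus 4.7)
The plan is to unfold both definitions and observe that the bound on a slice of $S$ comes directly from the Slepian-Wolf constraint applied to the \emph{complementary} index set, via the standard counting bound on conditional Kolmogorov complexity.

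More concretely, fix an arbitrary $\tx \in S$ and a subset $J \subseteq [\ell]$. I first dispose of the edge cases: if $J = [\ell]$ the claim reads $\#\{\tz \in S : \tz = \tx\} \le 1$, which is immediate, and if $J = \emptyset$ the claim reads $\#S \le \prod_j 2^{k_j}$, which follows because every $\tz \in S$ satisfies the Slepian-Wolf constraint with $J = [\ell]$, giving $\C_\ttd(\tz) < \sum_j k_j$, and there are fewer than $2^{\sum_j k_j}$ strings of complexity bounded in this way.

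For the main case $\emptyset \ne J \ne [\ell]$, I let $J^c = [\ell] \setminus J$ and consider any $\tz \in S$ with $\tz_J = \tx_J$. Since $\tz \in S$, the Slepian-Wolf constraint applied to the non-empty set $J^c$ (for the tuple $\tz$) yields
\[
\C_\ttd(\tz_{J^c} \mid \tz_J) \;<\; \sum_{j \in J^c} k_j.
\]
Using $\tz_J = \tx_J$, the conditioning string is the fixed string $\tx_J$, so every such $\tz$ is determined by a value $\tz_{J^c}$ satisfying $\C_\ttd(\tz_{J^c} \mid \tx_J) < \sum_{j \in J^c} k_j$. By the definition of conditional Kolmogorov complexity (Equation~\eqref{e:kolmcond}), each such $\tz_{J^c}$ is the image under $\ttd(\cdot, \tx_J)$ of some program of length strictly less than $\sum_{j \in J^c} k_j$, and there are fewer than $2^{\sum_{j \in J^c} k_j} = \prod_{j \in J^c} 2^{k_j}$ such programs.

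This is really a one-line counting argument once the right constraint is paired with the right slice; there is no genuine obstacle. The only thing to be careful about is the bookkeeping of which index set the Slepian-Wolf inequality is invoked on: the slice fixes coordinates in $J$, but the complexity bound we need is for the \emph{remaining} coordinates, so we apply the constraint at the complementary subset $J^c$, not at $J$ itself.
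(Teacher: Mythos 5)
Your proof is correct and takes essentially the same approach as the paper: apply the Slepian--Wolf constraint at the complementary index set $[\ell]\setminus J$ for each $\tz$ in the slice, observe that $\tz$ is determined by $\tz_{[\ell]\setminus J}$ once $\tz_J = \tx_J$ is fixed, and count programs of length less than $\sum_{j\in[\ell]\setminus J} k_j$. The only cosmetic difference is that you split off $J = \emptyset$ as a separate case, whereas the paper handles it together with all strict subsets $J \subsetneq [\ell]$ by the same counting argument.
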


\begin{proof}
  If $J = [\ell]$, the set in  Definition~\ref{def:small_slices} equals $\{\tx\}$ and the inequality is true.
  Fix a strict subset $J \subset [\ell]$. 
  All elements in $\tz \in S$ satisfy 
  \[
    \C_\ttd(\tz_{[\ell] \setminus J} \,|\, \tz_J) \; <\, \sum_{j \in [\ell] \setminus J} k_j.
  \]
  The number of such $\ell$-tuples $\tz$ with $\tz_J = \tx_J$ is therefore bounded by the exponent of the 
  right hand side.  Hence, this set has the required size. 
\end{proof}

\noindent
Recall that in the proof of Theorem~\ref{th:main_single_source}, 
we constructed a compressor using a family $\ff_{\varepsilon,\kappa}$ of $(2^\kappa, \varepsilon)$-invertible functions on~$\{0,1\}^*$. 
For this, we simply appended the parameters $\varepsilon$ and $\kappa$ to the output. Note that for the obtained overhead $\Delta$, 
the output length is equal to~$m = \kappa + \Delta - 1$.
For such compressors, Theorem~\ref{th:main_distributed_compression} follows easily from Theorem~\ref{th:main_componentwise_invertible}.

\begin{proof}[Proof of Theorem~\ref{th:main_distributed_compression} for compressors that have the parameter $\kappa$ in their output.]
  Let $\varepsilon$ and $\ty = (y_1, \dots, y_\ell)$ be inputs for the decompressor $\ttd'$ that we need to construct.
  By the  assumption, each $y_j$ provides us with the value $\kappa_j$, and  therefore specifies  a monotone $(2^{\kappa_j}, \varepsilon)$-inverse~$g_j$.
  Let $g$ be the inverse constructed in the proof of Theorem~\ref{th:main_componentwise_invertible}. This construction only uses objects that 
  are specified by the inputs $\varepsilon$ and $\ty$ of $\ttd'$: $\varepsilon$, $\ell$, $\kappa_j$ and $g_j$ for all~$j$. 
  Let $S$ be the set of strings for which $(\kappa_1 - \log \tfrac{\ell}{\varepsilon}, \dots, \kappa_\ell - \log \tfrac{\ell}{\varepsilon})$ 
  satisfies the $\ttd$-Slepian-Wolf constraints.
  Define $\ttd'(\varepsilon,\ty) = g(S,\ty)$.

  If $\tx$ satisfies the 
  $(m_1 - \Delta_1 - \log \tfrac{\ell}{\varepsilon}, \dots, m_\ell - \Delta_\ell - \log \tfrac{\ell}{\varepsilon})$-Slepian-Wolf conditions, 
  then $\tx \in S$ by choice of $S$, (there is even 1 bit surplus).   
  By  Lemma~\ref{lem:fromKolm_to_smallSlices}, 
  $S$ has $(\tfrac{\varepsilon}{\ell}2^{\kappa_1}, \dots, \tfrac{\varepsilon}{\ell}2^{\kappa_\ell})$-small slices. 
  By choice of $\ttd'$ and by Theorem~\ref{th:main_invertible}, 
  the event $\ttd'(\varepsilon,\ff(\tx)) = \tx$ has probability at least 
  $1-8\varepsilon \ell$. This implies the theorem. 
\end{proof}

\noindent
For the general case, we need to adapt the proof of Theorem~\ref{th:main_componentwise_invertible}.

\begin{proof}[Proof of  Theorem~\ref{th:main_distributed_compression}.] 
  Assume the compressor is $\Delta$-optimal, and for inputs $(\varepsilon,m,x)$, 
  let $\Delta(\varepsilon, m, x)$ be the value of the overhead.
  Given a decompressor $\ttd$ and an $\ell$-tuple $\tm$, 
  we define the set $S_\tm$ of tuples $\tx$ that satisfy the conditions of the theorem:
  \[
  C_\ttd(\tx_J \mid \tx_{[\ell] \setminus J}) \; \le \; \sum_{j \in J} \left(m_j - \Delta(\varepsilon,m_j, x_j) - \log \tfrac{\ell}{\varepsilon} \right)
     \qquad \text{for all nonempty } J \subseteq [\ell].
  \]
  Let $\td$ be an $\ell$-tuple of integers. 
  The set of elements $\tx$ in $S_\tm$ for which $\Delta(\varepsilon,m_j,x_j)= d_j$ 
  for all $j \in [\ell]$ has $(\tfrac{\varepsilon}{\ell}2^{m_1-d_1}, \dots, \tfrac{\varepsilon}{\ell}2^{m_\ell-d_\ell})$-small slices.
  
  Given the set $S_\tm$ as a list, we need to specify an output for a decompressor $\ttd'$ for some inputs $\varepsilon$ and $\ty$.
  We modify the tree-partitioning in the online algorithm of Theorem~\ref{th:main_componentwise_invertible} as follows.
  For each depth $j \in [\ell]$, for each node at depth $j$, for each $d_j \le m_j$, we create $2^{m_j - d_j}$ children, 
  and associate each string $w$ for which $\Delta(\varepsilon,m_j, w) = d_j$ to a random such child.

  We now consider the update procedure.
  The first step, the top-down percolation step, does not change. 
  In the bubble up step, we only change the third instruction. 
  For this instruction, we have no inverse function available. But instead, 
  we construct a decompressor $\tte$, and apply the decompressor $\tte'$ 
  obtained from the definition of $\Delta$-optimality. 
  This decompressor $\tte$ assigns programs of length $m_j - d_j$ to all $j$-th coordinates of tuples of children 
  with associated value~$d_j$. (Recall that there are precisely $2^{m_j-d_j}$ such children.)
  Let $\tte'$ be the corresponding decompressor obtained from  Definition~\ref{def:optimal_compressor}.
  If $\tte'(y_j) = x_j$ then $\tx$ is copied from the child to the parent.

  The analysis of this algorithm follows the analysis above: by the same reasoning as in  Lemma~\ref{lem:random_tree}, 
  for each fixed tuple $\td$, there are no collisions in leafs of $\td$-branches, 
  i.e., branches whose nodes at depth~$j$ have corresponding values for~$d_j$. 
  Indeed, each factor $\tfrac{\varepsilon}{\ell}2^{m_j-d_j}$ 
  in the condition of $(\tfrac{\varepsilon}{\ell}2^{\tm-\td})$-small slices, is a factor $\tfrac{\varepsilon}{\ell}$
  smaller than the number of branches, which is $2^{m_j-d_j}$.
  Finally, the probability that for a fixed $\tx \in S_\tm$, 
  on input $\varepsilon, \ttc_{\varepsilon,m_1}(x_1), \dots, \ttc_{\varepsilon, m_\ell}(x_\ell)$, 
  the tuple $\tx$ bubbles up to the root, also follows the same analysis. 
\end{proof}

\begin{remark}\label{remark:effective_decompression_distributed}
  Recall that in the single source case,
  the optimal compressor in the proof of Theorem~\ref{th:main_single_source}, 
  provides a modified decompressor that can be evaluated in space polynomial in $n_S = \max\{|z|: z \mathop{\in} S\}$, 
  when given oracle access to the original decompressor.
  Given oracle access to a decompressor $\ttd$ acting on tuples, 
  the above procedure defines a decompressor $\ttd'$ that can be evaluated in 
  time exponential in $\ell$ and $n_S = \max \{|z_j| : \tz \mathop{\in} S \text{ and } j \in [\ell]\}$, 
   and space polynomial in~$n_S \cdot 2^\ell$.
  This implies that if $\ttd$ is partial computable, respectively, computable and computable in exponential time, 
  then so is the corresponding~$\ttd'$. 
\end{remark}

\section{Lower bounds}\label{s:lowerbounds}

In this section we prove the lower bounds on the overhead $\Delta$ and on the number of random bits $r$ used by the compressors given in Proposition~\ref{prop:lowerbounds}.
To strengthen these lower bounds, we first show that $\Delta$-optimality is equivalent to a weaker notion of optimality.

\subsection*{Robustness of the definition of optimal compressor}

\textbf{Probabilistic decompressors.} 
We show that the definition of $\Delta$-optimality does not change if we also consider probabilistic decompressors, except for a small rescaling of~$\varepsilon$.

For a probabilistic partial function $\ttd$ let $\ttd_\maj$ be the partial function that for each argument of 
$\ttd$ is undefined if every value appears with probability at most $1/2$, and otherwise, it equals the unique 
value that appears with probability strictly more than $1/2$.

\begin{lemma}\label{lem:equivalence_optimality_probabilisticDecompressor}
  For every probabilistic function $\ff$:
  \[
  \Pr [\ttd(\ff(x)) \not= z] \;\ge\; \tfrac{1}{2} \cdot \Pr [\ttd_\maj(\ff(x)) \not= z].
  \]
\end{lemma}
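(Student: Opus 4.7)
My plan is to reduce to a pointwise statement by conditioning on the value $y = \ff(x)$, and then do a short case analysis on what $\ttd_\maj(y)$ is. Since $\ff$ is independent of the internal randomness of $\ttd$, it suffices to show that for every fixed string $y$ in the range of $\ff$:
\[
\Pr[\ttd(y) \ne z] \;\ge\; \tfrac{1}{2}\cdot \mathbf{1}[\ttd_\maj(y) \ne z],
\]
because averaging this inequality with the weight $\Pr[\ff(x)=y]$ yields exactly the claim. (Here I interpret ``$\ttd_\maj(y)\ne z$'' as true whenever $\ttd_\maj(y)$ is undefined, in line with how partial functions are treated elsewhere in the paper.)

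The first case is when $\ttd_\maj(y) = z$: then the right-hand side of the pointwise inequality is $0$, so there is nothing to prove. The second case is when $\ttd_\maj(y) \ne z$, in which case the right-hand side equals $1/2$ and I must show $\Pr[\ttd(y)=z] \le 1/2$. I split this further: if $\ttd_\maj(y)$ is undefined, then by the very definition of $\ttd_\maj$, every value appears as $\ttd(y)$ with probability at most $1/2$, including~$z$, so the bound holds. If $\ttd_\maj(y)$ is defined and equals some $z' \ne z$, then $\Pr[\ttd(y)=z'] > 1/2$, and since the events $\{\ttd(y)=z\}$ and $\{\ttd(y)=z'\}$ are disjoint, $\Pr[\ttd(y)=z] < 1/2$.

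Putting these cases together gives the pointwise bound, and integrating over $y$ against the distribution of $\ff(x)$ gives the stated inequality. There is really no obstacle here: the only thing one has to be careful about is the interpretation of the predicate ``$\ttd_\maj(\ff(x)) \ne z$'' when $\ttd_\maj$ is undefined, which is the subcase where the definition of $\ttd_\maj$ is used in a nontrivial way. The factor of $1/2$ on the right-hand side is exactly what lets that undefined case be absorbed.
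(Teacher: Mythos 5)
Your proof is correct and follows essentially the same approach as the paper's: the paper introduces the notion of a ``bad'' value $p = \ff(x)$ (one with $\Pr[\ttd(p)=z]\le 1/2$), observes that $\ttd_\maj(\ff(x))\ne z$ exactly when $\ff(x)$ is bad, and concludes by conditioning on badness. Your pointwise inequality $\Pr[\ttd(y)\ne z]\ge\tfrac12\mathbf 1[\ttd_\maj(y)\ne z]$ together with the case analysis is just the fully spelled-out version of the same reasoning.
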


\begin{proof}
 We call a value $p = \ff(x)$  {\em bad} if $\ttd(p) = z$ has probability at most~$1/2$.
 The right-hand side is equal to the probability that $\ff(x)$ generates a bad value. 
The inequality follows directly.
\end{proof}

\bigbreak
\noindent
\textbf{A single optimal decompressor.}
Recall that a Turing machine $\ttu$ is {\em optimal} if for every other Turing machine~$M$ 
there exists a constant $c_\ttd$ such that for all strings $x$: $\C_\ttu(x) \le \C_\ttd(x) + c_\ttd$.
We show that if in stead of considering any decompressor in the definition of $\Delta$-optimality, 
we only consider a single optimal decompressor, the definition does not change after a constant shift of~$\Delta$, 
provided~$\Delta$ is a computable expression.

\begin{lemma}\label{lem:equivalence_optimality_singleTM}
  Assume that $\Delta$ is a computable function of the compressors inputs, $\ttu$ is an optimal Turing machine, 
  and there exists a Turing machine $\ttu'$ such that for all $\varepsilon, x$ and $m \ge C_\ttu(x) + \Delta$ we have
  \[
  \Pr \left[ \ttu'(\ttc_{\varepsilon,m}(x)) = x \right] \ge 1-\varepsilon.
  \]
  Then $\ttc$ is $(\Delta + O(1))$-optimal.
\end{lemma}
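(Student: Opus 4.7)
The plan is to use the invariance property of the optimal Turing machine $\ttu$. Given an arbitrary decompressor $\ttd$, I will simply take $\ttd' := \ttu'$; the work is then to chain the hypothesis of the lemma with the invariance of $\ttu$.

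First, by optimality of $\ttu$, the invariance theorem supplies, for every Turing machine $\ttd$, a constant $c_\ttd$ such that $\C_\ttu(x) \le \C_\ttd(x) + c_\ttd$ for every string~$x$. Next, for any $\varepsilon > 0$, any $x$, and any $m \ge \C_\ttd(x) + \Delta + c_\ttd$, this bound yields $m \ge \C_\ttu(x) + \Delta$; the hypothesis of the lemma then delivers
\[
\Pr\!\left[ \ttd'(\ttc_{\varepsilon,m}(x)) = x \right] \;=\; \Pr\!\left[ \ttu'(\ttc_{\varepsilon,m}(x)) = x \right] \;\ge\; 1 - \varepsilon,
\]
which is precisely the decompression condition for $\ttc$ with overhead $\Delta + c_\ttd$. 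Since $c_\ttd$ depends only on $\ttd$ and is independent of the compressor's inputs $\varepsilon, m, x$, this shows $\ttc$ is $(\Delta + O(1))$-optimal, with the $O(1)$ absorbing $c_\ttd$.

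The main subtlety, and the only non-trivial aspect of the argument, is that the invariance bound $\C_\ttu(x) \le \C_\ttd(x) + c_\ttd$ is supplied by the optimality of $\ttu$ only when $\ttd$ is a Turing machine, whereas Definition~\ref{def:optimal_compressor} allows $\ttd$ to be an arbitrary deterministic partial function. For a non-computable $\ttd$, no finite $c_\ttd$ need exist, and the direct reduction breaks down. The natural remedy is to regard the $O(1)$ as a $\ttd$-dependent quantity and, when $\ttd$ is not computable, to construct $\ttd'$ using $\ttd$ as an oracle: $\ttd'(p)$ would enumerate the strings $z$ with $\C_\ttd(z) < m - \Delta - c$ and select the one consistent with the output of $\ttu'$ on $p$. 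The assumption that $\Delta$ is computable is what lets $\ttd'$ compute the admissible threshold $m - \Delta - c$ from its own inputs, so that the oracle search is well-defined.
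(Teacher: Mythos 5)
Your direct argument via invariance is the natural first attempt, and as you note, it works cleanly whenever $\ttd$ is a Turing machine. You have correctly pinpointed the only real difficulty: Definition~\ref{def:optimal_compressor} quantifies over \emph{all} deterministic partial functions $\ttd$, including non-computable ones, for which no invariance constant $c_\ttd$ exists, and indeed $\C_\ttd(x)$ can be far below $\C_\ttu(x)$.

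The proposed remedy, however, does not close this gap. There are two separate problems. First, ``regarding $O(1)$ as a $\ttd$-dependent quantity'' weakens the conclusion: in the definition of $\Delta$-optimality, $\Delta$ is a fixed function of $(\varepsilon,m,x)$ and does not depend on $\ttd$; the $O(1)$ in the lemma must therefore be a single constant working for every $\ttd$ simultaneously, which is exactly what is needed for the lemma's role of transferring the lower bounds of Proposition~\ref{prop:lowerbounds}. Second, and more fundamentally, the oracle construction of $\ttd'$ breaks down. You suggest enumerating $\{z : \C_\ttd(z) < m - \Delta - c\}$ and picking the element ``consistent with $\ttu'(p)$.'' But the hypothesis of the lemma constrains $\ttc$ and $\ttu'$ only in the regime $m \ge \C_\ttu(x)+\Delta$. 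For a powerful non-computable $\ttd$, the relevant $x$ may satisfy $\C_\ttd(x) \le m - \Delta - c$ while still having $\C_\ttu(x) > m - \Delta$; the hypothesis then says nothing about $\ttu'(\ttc_{\varepsilon,m}(x))$, which can be an arbitrary wrong string. The compressor was never required to produce fingerprints separating the elements of a low-$\C_\ttd$ set, so there is no information in $p = \ttc_{\varepsilon,m}(x)$ from which $\ttd'$ could recover $x$, oracle or not. Knowing $\Delta$ is computable lets you enumerate the suspect set, but it does not make the compressed code informative about that set.

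The paper's proof takes an entirely different route: contraposition plus diagonalization. Assuming $\ttc$ is not $(\Delta+2c)$-optimal for any $c$, it argues (by a compactness argument) that for each $c$ there is a \emph{finite-domain} witness $\ttd_c$; such finite objects can be located by an exhaustive search using the computability of $\Delta$, so each $\ttd_c$ is computable and the combined decompressor $\ttd(0^c1p)=\ttd_c(p)$ is a Turing machine. Invariance of $\ttu$ then applies to this single computable $\ttd$, and for $c=c_\ttd$ the hypothesis of the lemma hands over a $\ttd'$ that works for $\ttd_c$ with overhead $\Delta + 2c_\ttd$ --- contradicting the choice of $\ttd_c$. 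The essential idea you are missing is that non-computable $\ttd$ need never be confronted directly: one only needs to refute the existence of $\ttu'$, and the witnesses of non-optimality can be forced to be computable and finite, which brings the invariance theorem back into play.
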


\begin{proof}
 We show the contrapositive: If for large $c$, compressor $\ttc$ is not $(\Delta+2c)$-optimal, 
 then no machine $\ttu'$ satisfies the condition of the lemma. 
 
 For each positive integer $c$, 
 consider a decompressor $\ttd_c$ for which the $(\Delta + 2c)$-optimality condition does not hold, 
 i.e., there exists no $\ttd'$ that satisfies the condition of Definition~\ref{def:optimal_compressor}.
 By a compactness argument, there exists such a $\ttd_c$ with finite domain, and 
 hence, on input $c$, such a decompressor can be found by exhaustive search. 
 Note that this search requires the evaluating of~$\Delta$.

 We construct a decompressor $\ttd$ by combining all these $\ttd_c$. More precisely, 
 given an input of the form $0^c1p$, the output of $\ttd$ is given by $\ttd_c(p)$.
 By optimality of $\ttu$, we have 
 \[
 C_\ttu(x) \;\le\; C_\ttd(x) + c_\ttd \;\le\; C_{\ttd_c}(x) + c_\ttd + c. 
  \]
 Let $c = c_\ttd$. 
 The assumption $m \ge C_\ttu(x) + \Delta$ in the lemma implies the assumption $m \ge C_\ttd(x) + (\Delta + 2c_\ttd)$ 
 in the $(\Delta + 2c)$-optimality criterion for $\ttd_c$. Thus if there exists a machine $\ttu'$ 
 satisfying the conditions of the lemma, then it gives us a machine $\ttd'$ satisfying the $(\Delta+2c)$-optimality 
 criterion.  By choice of $\ttd_c$, such $\ttd'$ does not exist, and hence, $\ttu'$ also does not exist.
 The lemma is proven.  
\end{proof}

\subsection*{Lower bound for the randomness used by optimal compressors}

\begin{proposition*}[Second part of Proposition~\ref{prop:lowerbounds}]
  Let $\Delta$ and $r$ be functions of~$\varepsilon$ and~$n=|x|$.
  If there exists a $\Delta$-optimal compressor that can be evaluated with at most $r$ random bits in a computably bounded running time, then for all~$\varepsilon \le 1/2$:
  \[
    \left\lceil \varepsilon 2^{r+1} \right\rceil \;\ge\; \frac{n-r - \log (2/\varepsilon)}{\Delta + 4}
   \]
\end{proposition*}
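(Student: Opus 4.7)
The plan is to reduce to a single deterministic decompressor, then extract the inequality by iterating the $\Delta$-optimality along a chain of carefully chosen adversarial decompressors.

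First, apply Lemma~\ref{lem:equivalence_optimality_singleTM} so that it suffices to use a single decompressor $\ttu'$ associated with the universal machine $\ttu$, at the cost of an additive constant in $\Delta$; then apply Lemma~\ref{lem:equivalence_optimality_probabilisticDecompressor} to make $\ttu'$ deterministic, at the cost of replacing $\varepsilon$ by $2\varepsilon$. Both modifications will be absorbed by the ``$+4$'' slack in the stated inequality. Consequently, for every $n$-bit $x$ with $\C(x)\le m-\Delta$, the set of bad seeds
\[
B_x \;=\; \{\rho\in\{0,1\}^r : \ttu'(\ttc_\rho(x)) \ne x\}
\]
has size at most $2\varepsilon\cdot 2^r\le t$, where $t=\lceil\varepsilon 2^{r+1}\rceil$. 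In particular, at most $t$ seeds can ever be bad for a given $x$, which will be the source of the factor $t$ on the left-hand side of the inequality.

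Next, I would exploit the computably bounded running time of $\ttc$. For each seed $\rho$, the deterministic map $\ttc_\rho$ is computable, so for any $m$-bit output $p$ we can effectively enumerate $\ttc_\rho^{-1}(p)$. Fix any $n$-bit $x$. Given an index $i\in[t]$ and an $m$-bit string $p=\ttc_{\Pi_i}(x)$ for some tuple $\Pi_1,\ldots,\Pi_t$ of seeds, the pair $(i,p)$ is a valid description of $x$ (relative to $\ttu'$ and the tuple) of length $\lceil\log t\rceil + m$. A standard averaging/union-bound argument over random tuples produces a fixed tuple $(\Pi_1,\ldots,\Pi_t)$ such that every $x\in M_k^{(n)}$ ($k=m-\Delta$) admits at least one good $\Pi_i$, since $(2\varepsilon)^t \le 2^{-t}$ and $|M_k^{(n)}|\le 2^{k+1}$.

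Then I would iterate: for each $j=1,\dots,t$, construct a decompressor $\ttd_j$ that consumes one additional seed from the tuple and invokes the previous level, so that $\ttd_j$-descriptions of $x$ have length decreasing by $\Delta + O(1)$ per level (each level applies the $\Delta$-optimality of $\ttc$ to $\ttd_{j-1}$). Starting from the trivial description of length $n$ and chaining $t$ such reductions, the final description has length at most
\[
n \;-\; t(\Delta+O(1)) \;+\; r \;+\; \log(2/\varepsilon),
\]
where the $r$ bits record the seed list used at all levels and $\log(2/\varepsilon)$ accounts for the probability-doubling incurred. Since descriptions must be nonnegative in length, this yields $n \le r + \log(2/\varepsilon) + t(\Delta+4)$, which is exactly the stated bound after rearrangement.

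The principal obstacle is Step~3: setting up the recursion so that the accumulated overhead takes the multiplicative form $t(\Delta+4)$ rather than the weaker additive form $r+\Delta$ that a single derandomization would yield. This is where computability of $\ttc$ is essential, since each $\ttd_j$ must be definable from the preceding one using an effective enumeration of preimages under $\ttc_\rho$, without invoking the (possibly noncomputable) $\ttu'$ in its definition.
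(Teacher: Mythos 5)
Your opening moves are sound in spirit---reducing to a fixed deterministic decompressor and observing that for each $x$ the bad-seed set $B_x$ has size at most $2\varepsilon 2^r$---and the final bookkeeping you describe does rearrange to the stated inequality. The problem is that the entire weight of the argument rests on Step~3, which you yourself flag as ``the principal obstacle,'' and that step as written does not go through. You claim to build a chain of decompressors $\ttd_1,\dots,\ttd_t$ where each level ``decreases'' the description length of $x$ by $\Delta + O(1)$. But $\Delta$-optimality is a one-way guarantee in the opposite direction: it says that if $m \ge \C_\ttd(x) + \Delta$ then the $m$-bit output of $\ttc$ suffices, i.e.\ applying the optimality to a decompressor $\ttd$ produces codes that are \emph{longer} than $\C_\ttd(x)$ by at least~$\Delta$. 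Nothing in the definition lets you recurse downward and shave $\Delta$ bits off per level. Also, your union bound in Step~2 requires $2^{k+1}(2\varepsilon)^t < 1$ with $k = m - \Delta$ and $t = \lceil \varepsilon 2^{r+1}\rceil$, which is not free---it is essentially a reformulation of the inequality you are trying to prove, so you would be assuming the conclusion to launch the recursion. So there is a genuine gap: no mechanism in the proposal produces the multiplicative factor $t$ on the left.

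The paper's route is quite different and, once the machinery is in place, much shorter. By Lemma~\ref{lem:compressor_to_conductor}, a $\Delta$-optimal compressor with $r$ random bits yields, for each target size~$m$, a $(2^{m-\Delta},\varepsilon)$-conductor $\ff:\{0,1\}^n\to\{0,1\}^m$ of degree $D=2^r$, hence a $(K,\varepsilon)$-condenser. One then plugs directly into the degree lower bound for condensers (Proposition~\ref{prop:degreeLowerboundCondensers}), choosing $K = 2D/\varepsilon$ and $m = \lceil\log K\rceil + \Delta$, so that $\log(\#\mcX/K) = n - r - \log(2/\varepsilon)$ and $\log(\#\mcY/K) \le \Delta+1$; the condenser bound then reads precisely $\lceil \varepsilon 2^{r+1}\rceil \ge \tfrac{n - r - \log(2/\varepsilon)}{\Delta + 4}$. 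The condenser lower bound itself is proved by a probabilistic-method counting argument (appendix~\ref{s:condenserLowerbounds}): choose a random heavy set $Y\subseteq\mcY$ of size just under $\varepsilon K$, and show that the expected number of inputs $x$ for which at least a $2\varepsilon$ fraction of seeds map into $Y$ is at least~$K$. The multiplicative factor $t \approx 2\varepsilon D$ on the left-hand side comes out of the exponent in the binomial-tail estimate there, not from any chaining of decompressors. If you want to save your approach, that is the mechanism you would need to replicate; but it is cleaner to first pass to conductors as the paper does, which also sidesteps the issue of whether $\ttu'$ is computable (the conductor reduction works purely set-theoretically).
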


\noindent
Recall that for a deterministic compressor we have $r=0$. 
If such a compressor is $\Delta$-optimal and has a computably bounded running time, 
then for $\varepsilon = 1/2$, this implies $\Delta \ge n-6$. 
As a warm-up, let us prove in a direct way, a slightly stronger bound for deterministic compressors. 

\begin{lemma*}
  If a $\Delta$-optimal compressor is deterministic, then $\Delta \ge n - 1$.
\end{lemma*}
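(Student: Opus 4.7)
The plan is a pigeonhole contradiction against a tiny two-point decompressor. Suppose for contradiction that $\Delta < n-1$ at the inputs used below. Fix $\varepsilon = 1/2$ and set $m = \Delta + 1 < n$. Since $\ttc$ is deterministic, $\ttc_{\varepsilon, m}$ is an honest function; restricted to $\{0,1\}^n$ it has a domain of size $2^n$ but a codomain of size at most $2^m < 2^n$, so pigeonhole yields distinct strings $a_1, a_2 \in \{0,1\}^n$ with $\ttc_{\varepsilon,m}(a_1) = \ttc_{\varepsilon,m}(a_2)$.

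Next I define a decompressor $\ttd$ by $\ttd(0) = a_1$, $\ttd(1) = a_2$, undefined elsewhere, so that $\C_\ttd(a_i) = 1$ for $i = 1,2$. Since $m = 1 + \Delta \ge \C_\ttd(a_i) + \Delta$, the pair $(a_i, m)$ falls within the scope of the $\Delta$-optimality guarantee from Definition~\ref{def:optimal_compressor}. This produces a decompressor $\ttd'$ (deterministic, by the definition of ``decompressor'') satisfying
\[
\Pr[\ttd'(\ttc_{\varepsilon,m}(a_i)) = a_i] \;\ge\; 1 - \varepsilon \;=\; 1/2
\]
for both $i$. Because both $\ttc$ and $\ttd'$ are deterministic, this probability is either $0$ or $1$, and the $\ge 1/2$ bound forces the equality $\ttd'(\ttc_{\varepsilon,m}(a_i)) = a_i$. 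But then $\ttc_{\varepsilon,m}(a_1) = \ttc_{\varepsilon,m}(a_2)$ gives $a_1 = \ttd'(\ttc_{\varepsilon,m}(a_1)) = \ttd'(\ttc_{\varepsilon,m}(a_2)) = a_2$, a contradiction. Hence $\Delta \ge n - 1$.

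I do not expect any substantive obstacle: the whole argument rests on pigeonhole and on the observation that deterministic universality forces $\ttc_{\varepsilon,m}$ to be injective on $\{x : \C_\ttd(x) \le m - \Delta\}$ for every decompressor $\ttd$, already witnessed by the trivial two-point $\ttd$ with $k = 1$. The only mild subtlety is that $\Delta$ is formally allowed to depend on $(\varepsilon, m, x)$, but the argument invokes only the two triples $(\varepsilon, m, a_1)$ and $(\varepsilon, m, a_2)$ at a single value of $m$, so it suffices to read ``$\Delta < n-1$'' as an upper bound of $\Delta$ over those two inputs.
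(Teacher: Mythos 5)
Your proof is correct and follows essentially the same pigeonhole-plus-two-point-decompressor argument as the paper; both use a target length $m < n$ to force a collision, and then a trivial decompressor $\ttd$ giving those two colliding strings complexity $1$ so that $\Delta$-optimality forces injectivity. The paper simply phrases it by contraposition (``assume the compressor is $(n-2)$-optimal'') and picks $m = n-1$ outright, which also sidesteps the small circularity you introduce by setting $m = \Delta + 1$ when $\Delta$ is formally a function of $m$ (among other arguments); fixing $m = n-1$ from the start, as the paper does, yields the same conclusion and cleanly dodges this self-reference.
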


\begin{proof}
  We show the lemma by contraposition.
  Assume there exists an $(n\mathop{-}2)$-optimal compressor.
  There are less than $2^n$ different outputs of length~$m = n-1$. 
  Thus given target $m$, there exist $2$ different $n$-bit $x$ and $x'$ that are mapped to the same string.
  Consider a decompressor~$\ttd$ 
  for which $\C_\ttd(x), \C_\ttd(x') \le 1$, and hence, $m \ge \C_\ttd(x) + (n-2)$. 
  By $(n\mathop{-}2)$-optimality, both $x$ and $x'$ must be decompressed correctly, 
  but this is impossible since they are mapped to the same output. 
  Hence, the compressor is not $\Delta$-optimal.
\end{proof}

\bigskip
\noindent
We now prove  the second item of  Proposition~\ref{prop:lowerbounds}.
Recall that $\Delta$-optimal compressors define conductors, and hence condensers, with small 
output size, see Lemma~\ref{lem:compressor_to_conductor}. 

In the pseudorandomness literature, condensers are usually defined as $2$-argument functions
$f \colon \mcX \times \mcD \rightarrow \mcY$, where the size of $\mcD$ is called the {\em degree}, 
see also appendix~\ref{s:standard_def}. For such $f$, the function $x \mapsto f(x,U_\mcD)$ is a probabilistic 
1-argument function as above. 
The number of random bits needed for its evaluation is $\log \# \mcD$, provided $\# \mcD$ is a power of 2.

We obtain lower bounds for the amount of randomness from degree lower bounds of condensers, 
such as given in~\cite{nis-zuc:j:extract,rad-tas:j:extractors}. 
In particular, the second item of Proposition~\ref{prop:lowerbounds} follows by applying the next 
lower bound for a suitable value of~$K$.

\newcommand{\propDegreeLowerboundCondensers}{
  If $f: \mcX \times [D] \rightarrow \mcY$ is a $(K,\varepsilon)$-condenser 
  with $\# \mcY \ge 2K \ge 4D/\varepsilon$ and $\varepsilon \le 1/2$, then
  \[ \wider
  \lceil 2\varepsilon D\rceil \ge \frac{\log ({\# \mcX}/{K})}{3 + \log (\# \mcY/K)}.
  \]
}

\begin{proposition}\label{prop:degreeLowerboundCondensers}
  \propDegreeLowerboundCondensers
\end{proposition}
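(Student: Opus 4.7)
The plan is to prove the contrapositive: assume the conclusion fails, so that $\#\mcX > K \cdot (8 \cdot \#\mcY/K)^{t}$ with $t = \lceil 2\varepsilon D\rceil$, and construct a set $S \subseteq \mcX$ of size $K$ together with a subset $Y^* \subseteq \mcY$ for which the induced distribution $P_S = f(U_S, U_{[D]})$ satisfies $P_S(Y^*) > |Y^*|/K + \varepsilon$. Such an $S$ has $(1/K)$-excess strictly larger than $\varepsilon$, contradicting the condenser condition. The standing hypotheses $\varepsilon \le 1/2$ and $2K \ge 4D/\varepsilon$ give the inequalities $t \le D$ and $t \le \varepsilon^2 K \le K/4$, leaving enough slack to work with.

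The construction is a classical pigeonhole/double-counting argument. First I would fix a target size $Y$ slightly smaller than $\varepsilon K$ (with the precise value tuned at the end), and call an element $x \in \mcX$ \emph{$Y^*$-heavy} for a subset $Y^* \subseteq \mcY$ of size $Y$ if at least $t$ of its $D$ seed-images $f(x, 1), \ldots, f(x, D)$ lie in $Y^*$. Any set $S$ consisting of $K$ heavy elements immediately violates the condenser property: each contributes at least $t$ edges into $Y^*$, so $P_S(Y^*) \ge Kt/(KD) = t/D \ge 2\varepsilon$, whereas $|Y^*|/K + \varepsilon < 2\varepsilon$. Hence the remaining task is to show that under the assumed violation, some $Y^*$ admits at least $K$ heavy elements.

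For the counting, it suffices to treat the case where the seed-images of each $x$ are distinct (otherwise $x$ is only easier to make heavy, so the bound improves). The number of pairs $(x, Y^*)$ with $x$ heavy for $Y^*$ is at least $\#\mcX \cdot \binom{D}{t}\binom{\#\mcY - D}{Y - t}$, since a fixed $t$-subset of $f(x, \cdot)$ extends to a size-$Y$ target set in $\binom{\#\mcY - D}{Y-t}$ ways. Dividing by $\binom{\#\mcY}{Y}$, the average number of heavy elements per $Y^*$ is at least
\[
   \#\mcX \cdot \binom{D}{t}\binom{\#\mcY - D}{Y - t}\Big/\binom{\#\mcY}{Y}.
\]
Elementary binomial estimates, together with the hypotheses $\#\mcY \ge 2K$ and $K \ge 2D/\varepsilon$, reduce this to a bound of the form $\#\mcX \cdot (cY/\#\mcY)^t$ for a universal constant $c$. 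Tuning $Y$ just below $\varepsilon K$ and plugging in the assumed violation $\#\mcX > K(8\#\mcY/K)^t$, the average exceeds $K$, so some $Y^*$ admits at least $K$ heavy elements and we obtain the desired contradiction.

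The main obstacle will be extracting the precise factor of $8$ (equivalently the additive constant $3$) from the binomial manipulation. The hypothesis $\#\mcY \ge 2K$ is crucial here: it keeps the tail factor $(1 - Y/\#\mcY)^{D-t}$ bounded below by a universal constant, whereas a naive estimate would leak a spurious $2^{-D}$ factor and destroy the bound. A few corner cases --- for instance $\varepsilon K < 1$ or $t = D$ --- will have to be dispatched separately, but in those regimes the right-hand side of the claimed inequality is dominated trivially by $\log(\#\mcY/K)$, so a direct argument suffices.
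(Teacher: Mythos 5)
Your proposal is correct and follows essentially the same argument as the paper: prove the contrapositive, define a notion of "heavy" element relative to a small target set $Y^*$ (size just below $\varepsilon K$), observe that $K$ heavy elements immediately violate the $(1/K)$-excess bound, and then show by a counting/probabilistic-method argument over all size-$|Y^*|$ subsets that some $Y^*$ admits at least $K$ heavy elements under the assumed violation. The paper phrases the last step as an expectation over a uniformly random $Y$ rather than as double counting over pairs $(x, Y^*)$, but these are the same calculation; the paper also fixes seeds to hit $Y$ at exactly (rather than at least) $t$ positions, which is exactly your observation that the term $\binom{D}{t}\binom{\#\mcY - D}{|Y^*| - t}$ suffices as a lower bound. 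Your sketch leaves the final binomial estimate unexecuted, but the paper's execution ($\binom{D}{2\varepsilon D}\ge (\tfrac{1}{2\varepsilon})^{2\varepsilon D}$, the ratio product for the hit probability, and $(1-\varepsilon)^{1/\varepsilon}\ge 1/4$ for the tail) is indeed the elementary computation you anticipate and produces the constant $8 = 2^3$ exactly as you predict.
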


\noindent
The proof is given in appendix~\ref{s:condenserLowerbounds}.

\begin{proof}[Proof of the second item of Proposition~\ref{prop:lowerbounds}.]
  If $\ttc$ is a $\Delta$-optimal compressor, then for all $n,m$ and $\varepsilon$ the function
  $
  \ff : \{0,1\}^n \rightarrow \{0,1\}^{m} 
  $
  defined by $\ff(x) = \ttc_{\varepsilon,m}(x)$ is a $(2^{m-\Delta},\varepsilon)$-conductor, and hence, a $(2^{m-\Delta},\varepsilon)$-condenser.
  We apply the lower bound of Proposition~\ref{prop:degreeLowerboundCondensers} 
  with $D = 2^r$ and $K = 2D/\varepsilon$, for the target size given by $m = \lceil\log K\rceil + \Delta$. 
  Thus, we have $\mcY = \{0,1\}^m$ and $\log (\#\mcY/K) \le \Delta + 1$. 
  The formula of Proposition~\ref{prop:degreeLowerboundCondensers} implies the inequality of the second item.
\end{proof}

\subsection*{Lower bound for the overhead of optimal compressors}

The proof of the first item of Theorem~\ref{prop:lowerbounds}, uses the second item and 
Lemma~\ref{lem:reduce_randomness_invertibleFunc}  below. This lemma essentially states that
a simple modification of any 
invertible function can reduce its use of randomness
to $r \le \Delta + O(1)$.  In the proof we use another lemma.

\begin{lemma}\label{lem:sampler}
  Let $b$ be a nonnegative integer.
  If a measure $\mu$ over $\mcY$ is supported on a set of size at most $2^d$, 
  then there exists a sampling algorithm that uses $d+b$ random bits, 
  and generates samples from $\mcY$ such that each element $y \in \mcY$ appears
  with probability at most $(1+3 \cdot 2^{-b})\mu(y)$. 
\end{lemma}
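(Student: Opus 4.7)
The plan is to construct the sampler by a direct quantization-and-partition argument. Let $S$ denote the support of $\mu$, so $|S| \le 2^d$ by hypothesis, and set $N = 2^{d+b}$. I will partition the $N$ equally likely seeds $\rho \in \{0,1\}^{d+b}$ into disjoint blocks $B_y$, one for each $y \in S$, with sizes chosen so that the output probability stays within the target multiplicative bound.

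Concretely, for each $y \in S$ define the capacity $C_y = \lfloor (1 + 3\cdot 2^{-b})\,\mu(y)\, N\rfloor$, so that $C_y/N \le (1 + 3\cdot 2^{-b})\mu(y)$ by construction. The key step is to check that the total capacity exceeds $N$, which will let me greedily place all $N$ seeds into disjoint blocks $B_y \subseteq \{0,1\}^{d+b}$ with $|B_y| \le C_y$. Using $\lfloor x\rfloor \ge x-1$ and $\sum_{y \in S}\mu(y) = 1$, I obtain
\[
\sum_{y \in S} C_y \;\ge\; (1 + 3\cdot 2^{-b})\, N - |S| \;\ge\; N + 3\cdot 2^d - 2^d \;\ge\; N,
\]
which gives the required slack. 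Enumerating $S$ and greedily assigning seeds to $B_y$ until either $|B_y|$ reaches $C_y$ or all $N$ seeds are placed produces a partition of $\{0,1\}^{d+b}$. The sampler draws $\rho$ uniformly from $\{0,1\}^{d+b}$ and outputs the unique $y \in S$ with $\rho \in B_y$; by construction its output distribution satisfies $\Pr[\text{output } y] = |B_y|/N \le C_y/N \le (1 + 3\cdot 2^{-b})\mu(y)$ for $y \in S$, while $y \notin S$ is output with probability $0 = (1+3\cdot 2^{-b})\mu(y)$.

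There is essentially no real obstacle: the hypothesis $|S| \le 2^d$ is exactly what is needed to absorb the $|S|$ rounding losses from the floor function into the $3 \cdot 2^d$ slack afforded by the multiplicative factor $(1 + 3 \cdot 2^{-b})$. In fact the same argument already goes through with the tighter factor $(1 + 2^{-b})$; the constant $3$ presumably leaves extra room for subsequent inequalities in the randomness lower bound of Proposition~\ref{prop:lowerbounds}.
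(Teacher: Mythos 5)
Your argument is correct, and it is a genuinely different --- and simpler --- route than the one in the paper. The paper proves the lemma only for the case $b=0$ (remarking that the general case is similar) via a three-step construction: it first \emph{truncates} the measure by zeroing out all $y$ with $\mu(y) < 1/(2D)$ where $D = 2^d$, then \emph{rescales} the remaining mass (a factor of at most $2$ since the discarded mass is at most $1/2$), and finally \emph{rounds} each surviving probability up or down to a multiple of $1/D$ so that the rounded values again sum to $1$; each of the rescaling and the rounding contributes a multiplicative factor of at most $2$ for surviving elements, giving $\nu(y) \le 4\mu(y)$. Your proof bypasses the truncation and rescaling entirely: you directly allocate seeds to each $y$ in proportion to $\mu(y)$, rounding \emph{down} to integer capacities $C_y = \lfloor (1+3\cdot 2^{-b})\mu(y)N\rfloor$, and use the hypothesis $|S| \le 2^d$ to show that the at most $|S|$ units of rounding loss are more than paid for by the multiplicative slack $3 \cdot 2^{-b} N = 3 \cdot 2^d$. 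This is a single clean inequality, works uniformly for all $b$ rather than just $b=0$, avoids the small awkwardness in the paper of having to route the zeroed-out elements into the ``round down'' group, and --- as you correctly observe --- even goes through with the tighter factor $1 + 2^{-b}$, so the constant $3$ in the lemma is not needed by your construction. Both approaches yield a deterministic partition of the $2^{d+b}$ seeds into cells labeled by elements of the support, so the resulting samplers are of the same form; the paper's proof buys nothing extra here, and yours is the one I would keep.
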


\begin{proof}
  Let $D = 2^d$.
  The proof for arbitrary $b$ is similar to the proof for $b=0$, which we present here.
  We show that there exists a measure $\nu$ whose values are multiples of $1/D$ 
  such that $\nu(y) \le 4\mu(y)$ for all $y \in \mcY$. This is done by rounding.
  The sampling procedure satisfying the conditions of the lemma, simply outputs a random sample of~$\nu$, 
  and we can generate such a sample using $d$ random bits.\footnote{ 
    The following sample procedure can be used: 
    pick a random integer~$i$ in the interval~$[2^d]$, 
    and output the largest~$y$ in the support of~$\nu$ with $\sum_{y' \prec y} \nu(y') \le i/2^d$, 
    where $\prec$ is some fixed order on~$\mcY$. 
  }

  Let $\mu'(y)$ be $0$ if $\mu(y) < 1/2D$, and $\mu(y)$ otherwise.
  Note that $Z = \sum_y \mu'(y) \ge 1/2$ since $\mu$ is supported on at most $D$ elements.
  $\mu'/Z$ is a measure. We obtain the measure $\nu$ by rounding all values of $\mu'/Z$ to multiples of $1/D$.
  By rounding up a suitable set of elements and rounding down the others, we indeed obtain a measure, because:
  \\- if we round up all values, the sum of $\sum_y \nu(y)$ is at least $1$, 
  \\- if we round down, the sum is at most $1$, and
  \\- changing the choice for one element $y$, changes the sum by a multiple of~$1/D$.
  \\We verify that $\nu(y) \le 4\mu(y)$.
  Recall that $1/Z \le 2$. If $\mu(y) \ge 1/(2D)$, then 
  \[
  \wider
  \nu(y) \le \frac{\mu'(y)}{Z} + \frac{1}{D} \le 2\cdot\mu(y) + 2 \cdot \mu(y).
  \qedhere
  \]
\end{proof}

\begin{lemma}\label{lem:reduce_randomness_invertibleFunc}
  Assume $\varepsilon \le 1/4$. 
  For every $(K,\varepsilon)$-invertible function with at most $M$ values, 
  there exists a $(K,2\varepsilon)$-invertible function with at most $M+K$ 
  values that can be evaluated using at most $\lceil \log (M/K) \rceil + 3$ bits of randomness.
\end{lemma}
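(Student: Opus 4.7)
The idea is to use Lemma~\ref{lem:sampler} to simulate $\ff$ with far fewer random bits by sampling approximately from the distribution $\mu_x$ of $\ff(x)$, after suitably truncating and extending it. To match the target budget $\lceil\log(M/K)\rceil+3$, I take the parameters $d=\lceil\log(M/K)\rceil$ and $b=3$ in Lemma~\ref{lem:sampler}, giving multiplicative slack $1+3/8$ and forcing the input measure to be supported on at most roughly $M/K$ elements. Since $\mu_x$ itself may be supported on all of $\mcY$, one must first replace it by a measure $\tilde\mu_x$ that respects this support bound, at the price of an extra $\varepsilon$-slack in the invertibility error.

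Concretely, for each $x\in\mcX$ I would partition $\mcY$ into a heavy part $H_x=\{y:\mu_x(y)\ge K/M\}$, which has at most $M/K$ elements by Markov's inequality, and a light part $L_x=\mcY\setminus H_x$. I then adjoin $K$ fresh symbols $*_1,\dots,*_K$, producing an output alphabet $\mcY'=\mcY\cup\{*_1,\dots,*_K\}$ of size at most $M+K$, and define $\tilde\mu_x$ to agree with $\mu_x$ on $H_x$ and to reroute the light mass $\mu_x(L_x)$ onto the special symbols by a rule specified in the course of the proof. The resulting $\tilde\mu_x$ has support of size at most $M/K+K$, which is within the budget of Lemma~\ref{lem:sampler} in the nontrivial regime $K^2\le M$ (the degenerate regime $K^2>M$ is handled separately, since there $\ff$ itself already uses $O(\log(M/K))$ random bits after padding). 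Lemma~\ref{lem:sampler} then produces a sampler for $\tilde\mu_x$ using the claimed number of random bits, which I take as the definition of $\ff'(x)$. The new decoder $g'_S$ simply invokes the original $g_S$ on outputs in $\mcY$ and applies the dedicated rule for the special symbols.

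The hard part will be designing the redistribution of the light mass onto the $K$ special symbols (which must be fixed before $S$ is known) together with the matching rule of $g'_S$ on special symbols (which can depend on $S$), in such a way that the failure probability contributed by special-symbol events is at most $\varepsilon$ for every set $S$ of size at most $K$ with $x\in S$. The delicate case is when $\mu_x$ is very spread out, so that almost every sample of $\ff'(x)$ is a special symbol and the decoder must recover $x$ using only the symbol's index together with the list~$S$. My plan is to exploit the original invertibility of $\ff$: for every $S$, the decoder $g_S$ induces a near-partition of $\mcY$ into at most $K$ preimage classes $g_S^{-1}(x)$ covering all but $\varepsilon$ of the mass of each $\mu_x$, and by routing light mass through a fixed $\mcY$-level scheme that is compatible with this partition structure (for instance via a deterministic bucketing of $\mcY$ into $K$ cells), the extra failure per pair $(S,x)$ can be charged to the $\varepsilon$-failure of $g_S$ itself. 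A final union bound that combines this $\varepsilon$ with the $1+3/8$ multiplicative slack of Lemma~\ref{lem:sampler} will then yield the required $(K,2\varepsilon)$-invertibility of $\ff'$.
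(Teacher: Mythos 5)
Your per-output heavy/light split is not the same decomposition as the paper's, and the "hard part" you flag at the end is not a detail that can be filled in along the lines you sketch — it is where the argument breaks.

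The paper's proof decomposes the \emph{inputs}, not the outputs: it lets $P_x$ be the support of $\mu_x$ with a maximal-cardinality set of measure $\le\varepsilon$ removed, declares $x$ ``bad'' if $\#P_x > M/K$, and — this is the crucial step — shows there are \emph{fewer than $K$ bad inputs}. The count works because $\#P_x$ lower-bounds the number of $y$ with $g_S(y)=x$ for any $S\ni x$, so $K$ bad inputs would force more than $M$ outputs for $g_S$, a contradiction. Because there are $<K$ bad inputs, each one can be sent \emph{deterministically and collision-free} to its own fresh symbol $y_{K'}$, and the sampler of Lemma~\ref{lem:sampler} is only ever applied to good inputs, whose surviving measure is supported on $\le M/K$ points. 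The decoder then handles the fresh symbols without consulting $S$ at all.

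Your heavy/light split $H_x=\{y:\mu_x(y)\ge K/M\}$ does bound $\#H_x\le M/K$, but it gives no control whatsoever on how many inputs have $\mu_x(L_x)$ large. There can be far more than $K$ inputs for which essentially all mass is light, and for all of them you plan to reroute that mass onto the same $K$ fresh symbols. At that point the decoder, upon seeing $*_j$ with $x\in S$, must name $x$ from $(j,S)$ alone; if several $x\in S$ route their light mass to $*_j$ this is impossible. Your proposed fix — a single fixed bucketing of $\mcY$ into $K$ cells made ``compatible with'' the preimage partition of $g_S$ — cannot exist, because $g_S$ induces a different $K$-way partition of $\mcY$ for every $S$, and a single advance bucketing cannot refine (or be refined by) all of them. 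Concretely: if $g_S$ maps both of two values $y,y'$ lying in the same fixed bucket to different elements $x,x'\in S$, and both have light mass there, the decoder sees the same $*_j$ and cannot tell $x$ from $x'$; nothing about the $\varepsilon$-failure budget of $g_S$ rescues this, because the failure is caused by your rerouting, not by $g$. So the extra error is not ``chargeable to the $\varepsilon$-failure of $g_S$'' as you hope. (There is also a smaller arithmetic issue: even when $K^2\le M$, the support $M/K+K$ of your $\tilde\mu_x$ can exceed $2^{\lceil\log(M/K)\rceil}$ when $M/K$ is a power of two, so Lemma~\ref{lem:sampler} does not apply with your chosen $d$; the paper avoids this entirely because the fresh symbols are never part of the sampler's support — they are produced deterministically.)

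What you are missing, and what you would have to prove to salvage any rerouting scheme, is a bound of the form ``at most $K$ inputs have a non-negligible amount of mass that must be rerouted.'' The paper obtains exactly this, but only for its own truncation $P_x$ (which discards a measure-$\varepsilon$ tail, so the rerouted mass for good inputs is zero, and for bad inputs is everything); the per-$y$ threshold $K/M$ does not yield such a bound. Switching to the paper's definition of $P_x$ and using the counting argument against a $K$-element list of bad inputs is not an optional simplification — it is the load-bearing step.
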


\begin{proof}
  Let $\ff$ be $(K,\varepsilon)$-invertible with inverse~$g$. 
  For a given input~$x$, consider the list of all possible outputs that appear with positive probability.
  Let $P_x$ be the set of these outputs after removing a set of measure at most~$\varepsilon$ with maximal cardinality.
  We first prove the following 2 properties.
  \begin{itemize}[leftmargin=*]
    \item 
      Let $S$ be a list of $K$ inputs.
      For all $x \in S$: $\# P_x$ is at least the number of values $y$ for which~$g_S(y) = x$.

    \item 
    There are less than $K$ inputs $x$ for which $\# P_x > M/K$. 
  \end{itemize}
  For the first property, note that by definition of $(K,\varepsilon)$-inverse, for a given $x$, 
  the measure of outcomes $y = \ff(x)$ with~$g_S(y) \not= x$ is at most~$\varepsilon$.
  Thus, the property follows by choice of~$P_x$.

  For the second property, consider the set of inputs $x$ for which the property holds. 
  Assume there are at least $K$ such inputs, and let $S$ be a list of precisely $K$ 
  such inputs. By the first property, for each such element there are 
  more than $M/K$ outputs $y$ with $g_S(y)=x$. 
  But since there are only $M$ outputs and $g_S$ is deterministic, this is impossible. 
  Our assumption must be false, and hence there are less than $K$ such inputs. 
  The 2 properties are proven.

  \medskip
  \noindent
  {\em Construction of the $(K,2\varepsilon)$-invertible function~$\ff'$.}
  \\Let $y_1, \dots, y_K$ be new values, different from any value of~$\ff$.
  On input $x$ consider the set $P_x$.
  If $\# P_x > M/K$, compute the index $K'$ of $x$ in a list of all such inputs,
  and let the output be $y_{K'}$. (The second property implies $K' \le K$.)
  Otherwise, let $\mu_x$ be the measure obtained by normalizing the probabilities of $P_x$. 
  Sample an output $y$ from $\ff(x)$ using a sample procedure 
  satisfying the conditions of Lemma~\ref{lem:sampler} with $b=3$.

  \medskip
  \noindent
  By construction, this algorithm uses at most $\lceil \log (M/K) \rceil + 3$ bits of randomness.

  \medskip
  \noindent
  {\em Construction of the inverse~$g'$ of~$\ff'$.} 
  \\For values $y = y_{K'}$, the value of $g'_S(y)$ is the $K'$-th 
  input for which $\# P_x > M/K$, (thus in this case, the value does not depend on $S$).
  For the other values of $y$, we simply output~$g_S(y)$.

  \medskip
  \noindent
  It remains to bound the probability that $g'_S(\ff'(x)) \not= x$ by~$2\varepsilon$.
  For $x$ such that $\# P_x > M/K$, this probability is~$0$, by construction.
  For the other $x$, the rescaling in the definition of $\mu_x$ amplifies the 
  probability of $g_S(\ff(x))$ by a factor at most~$1/(1-\varepsilon) \le 4/3$.
  The sampling procedure from Lemma~\ref{lem:sampler} with $b=3$ can amplify this probability 
  by at most a factor~$1+3/8$. Together, the amplification is at most a factor~$2$.
  The invertibility condition is satisfied.
\end{proof}

\begin{proposition*}[First part of Proposition~\ref{prop:lowerbounds}, restated]
      $
      \Delta \ge \log \tfrac{n}{\epsilon} - \log \log \tfrac{n}{\epsilon} - 8 
      $
      \,if\, $2^{-n/4} \le \epsilon \le 1/4$.
\end{proposition*}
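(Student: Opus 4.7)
The plan is to follow the recipe used for the second item of Proposition~\ref{prop:lowerbounds} (turn the $\Delta$-optimal compressor into a condenser and invoke the degree lower bound of Proposition~\ref{prop:degreeLowerboundCondensers}), but with a preliminary randomness-reduction step coming from Lemma~\ref{lem:reduce_randomness_invertibleFunc}. The effect of the latter is to replace the arbitrary randomness budget by $\Delta + O(1)$ bits, at the cost of doubling $\varepsilon$ and slightly enlarging the range; this is exactly the reduction announced in the preamble to Lemma~\ref{lem:reduce_randomness_invertibleFunc}. Substituting $r \le \Delta + O(1)$ into the formula of the second item already essentially yields the first item, so the proof reduces to bookkeeping.

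Concretely, given a $\Delta$-optimal $\ttc$ with computably bounded running time, the function $\ff_m(x) = \ttc_{\varepsilon,m}(x)$ on inputs of length $n$ is $(2^{m-\Delta},\varepsilon)$-invertible. Applying Lemma~\ref{lem:reduce_randomness_invertibleFunc} with $M = 2^m$ and $K = 2^{m-\Delta}$ produces a $(2^{m-\Delta}, 2\varepsilon)$-invertible function $\ff'$ using at most $\Delta + 3$ random bits and at most $2^{m+1}$ outputs. By Lemma~\ref{lem:compressor_to_conductor}, $\ff'$ is in particular a $(2^{m-\Delta}, 2^{m-\Delta}, 2\varepsilon)$-condenser. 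I then apply Proposition~\ref{prop:degreeLowerboundCondensers} with $D = 2^{\Delta+3}$, $K = 2^{m-\Delta}$ and error parameter $2\varepsilon$, picking the target length so that $m - \Delta = \Delta + 3 + \lceil\log(1/\varepsilon)\rceil$. The hypothesis $\#\mcY \ge 2K \ge 4D/(2\varepsilon)$ is then satisfied, $\log(\#\mcY/K) \le \Delta + 1$, so the denominator $3 + \log(\#\mcY/K)$ is at most $\Delta + 4$, and the numerator $\log(\#\mcX/K) = n - (m-\Delta) \ge n - \Delta - \log(1/\varepsilon) - 4$.

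The resulting inequality reads $(\Delta + 4)\,\lceil \varepsilon \cdot 2^{\Delta+5}\rceil \ge n - \Delta - \log(1/\varepsilon) - O(1)$. If we now suppose for contradiction that $\Delta < \log(n/\varepsilon) - \log\log(n/\varepsilon) - 8$, then $\varepsilon \cdot 2^{\Delta+5} < n/(8\log(n/\varepsilon))$, whence the left-hand side is at most $n/8 + \log(n/\varepsilon)$; on the other hand, using $\log(1/\varepsilon) \le n/4$ (from $\varepsilon \ge 2^{-n/4}$) and $\Delta \le \log(n/\varepsilon)$, the right-hand side is at least $n/2 - O(\log n)$, which is strictly larger than $n/8 + \log(n/\varepsilon)$ in the regime $\varepsilon \le 1/4$ once the bound is non-vacuous (i.e.\ once $\log(n/\varepsilon) - \log\log(n/\varepsilon) \ge 8$, which already forces $n/\varepsilon$ to be reasonably large). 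The main obstacle I anticipate is purely bookkeeping: tuning the exact choice of $m$ and tracking the ceilings inside Lemma~\ref{lem:reduce_randomness_invertibleFunc} and Proposition~\ref{prop:degreeLowerboundCondensers} carefully enough to land on the explicit additive constant $8$ stated in the proposition, rather than some larger unspecified absolute constant.
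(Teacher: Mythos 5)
Your proposal takes the same route as the paper's proof: apply Lemma~\ref{lem:reduce_randomness_invertibleFunc} to cut the randomness of the induced invertible function to $\Delta+3$ bits (doubling $\varepsilon$ and adding one output bit), then feed the resulting condenser into Proposition~\ref{prop:degreeLowerboundCondensers} exactly as in the second item to obtain $\lceil \varepsilon 2^{\Delta+5}\rceil \ge (n-\Delta-4-\log\tfrac{1}{\varepsilon})/(\Delta+4)$, and verify that the claimed value of $\Delta$ would violate this. The paper compresses the final numeric step into ``a calculation shows,'' and your sketch of that calculation is consistent with it, modulo the explicit constant tracking you already flag.
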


\begin{proof}
  For a fixed target size~$m$, the compressor provides us with a $(2^{m-\Delta},\epsilon)$-invertible function.
  From Lemma~\ref{lem:reduce_randomness_invertibleFunc} we obtain a $(2^{m-\Delta}, 2\epsilon)$-invertible function 
  that can be evaluated with~$r = \Delta+3$ random bits, and whose output is 1 bit longer.
  To this function we apply the same argument as in the proof of the second item of the Proposition.
  We obtain that
  \[
    \left\lceil \epsilon 2^{\Delta+5} \right\rceil \;\ge\; \frac{n-\Delta-4 - \log (1/\epsilon)}{\Delta + 4}.
   \]
   Note that the right-hand side is decreasing in $\Delta$ and the left-hand side is increasing.
  Hence, any value of $\Delta$ that makes the above equation false is a lower bound for the overhead of a decompressor. 
  A calculation shows that for all values of $n$ and $\epsilon$ with $2^{-n/4} \le \epsilon \le 1/4$,
  the assignment
  \[
  \Delta \;=\; \log \tfrac{n}{\epsilon} - \log \log \tfrac{n}{\epsilon} - 8.
  \]
  violates the inequality. This finishes the proof.
\end{proof}

\subsection*{Limitations for polynomial time compression}

\begin{question*}
  Suppose there exists an $O(\log \tfrac{n}{\varepsilon})$-optimal polynomial time computable compressor, 
  does this give us an explicit conductor of logarithmic degree?
\end{question*}

\noindent
We do not know the answer. 
But, we can prove something weaker: such a compressor 
implies the existence of a family of conductors 
of degree $O(2^\Delta)$ that can be computed by polynomial size circuits. 

\smallskip
We sketch the proof. 
We need to prove a variant of Lemma~\ref{lem:reduce_randomness_invertibleFunc} 
where the first invertible function is computable in polynomial time, and the second 
computable by a family of polynomial sized circuits.

Imagine, we want to evaluate the function $\ff'$ constructed in the above proof in polynomial time. 
To do this, we face 2 obstacles. 
The first is that counting the number of $n$-bit input strings $x'$ that lexicographically precede $x$, 
and satisfy $\# P_x > A$ for the given bound~$A$, takes exponential time. To solve this, 
we consider strings $x$ with $\# P_x > \mathbf{2}A$, and observe that 
at most $K/2$ strings satisfy this condition.
We handle such strings by recursively 
applying the algorithm to a $(K/2,\varepsilon)$-invertible function.

A second obstacle is that one needs to decide whether $P_x$ has size larger than~$A$, and 
if the algorithm for $\ff$ uses a very large amount of randomness, we can not brute-force 
search all random choices in polynomial time.
However, if $\Delta$ is logarithmic, then $A$ is polynomial. 
If $\# P_x$ is not too much above $A$, we can approximate $\# P_x$ in polynomial time
with a probabilistic algorithm. For this, we try a few random seeds, and inspect the frequencies 
with which the output values are sampled. With minor modification, such an approximation 
is enough to execute the plan above.

Finally, with a standard technique, we transform probabilistic algorithms to circuits. 
We obtain circuits of size $\poly(\tfrac{n}{\varepsilon}, 2^{\Delta})$ 
which map $n$-bit strings to $m$-bit strings, and 
compute $(2^{m-\Delta},\varepsilon)$-conductors. 

\section{Acknowledgements}
The first author is grateful to Andrei Romashchenko and Sasha Shen for useful discussions and for their insightful suggestions.

\bibliography{theory-3}
\bibliographystyle{alpha}
\appendix

\section{Proof of Lemma~\ref{lem:simple_inverse}}\label{sec:proof_simple_inverse}

\begin{lemma*}[Restated]
  \lemmaSimpleInverse
\end{lemma*}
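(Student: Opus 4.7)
My plan is a proof by contrapositive: assume no $x \in \mcX$ and $S \subseteq \mcX$ of size $< K$ satisfy $F_x \subseteq \bigcup_{z \in S,\, z \neq x} F_z$, and show $\#\mcY \geq \min\{K^2/4,\, \#\mcX - K/2\}$. First I dispose of trivial cases: if some $F_x = \emptyset$ take $S = \emptyset$, and if $F_{x_1} \subseteq F_{x_2}$ for distinct $x_1, x_2$ take $S = \{x_2\}$; so I may assume the family $\{F_x\}$ is a nonempty antichain.

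The next idea is to separate indices that have a ``unique'' element from the rest. Let $U = \{y \in \mcY : y \in F_x \text{ for exactly one } x\}$ and put $B = \{x : F_x \cap U \neq \emptyset\}$, $A = \mcX \setminus B$. Since the unique element attached to each $x \in B$ is shared with no other index, $|B| \leq |U| \leq \#\mcY$, hence $|A| \geq \#\mcX - \#\mcY$. If this is at most $K/2$, then $\#\mcY \geq \#\mcX - K/2$ and we are done. Otherwise $|A| > K/2$, and for each $x \in A$ every $y \in F_x$ lies in some $F_{x'}$ with $x' \neq x$; picking one such $x'(y)$ per $y$ produces a cover of $F_x$ of size $\leq |F_x|$. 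Under the contrapositive assumption this forces $|F_x| \geq K$ for every $x \in A$.

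We are thus reduced to the following combinatorial situation: a family of more than $K/2$ sets, each of size at least $K$, inside $\mcY^* := \mcY \setminus U \subseteq \mcY$, in which no set is covered by $K-1$ others. The aim is to show this forces $\#\mcY^* \geq K^2/4$. I would argue by double counting on the degrees $d_A(y) = \#\{x \in A : y \in F_x\}$. From $\sum_{y \in \mcY^*} d_A(y) = \sum_{x \in A} |F_x| \geq K|A|$ and Cauchy--Schwarz,
\[
  \sum_{y \in \mcY^*} d_A(y)^2 \;\geq\; \frac{K^2 |A|^2}{\#\mcY^*}.
\]
Since the left-hand side equals $\sum_{x, x' \in A} |F_x \cap F_{x'}|$, this provides a strong lower bound on the total pairwise overlap among the $F_x$ for $x \in A$.

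The main obstacle is converting this average-overlap bound into the existence of a single $F_{x^*}$ covered by fewer than $K$ other $F_z$. A naive greedy cover only reduces the uncovered portion of $F_{x^*}$ by a factor $1 - 1/|\mcX|$ per step, which is far too slow to finish in $K-1$ steps; one needs to choose $x^*$ cleverly (for instance so as to maximize $\sum_{y \in F_{x^*}} d_A(y)$, which by the Cauchy--Schwarz bound above is at least $K^2|A|/\#\mcY^*$) and then exploit the concentration of intersections. I expect the cleanest route, and the one suggested by the authors, is to package this quantitative extremal statement by invoking the general theorem on union-free families in Jukna, Theorem 8.13, which yields the required cover and closes the contrapositive.
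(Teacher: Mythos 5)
Your setup matches the paper's: the set you call $A$ (those $x$ with $F_x \subseteq \bigcup_{z\neq x}F_z$) is exactly the paper's set of candidates, your bound $|A| > K/2$ via the injection from $B$ into $\mcY$ is the paper's counting argument, and the observation that $|F_x|\geq K$ for all $x\in A$ (else a per-element cover of $F_x$ would already have size $<K$) is the paper's case split. So the first two-thirds of your proposal reproduce the paper's reduction.

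The gap is the final step, and it is a genuine one. You correctly observe that a greedy cover of a \emph{single} fixed $F_{x^*}$ is hopelessly slow, and you then reach for Cauchy--Schwarz, which lower-bounds total pairwise overlap but does not obviously produce any one set that is covered by $K-1$ others; you acknowledge you cannot close this, and fall back on citing Jukna's Theorem 8.13 without stating it or checking its hypotheses. Citing a result you haven't applied is not a proof, and the paper explicitly gives the elementary argument precisely to avoid that dependence. The missing idea is a \emph{different} greedy: do not fix a target $F_{x^*}$ in advance and try to cover it. Instead, process candidates $x_1,x_2,\dots \in A$ in sequence while maintaining a growing pool $Y\subseteq\mcY$ of ``used'' elements (initially empty) and the list $S$ of processed indices. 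At stage $i$, test whether $F_{x_i}$ adds at least $K/2$ \emph{new} elements to $Y$. If yes, add $x_i$ to $S$, add $F_{x_i}$ to $Y$, and continue; if no, stop and declare $x=x_i$. If you never stop within the first $K/2$ stages, then $|Y| > (K/2)\cdot(K/2) = K^2/4$, contradicting $\#\mcY < K^2/4$; so you stop at some $i\leq K/2$. At that point $F_{x_i}\cap Y$ is covered by the fewer than $K/2$ indices already in $S$, while $F_{x_i}\setminus Y$ has fewer than $K/2$ elements, each covered by one additional $z$ (possible since $x_i\in A$). The resulting $S$ has size $<K$, finishing the proof. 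Your Cauchy--Schwarz/degree approach is not obviously salvageable without significantly more work, whereas this threshold-stopping argument uses the $K^2/4$ bound on $\#\mcY$ directly.
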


\begin{proof}
  Note that by assumption on the size of $\mcY$, there exists at least one $x \in \mcX$ such that 
  \[
  F_x \; \subseteq \; \bigcup_{\bm{z \in \mcX}, z \ne x} F_z\;, 
  \]
  because if this was not the case, 
  we could build a 1-to-1 correspondence between elements of $\mcX$ and some elements of $\mcY$, 
  by mapping $z \in \mcX$ to an element in $F_z$ that appears in no other set $F_{z'}$,
  and this contradicts $\# \mcY < \# \mcX-K/2$. 
  In fact, by the same reasoning, there exist $K/2$ such elements $x_1, \dots, x_{K/2}$. 

  If there exists such an $x_i$ for which $F_{x_i}$ has size less than $K$, then 
  we choose $x = x_i$ and obtain $S$ by selecting for each $y \in F_x$ an element $z \in \mcX$ with $y \in F_z$.
  Otherwise, if each such set has size at least~$K$,  
  we construct $S \subseteq \mcX$ together with a set $Y \subseteq \mcY$ iteratively as follows.
  Initially, $S$ and $Y$ are empty. At stage $i$, we check whether adding $F_{x_i}$ to $Y$ 
  increases its cardinality by at least~$K/2$. If this is indeed so, we add $x_i$ to $S$ and $F_{x_i}$ to $Y$, and proceed to stage $i+1$.
  Otherwise, we choose $x = x_i$, and exit the iterative process. 
  Finally, for each  $y \in F_{x} \setminus Y$, we add an element $z$ to~$S$ for which $y \in F_z$.

  In the first case, the inclusion of the lemma is satisfied by construction. 
  In the second case, we first need to verify
  that there exists an index $i \le K/2$ for which the cardinality of $Y$ increases by at most $K/2$.
  Indeed, assume that for all such $i$ the cardinality increases by more than this amount, then after $K/2$
  stages, the cardinality of  $Y$ exceeds $(K/2) \cdot (K/2)$. On the other hand, $Y \subseteq \mcY$ and 
  by assumption, the cardinality of $Y$ is at most $K^2/4$. A contradiction. 
  This implies that $S$ contains less than $K/2$ elements of the form~$x_i$. 
  In the last step, we add at most $\# (Y \setminus F_{x})$ elements to $S$, and by construction this is at most $K/2$.
  In total $S$ contains less than $K$ elements. 
  Also, by construction, the inclusion of the lemma is satisfied.
\end{proof}

\section{Standard definitions of conductors and condensers}\label{s:standard_def}

We present the definitions of condensers and conductors from the literature,
and explain their equivalence with our versions.

Let $X$ be a random variable in $\mcX$ with probability measure~$P$.
The {\em minentropy} is 
\[
H_{\infty}(X) \;=\; \min \{\log (1/P(x)) : x \in \mcX\}. 
\]
We use the statistical distance to measure similarity of random variables.
Let $Y$ and $Y'$ be random variables having the same range $\mcY$. These variables are said 
to be $\varepsilon$-close if $\left|\Pr [Y \mathop{\in} T] - \Pr [Y' \mathop{\in} T] \right| \;\le\; \varepsilon$ for all $T \subseteq \mcY$.

\begin{remark*}
  In this definition, the maximal difference is obtained for the set $T$ given by all values $y \in \mcY$ for which 
  the probability of~$Y$ exceeds the probability of~$Y'$.
  Hence, if $Y$ and $Y'$ have statistical distance at most $\varepsilon$, then for any~$\gamma$, 
  the $\gamma$-excess of $Y$ and $Y'$ differ by at most~$\varepsilon$.
\end{remark*}

\noindent
A random variable with minentropy at least $k$ is called a  {\em $k$-source}. A random variable that 
is $\varepsilon$-close to a $k$-source, is called a {\em $(k,\varepsilon)$-source.}

\begin{lemma}\label{lem:equivalence_excess}
  Let $Y$ be a random variable in a domain of size at least $K$. 
  $Y$ is a $(\log K,\varepsilon)$-source if and only if it has $(1/K)$-excess at most~$\varepsilon$. 
\end{lemma}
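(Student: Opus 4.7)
The plan is to prove both implications of the equivalence by directly using the definitions, leveraging the remark that statistical closeness controls differences in $\gamma$-excess.

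For the forward direction, assume $Y$ is $\varepsilon$-close to a $(\log K)$-source $Y'$ with measure $Q$. Then $Q(y) \le 1/K$ for every $y$ by the minentropy bound, so the $(1/K)$-excess of $Y'$ is exactly $0$. Applying the remark preceding the lemma, the $(1/K)$-excess of $Y$ differs from that of $Y'$ by at most~$\varepsilon$, so it is at most~$\varepsilon$. This direction is essentially immediate.

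For the reverse direction, assume $Y$ has measure $P$ with $(1/K)$-excess at most $\varepsilon$. I will construct a measure $Q$ on $\mcY$ with $Q(y) \le 1/K$ everywhere and total variation distance at most $\varepsilon$ from $P$. First define the subprobability measure $Q_0(y) = \min\{P(y), 1/K\}$. Its total mass is $1 - e$, where $e \le \varepsilon$ is the $(1/K)$-excess of $P$. Then distribute the missing mass $e$ among values $y$ with $Q_0(y) < 1/K$, never exceeding $1/K$ at any point. This is feasible because the total ``room under the cap'' is $\#\mcY/K - (1-e) \ge 1 - (1-e) = e$, using the hypothesis $\#\mcY \ge K$. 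Call the resulting probability measure $Q$; by construction $Q$ is a $(\log K)$-source.

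It remains to check that the variable $Y'$ with measure $Q$ is $\varepsilon$-close to $Y$. On values where $P(y) > 1/K$, we have $Q(y) = 1/K \le P(y)$, and on the remaining values $Q(y) \ge P(y)$. The total amount by which $Q$ exceeds $P$ equals the total amount by which $P$ exceeds $Q$ (both equal $e$ by construction, since the mass we redistributed is exactly the excess that was chopped off), so the statistical distance between $P$ and $Q$ equals $e \le \varepsilon$. Hence $Y$ is a $(\log K, \varepsilon)$-source, completing the equivalence. The only subtle step is verifying the feasibility of the redistribution, which is where the assumption on the domain size enters; everything else is routine manipulation of measures.
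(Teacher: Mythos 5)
Your proof is correct and follows essentially the same approach as the paper: trim the measure at $1/K$ and redistribute the excess while keeping values capped, using $\#\mcY \ge K$ for feasibility, then observe that the statistical distance equals the excess. You spell out the redistribution and the capacity argument more explicitly than the paper, but the underlying idea and the role of each hypothesis are identical.
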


\begin{proof}
  Assume that $Y$ is a $(\log K,\varepsilon)$-source. 
  By definition, there exists a $(\log K)$-source $Y'$ with statistical distance at most~$\varepsilon$ from~$Y$.
  Since the $(1/K)$-excess of $Y'$ is zero, the remark above implies that
  $Y$  has $(1/K)$-excess at most~$\varepsilon$. 
  Note that this direction does not use the assumption on the domain size.

  Now assume that $Y$ has $(1/K)$-excess at most $\varepsilon$. 
  Let $Y'$ be a $(\log K)$-source obtained by trimming the measure of $Y$ to the value $1/K$, 
  and redistributing all trimmed measure over the other values, while keeping them bounded by~$1/K$. 
  This is possible, because the domain of $Y$ is at least~$K$. 
  The variables $Y$ and $Y'$ are $\varepsilon$-close, because for the 
  optimal set $T$ of the remark, the difference is precisely the $(1/K)$-excess of~$Y$.
  Hence, $Y$ is a $(\log K,\varepsilon)$-source.
\end{proof}

\begin{lemma}\label{lem:equivalence_condensers}
  A function $x \mapsto f(x,U_{\mcD})$ is a \condenser{K}{K'} if and only if for every $(\log K)$-source~$X$, the variable $f(X,U_{\mcD})$ is a $(\log K',\varepsilon)$-source. 
\end{lemma}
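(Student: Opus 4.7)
The plan is to bridge Definition~\ref{def:conductor} and the statement of the lemma via Lemma~\ref{lem:equivalence_excess}, together with the standard convex decomposition of min-entropy sources into flat sources. Writing $\ff(x) = f(x, U_\mcD)$ for brevity, the two directions are dual in the following sense: $(\log K)$-sources and uniform distributions on $K$-sets are linked by convex combinations, while Lemma~\ref{lem:equivalence_excess} already tells us that $(\log K', \varepsilon)$-sources and distributions with small $(1/K')$-excess are the same objects.

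The reverse direction will be essentially immediate. Given the source-preservation hypothesis, for any $S \subseteq \mcX$ of size $K$ the uniform distribution $U_S$ is itself a $(\log K)$-source, so $\ff(U_S)$ is a $(\log K', \varepsilon)$-source, and the remark preceding Lemma~\ref{lem:equivalence_excess} will supply that its $(1/K')$-excess is at most $\varepsilon$. This is exactly the condenser condition of Definition~\ref{def:conductor}.

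For the forward direction, I would take an arbitrary $(\log K)$-source $X$ with measure $P_X$ and invoke the classical decomposition: since $\max_x P_X(x) \le 1/K$ and $K$ is a positive integer, $P_X$ admits a representation as a convex combination $P_X = \sum_i \lambda_i \, U_{S_i}$ of uniform measures over size-$K$ subsets $S_i \subseteq \mcX$ (the extreme points of the polytope $\{P : 0 \le P \le 1/K,\ \sum P = 1\}$ are exactly these $U_{S_i}$'s). Then the law of $\ff(X)$ equals $\sum_i \lambda_i Q_i$, where $Q_i$ is the law of $\ff(U_{S_i})$. Convexity of the $(1/K')$-excess functional $Q \mapsto \sum_y \max\{0, Q(y) - 1/K'\}$ (each summand is a pointwise maximum of two linear functions in $Q(y)$, hence convex) will then bound the $(1/K')$-excess of $\ff(X)$ by $\sum_i \lambda_i \varepsilon = \varepsilon$, and the direction of Lemma~\ref{lem:equivalence_excess} passing from small excess to source-closeness (after embedding $\mcY$ in a larger set if $|\mcY| < K'$ to meet the domain hypothesis) will yield that $\ff(X)$ is a $(\log K', \varepsilon)$-source.

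The main obstacle is the convex decomposition of $(\log K)$-sources into flat $K$-sources: without it, Definition~\ref{def:conductor} — which only quantifies over uniform distributions on sets of exact size $K$ — cannot be applied to an arbitrary source. The decomposition itself is classical (a greedy construction peels off scaled copies of $U_S$ for suitably chosen $S$, or equivalently Krein--Milman on the relevant polytope), but it must be explicitly invoked; once it and the convexity of the excess are in hand, both directions reduce cleanly to Lemma~\ref{lem:equivalence_excess}.
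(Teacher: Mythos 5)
Your proposal matches the paper's proof in both structure and substance: the reverse direction applies Lemma~\ref{lem:equivalence_excess} to the flat source $U_S$, and the forward direction uses the standard decomposition of a $(\log K)$-source into a convex combination of uniform distributions on $K$-sets. The only difference is cosmetic — you push the convexity argument through the $(1/K')$-excess functional and then invoke Lemma~\ref{lem:equivalence_excess} once at the end, whereas the paper invokes it per piece and then observes that a mixture of $(\log K',\varepsilon)$-sources is again a $(\log K',\varepsilon)$-source — and your attention to the $\#\mcY < K'$ edge case is a small bonus the paper leaves implicit.
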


\begin{proof}
  The backwards implication follows directly from  Lemma~\ref{lem:equivalence_condensers}: for every set $S$ of size $K$, 
  the function $f(U_S, U_\mcD)$ is a $(K',\varepsilon)$-source, and hence, has $(1/K')$-excess at most~$\varepsilon$.

  For the forward direction, observe that each random variable that is a $(\log K)$-source, can be written as a mixture 
  of uniform distributions on sets of size~$K$.\footnote{
    A nice proof from~\cite[Lemma 6.10]{vad:b:pseudorand} is as follows. Let $P_1, P_2, \dots, P_e$ be probabilities on a finite set such that all $P_i \le 1/K$. 
    Construct a circle with circumference 1, and lay out consecutive segments $I_1, I_2, \dots, P_e$ on its circumference of lengths, respectively,  $P_1, P_2, \dots$.  
    Consider a $K$-regular polygon  inscribed in this circle.  The $K$ vertices of the polygon  lie on different segments and therefore the polygon specifies $K$ segments. 
    For a  subset $S$ of $K$ segments, let $\alpha_S$ be the probability that a random rotation of some fixed polygon specifies $S$.
    The result follows by showing that the distribution $P$ is the same as the distribution of $\sum \alpha_S U_S$, 
    where the sum is taken over all $K$-element sets of segments. Indeed, the probability that we obtain segment $I_t$ by first selecting a random rotation of the fixed polygon, followed by a random selection of one the $K$  segments specified by the rotated polygon is the same 
    as the probability of obtaining $I_t$ by selecting a random point on the circle and taking the segment containing the point, and the latter probability is equal to~$P_t$. 
  }
  Each such variable induces a $(\log K',\varepsilon)$-source. 
  Therefore, the mixture induces a mixture of $(\log K',\varepsilon)$-sources, which is also a $(\log K',\varepsilon)$-source.
\end{proof}

\begin{remark}\label{remark:extractor_vs_condenser}
  A probabilistic function $\ff \colon \mcX \rightarrow \mcY$ is a {\em $(K,\varepsilon)$-extractor} 
  if and only for every $(\log K)$-source, the variable $\ff(X)$ is $\varepsilon$-close to~$U_\mcY$.
  There is only 1 distribution over $\mcY$ that is a $(\log \# Y)$-source, which is the uniform distribution.
  Hence, $\ff$ is a $(K,\varepsilon)$-extractor if and only if it is a~\condenser{K}{\# \mcY}.
\end{remark}

\begin{definition}[\cite{cap-rei-vad-wig:c:conductors}]\label{def:simpleConductor}
  A function $f \colon \{0,1\}^n \times \{0,1\}^d \rightarrow \{0,1\}^m$ is a $(k_\textnormal{max},\varepsilon,a)$-{\em simple conductor} 
  if for any nonnegative integer $k \le k_\textnormal{max}$ and any $k$-source $X$, the induced variable $f(X, U_d)$ is a $(k+a,\varepsilon)$-source.
\end{definition}

\begin{lemma}\label{lem:conductorLiterature_implies_ours}
  If $f$ is a $(\kmax,\varepsilon,1)$-simple conductor, 
  then $x \mapsto f(x,U_d)$ is a $(2^{\kmax},\varepsilon)$-conductor according to  Definition~\ref{def:conductor}.
\end{lemma}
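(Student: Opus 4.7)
The plan is to take an arbitrary set $S \subseteq \{0,1\}^n$ with $\#S = K' \le 2^{\kmax}$ and show that the $(1/K')$-excess of $f(U_S, U_d)$ is at most $\varepsilon$; this is exactly what is required by Definition~\ref{def:conductor}, applied to $\ff(x) := f(x, U_d)$.

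First I would set $k = \lfloor \log K' \rfloor$. Since $U_S$ is uniform on a set of size $K' \ge 2^k$, its minentropy is $\log K' \ge k$, so $U_S$ is a $k$-source. Moreover $k \le \log K' \le \kmax$, so the simple conductor hypothesis applies: $f(U_S, U_d)$ is a $(k+1, \varepsilon)$-source. By the forward direction of Lemma~\ref{lem:equivalence_excess} (which, as noted in its proof, does not rely on the domain-size assumption), the $(1/2^{k+1})$-excess of $f(U_S, U_d)$ is at most $\varepsilon$.

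To finish, I would observe that $k+1 > \log K'$ gives $2^{k+1} > K'$, so $1/K' \ge 1/2^{k+1}$. Since the $\gamma$-excess $\sum_y \max\{0, P(y) - \gamma\}$ is monotonically nonincreasing in $\gamma$, we conclude
\[
\text{$(1/K')$-excess of } f(U_S, U_d) \;\le\; \text{$(1/2^{k+1})$-excess of } f(U_S, U_d) \;\le\; \varepsilon.
\]
As $S$ and $K' \le 2^{\kmax}$ were arbitrary, $x \mapsto f(x, U_d)$ is a $(2^{\kmax}, \varepsilon)$-conductor.

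There is no real obstacle; the only subtlety is bookkeeping the indices so that the ``$+1$'' in the simple conductor definition is absorbed by passing from $K'$ to $2^{k+1}$, and invoking the forward direction of Lemma~\ref{lem:equivalence_excess} which is valid without any assumption on the size of the range of $f$.
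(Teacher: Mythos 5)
Your proof is correct and follows essentially the same route as the paper: fix a set $S$ with $\#S = K' \le 2^{\kmax}$, set $k = \lfloor \log K' \rfloor$, use the simple-conductor hypothesis to get a $(k+1,\varepsilon)$-source, and absorb the $+1$ boost to compensate for the bit lost in taking the floor. The paper cites Lemma~\ref{lem:equivalence_condensers} (whose backward direction is itself proved via the forward direction of Lemma~\ref{lem:equivalence_excess}), whereas you cut out that intermediate step and invoke Lemma~\ref{lem:equivalence_excess} directly; the key point — that the $a=1$ boost in the simple-conductor definition exactly covers the gap between $\log K'$ and $\lfloor\log K'\rfloor+1$ — is identical in both.
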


\noindent
Note that it is not enough to assume that $f$ is a $(\kmax, \varepsilon, 0)$-simple conductor, because the definition only 
considers integers $k$, and for sets $S$ whose size is not a power of $2$, the output might loose 1 bit of minentropy. 
The proof follows directly from Lemma~\ref{lem:equivalence_condensers}.


To a probabilistic function $\ff \colon \{0,1\}^n \rightarrow \{0,1\}^m$ whose evaluation requires $r$ random bits, 
we associate the function 
\[
f \colon \{0,1\}^n \times \{0,1\}^r \rightarrow \{0,1\}^m
\]
obtained by setting the random bits equal to the second argument. 

\begin{lemma}\label{lem:ours_implies_conductorLiterature}
  Let $\kmax \le m$. 
  If $\ff\colon \{0,1\}^n \rightarrow \{0,1\}^m$ is a $(2^{\kmax},\varepsilon)$-conductor according to  Definition~\ref{def:conductor},
  then the associated function $f$ is a $(\kmax,\varepsilon,0)$-simple conductor. 
\end{lemma}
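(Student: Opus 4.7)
The plan is to unfold the two definitions and observe that the conclusion is essentially immediate from Lemma~\ref{lem:equivalence_condensers}. Specifically, fix a nonnegative integer $k \le \kmax$ and an arbitrary $k$-source $X$ on $\{0,1\}^n$; the goal is to show that $f(X, U_r) = \ff(X)$ is a $(k,\varepsilon)$-source (which is the $a=0$ simple-conductor condition of Definition~\ref{def:simpleConductor}).

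First I would unpack Definition~\ref{def:conductor}: the hypothesis that $\ff$ is a $(2^{\kmax},\varepsilon)$-conductor means that for every $K' \le 2^{\kmax}$, the function $\ff$ is a \condenser{K'}{K'}. In particular, taking $K' = 2^k$, this is valid since $k \le \kmax$. Next I would apply Lemma~\ref{lem:equivalence_condensers} with $K = K' = 2^k$: this yields that for every $(\log 2^k) = k$-source $X$, the variable $\ff(X)$ is a $(\log 2^k, \varepsilon) = (k, \varepsilon)$-source. This is exactly the required conclusion, since $f(X, U_r)$ has the same distribution as $\ff(X)$ by construction of the associated function~$f$.

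The one small technical point to check is the domain-size hypothesis in the underlying Lemma~\ref{lem:equivalence_excess} (invoked inside Lemma~\ref{lem:equivalence_condensers}): we need the range of $\ff(X)$ to have size at least $2^k$. Since the output lives in $\{0,1\}^m$ and the assumption $\kmax \le m$ gives $2^k \le 2^{\kmax} \le 2^m$, this is automatic. There is no real obstacle here; the whole argument is a direct chaining of previously established equivalences, and the assumption $\kmax \le m$ is precisely what allows the minentropy of the output to match $k$ without having to subtract a bit for non-power-of-two issues.
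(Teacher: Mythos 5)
Your argument is correct and matches what the paper itself does, namely declare the lemma to "follow directly from the definitions and Lemma~\ref{lem:equivalence_condensers}"; you have simply spelled out the chain (pick $K'=2^k$, invoke the equivalence lemma, identify $f(X,U_r)$ with $\ff(X)$). Your attention to where $\kmax \le m$ enters — guaranteeing $\#\mcY=2^m\ge 2^k$ so that the domain-size hypothesis of Lemma~\ref{lem:equivalence_excess} is met — is exactly the point the paper silently relies on.
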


\noindent
Again this follows directly from the definitions and Lemma~\ref{lem:equivalence_condensers}.

\section{Non-explicit conductors} \label{s:randcondenser}

Recall that $[n] = \{1,2,\dots,n\}$. Proposition~\ref{prop:existsPrefixExtractor} follows from the following.

\begin{proposition*}
  For all $\varepsilon$, $n$ and $\Kmax$, there exists a $(\Kmax,\varepsilon)$-conductor 
  $\ff \colon [2^n] \rightarrow [4\Kmax]$ whose evaluation requires at most $\log \tfrac {4n} \varepsilon$ bits of randomness. 
\end{proposition*}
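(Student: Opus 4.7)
The plan is to establish the claimed existence by the probabilistic method. Choose $f\colon [2^n] \times [D] \to [4\Kmax]$ uniformly at random, with each value $f(x,\rho)$ independent and uniform, where $D = 4n/\varepsilon$, and set $\ff(x) = f(x, U_D)$ with $U_D$ uniform on $[D]$. I would argue that with positive probability over $f$, the function $\ff$ is a $(\Kmax, \varepsilon)$-conductor. By Definition~\ref{def:conductor}, this amounts to checking, for every $K' \le \Kmax$ and every $S \subseteq [2^n]$ with $|S| = K'$, that $\ff(U_S)$ has $(1/K')$-excess at most $\varepsilon$. Rewriting the excess as $\sup_T \bigl(\Pr[\ff(U_S) \in T] - |T|/K'\bigr)$ and observing that the supremum is attained by the set of outputs whose probability exceeds $1/K'$ (a set of size strictly less than $K'$), it suffices to verify
\[
\Pr[\ff(U_S) \in T] \;\le\; |T|/K' + \varepsilon
\]
for every triple $(K', S, T)$ with $|T| \le K'$.

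Next I would carry out the Chernoff analysis for each fixed triple. Write $E(S,T) = \{(x,\rho) \in S \times [D] : f(x,\rho) \in T\}$ and $t = |T|$; then $\Pr[\ff(U_S) \in T] = |E(S,T)|/(K'D)$, and under random $f$ the quantity $|E(S,T)|$ is $\textnormal{Bin}(K'D,\, t/(4\Kmax))$ with mean $\mu \le tD/4$. The bad event becomes $|E(S,T)| > tD + \varepsilon K'D$, whose probability I would bound by Chernoff, splitting into two regimes. When $t \ge 2\varepsilon K'$, the deviation is comparable to the mean and Bernstein's inequality yields a bound of the form $\exp(-\Omega(\varepsilon K' D))$. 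When $t < 2\varepsilon K'$, the target $a = tD + \varepsilon K'D$ dwarfs the mean, and the multiplicative Chernoff bound $\Pr[X \ge a] \le (e\mu/a)^a$ (valid for $a \ge e\mu$, which holds here since $a/\mu \ge 4$) yields $(e/4)^{\,tD + \varepsilon K'D}$ or better. Substituting $D = 4n/\varepsilon$, both regimes produce a per-triple bad-event probability at most $2^{-c K' n}$ for some universal constant $c > 2$.

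Finally I would close the union bound. The total number of triples is at most $\sum_{K'=2}^{\Kmax} \binom{2^n}{K'} \cdot 2\binom{4\Kmax}{K'}$, bounded term-by-term by $2^{2K'(n+2)}$ using $\log \Kmax \le n$; combined with the per-triple decay $2^{-c K' n}$ for $c > 2$, the total bad probability is bounded by a convergent geometric series in $K'$ and is strictly less than~$1$, so a valid $f$ exists. The case $K' = 1$ is trivial since every distribution has zero $(1/1)$-excess. The main obstacle is the bookkeeping of constants in the two Chernoff regimes so that $c$ strictly exceeds $2$ and the union bound closes with the specific factors $D = 4n/\varepsilon$ and $M = 4\Kmax$; the two-regime split is essential, since the plain Hoeffding bound is too weak whenever $t$ is much smaller than $\varepsilon K'$, while the multiplicative Chernoff form captures the rare-event decay needed to beat $\binom{2^n}{K'}$ for small $t$.
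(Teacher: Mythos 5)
Your high-level approach (probabilistic method, Chernoff, union bound) matches the paper's, but you decompose the problem differently, and the difference matters for how cleanly the constants close. You attack the excess condition head-on, checking every triple $(K',S,T)$ with $|T|\le K'$. The paper instead first proves a \emph{single} auxiliary condition with a fixed target size: for every $X$ with $\#X\le \Kmax$ and every $Y$ with $\#Y = \lceil\varepsilon\#X\rceil$, $\Pr[\ff(U_X)\in Y]\le\varepsilon$. It then derives the excess bound from this by a contrapositive: the set $Y$ of outputs whose probability exceeds $1/K$ must have $\#Y < \lceil\varepsilon K\rceil$ (else some $\lceil\varepsilon K\rceil$-subset of $Y$ violates the auxiliary condition), and then the excess $\Pr[\ff(U_X)\in Y] - \#Y/K$ is at most $\varepsilon$ by enlarging $Y$ to size $\lceil\varepsilon K\rceil$. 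This reduction collapses the family of events indexed by $t$ to a single $t$, which is precisely why the paper needs only one tail-bound application and never faces your ``$t$ much smaller than $\varepsilon K'$'' regime. So the two-regime split you call essential is an artifact of the decomposition you chose, not of the problem.

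Two concrete issues with your route as written. First, the constant in regime (b) is not actually established. From $a/\mu\ge 4$ and $a\ge\varepsilon K'D = 4K'n$ you get $(e/4)^{a}\approx 2^{-2.23\,K'n}$, which barely clears the $\approx 2^{(2n+O(1))K'}$ count of triples, and the margin $0.23n$ is eaten by the additive $O(1)$ in the exponent for all but large $n$ — so the ``$c>2$ suffices'' claim doesn't close the union bound as stated. (It can be salvaged: in regime (b) the hypothesis $t<2\varepsilon K'$ actually forces $a/\mu > 6$, giving $(e/6)^{a}\approx 2^{-4.6\,K'n}$, which does the job comfortably; you should carry this sharper ratio.) Second, you don't dispatch the degenerate parameter ranges. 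The paper explicitly handles $4\Kmax > 2^n$ (take $\ff$ the identity) and $\varepsilon\Kmax\le 1$ (output a uniformly random element of a small block), and these cases are exactly where the probabilistic argument and the $\log\Kmax\le n$ bookkeeping would otherwise break down. Your plan silently assumes you are outside these regimes; you need to say so and handle them.
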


\noindent
 In the proof  we use a theorem by Hoeffding. The following is obtained from equation 2.1 of Theorem 1 in~\cite{hoe:j:probbound}.

\begin{theorem*}[Hoeffding bound, simplified]\label{th:Hoeffding}
  Let $X_1, \dots, X_t$ be independent variables taking values in the real interval~$[0,1]$. Let $\mu$ be 
  the expected value of $Z = X_1 + \dots + X_t$, for all $\nu > \mu$:
  \[
  \Pr \left[ Z \ge \nu \right] \;\le\; \left(\frac{\mu}{\nu}\right)^\nu.
  \]
\end{theorem*}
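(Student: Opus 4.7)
The plan is the classical Chernoff-style exponential-moment argument. For a parameter $s>0$ to be chosen later, I will first apply Markov's inequality to the nonnegative random variable $e^{sZ}$:
\[
\Pr[Z \geq \nu] \;=\; \Pr[e^{sZ} \geq e^{s\nu}] \;\leq\; e^{-s\nu}\,\mathbb{E}[e^{sZ}],
\]
and then exploit independence of the $X_i$ to factorise $\mathbb{E}[e^{sZ}] = \prod_{i=1}^{t} \mathbb{E}[e^{sX_i}]$. This reduces the problem to bounding a single moment generating function $\mathbb{E}[e^{sX_i}]$ for a variable supported on $[0,1]$ with mean $\mu_i := \mathbb{E}[X_i]$.

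Next I would exploit the hypothesis $X_i \in [0,1]$ through convexity of $u \mapsto e^{su}$: pointwise $e^{sX_i} \leq (1-X_i) + X_i e^s$, so taking expectations gives the linear bound $\mathbb{E}[e^{sX_i}] \leq 1 + \mu_i(e^s-1)$, which in turn is at most $\exp(\mu_i(e^s-1))$ by $1+x\leq e^x$. Multiplying over $i$ and using additivity $\mu = \sum_i \mu_i$ produces $\mathbb{E}[e^{sZ}] \leq \exp(\mu(e^s-1))$, hence
\[
\Pr[Z \geq \nu] \;\leq\; \exp\bigl(-s\nu + \mu(e^s - 1)\bigr).
\]
The third step is to optimise over $s>0$. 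Differentiating the exponent and setting the derivative to zero yields $\mu e^s = \nu$, i.e.\ $s = \ln(\nu/\mu)$, which is strictly positive precisely because $\nu > \mu$. Substituting this optimal $s$ collapses the exponent to $\nu\ln(\mu/\nu) + (\nu-\mu)$, delivering the standard multiplicative Chernoff bound $\Pr[Z \geq \nu] \leq (\mu/\nu)^{\nu}\, e^{\nu-\mu}$.

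The main obstacle is to upgrade this standard bound to the cleaner form $(\mu/\nu)^\nu$ claimed in the statement: the routine derivation above carries an extra factor $e^{\nu-\mu} \geq 1$ that must be removed. This is exactly the sharpening Hoeffding performs in Theorem 1 by refusing to replace the tight MGF bound $1 + \mu_i(e^s-1)$ by its exponential relaxation and instead carrying the product $\prod_i (1 - \mu_i + \mu_i e^s)$ through the optimisation; bounding this product via the AM-GM/log-sum inequality $\sum_i \ln(1-\mu_i + \mu_i e^s) \leq t\ln(1 - \mu/t + (\mu/t) e^s)$ reduces the problem to a single Bernoulli$(\mu/t)$ variable, whose tail after reoptimising $s = \ln\frac{\nu(t-\mu)}{\mu(t-\nu)}$ gives Hoeffding's exact two-factor expression $(\mu/\nu)^{\nu}((t-\mu)/(t-\nu))^{t-\nu}$. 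Matching the stated simplified form then amounts to controlling the auxiliary factor $((t-\mu)/(t-\nu))^{t-\nu}$ in the regime used by the application, which is the delicate point I would need to handle carefully; the rest of the argument is routine.
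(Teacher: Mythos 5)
Your first three steps (Markov on $e^{sZ}$, the convexity bound $\mathbb{E}[e^{sX_i}]\le 1+\mu_i(e^s-1)$, optimization at $s=\ln(\nu/\mu)$) are correct and give $\Pr[Z\ge\nu]\le e^{\nu-\mu}(\mu/\nu)^{\nu}$, and your description of Hoeffding's sharper two-factor bound $(\mu/\nu)^{\nu}\bigl(\tfrac{t-\mu}{t-\nu}\bigr)^{t-\nu}$ is also right. But the proposal stops exactly at the step that matters, and that step is not a ``delicate point to handle carefully'' --- it cannot be done. For $\mu<\nu<t$ the auxiliary factor satisfies $\tfrac{t-\mu}{t-\nu}>1$ and $t-\nu>0$, so it is always strictly larger than $1$ (as is $e^{\nu-\mu}$); discarding it \emph{strengthens} the inequality, and the strengthened statement is false. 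Concretely, for $Z\sim\mathrm{Bin}(100,\tfrac12)$ (so $t=100$, $\mu=50$) and $\nu=51$ one has $\Pr[Z\ge 51]\approx 0.46$, while $(\mu/\nu)^{\nu}=(50/51)^{51}\approx 0.36$. The failure persists in the very regime where the paper invokes the bound ($\nu=2\mu$, sums of low-probability indicators): there the true tail is of order $2^{-0.28\nu}$, much larger than the claimed $(\mu/\nu)^{\nu}=2^{-\nu}$. So no completion of your argument (or any argument) can establish the statement as written.

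For comparison, the paper offers no proof at all: it asserts the statement ``is obtained from equation 2.1 of Theorem 1'' of Hoeffding, and that equation is precisely the two-factor bound you wrote down; the passage to the displayed one-factor form silently drops a factor exceeding $1$, i.e.\ it commits the gap you left open. What is actually provable is the version your routine steps already deliver, $\Pr[Z\ge\nu]\le e^{\nu-\mu}(\mu/\nu)^{\nu}\le (e\mu/\nu)^{\nu}$, and this suffices for the application in appendix~\ref{s:randcondenser}: with $\mu\le\nu/2$ it gives $\Pr[Z\ge\nu]\le\bigl(e^{1/2}/2\bigr)^{\nu}=2^{-c\nu}$ with $c\approx 0.27$, so the argument there still goes through after enlarging the constant in $D=\lceil 3n/\varepsilon\rceil$ (roughly $D=\lceil 11n/\varepsilon\rceil$). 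I would recommend proving and citing that corrected form rather than the stated one.
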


\begin{proof}
  Let $\mcY = [4\Kmax]$. 
  If $\# \mcY > 2^n$, the identity function $\ff$ satisfies the conditions. If $\Kmax \le 1/\varepsilon$, then we output a random element of~$[2^{\lceil \log (1/\varepsilon) \rceil}] \subseteq \mcY$.  Assume $\varepsilon \Kmax > 1$ and $\#\mcY \le 2^n$.
  
  We construct a function $\ff$ such that 
  \[
  \forall X \subseteq [2^n] \text{ with } \# X \le \Kmax \;\; \forall Y \mathop{\subseteq} \mcY \text{ with } \# Y = \lceil \varepsilon \# X\rceil
  \;\; : \;\; \Pr \left[ \ff(U_X) \mathop{\in} Y \right] \;\le\; \varepsilon.
  \]
  We first show that such an~$\ff$ is a conductor. Indeed, for any $K \le \Kmax$ and any set $X \subseteq [2^n]$ of size~$K$, consider 
  the set~$Y$ of elements $y \in \mcY$ for which $y = \ff(U_X)$ has probability strictly more than~$1/K$.
  We have~$\# Y \le \varepsilon K$, because otherwise any subset~$Y' \subseteq Y$ of size~$\lceil \varepsilon K \rceil$ 
  violates the condition above. 
  This implies that the $(1/K)$-excess of $\ff(U_X)$ is at most~$\varepsilon$.

  \smallskip
  Let $D = \lceil 3n/ \varepsilon \rceil$.
  We obtain $\ff$ from a function $f \colon [2^n] \times [D] \rightarrow \mcY$
  as $\ff(x) = f(x,U_{[D]})$. The existence of the required $f$ is shown by the probabilistic method:
  we select each value $f(x,i)$ randomly in $\mcY$, and show that the property is satisfied with positive probability.

  For a fixed $(x,i)$ and a fixed set $Y$ of size at most $\varepsilon \Kmax$, such a random $f$ satisfies $f(x,i) \in Y$ with probability at most~$\varepsilon/2$, 
  since $\Kmax \varepsilon > 1$ and hence, $\# Y \le \lceil \varepsilon \Kmax\rceil \le 2\varepsilon \Kmax \le \varepsilon \# \mcY/2$.
  For fixed sets $X$ and $Y$ satisfying the conditions, consider the event 
  ``{\em $f(x,i) \in Y$ for more than an $\varepsilon$-fraction of $(x,i) \in X \times [D]$}''.
  By the simplified Hoeffding bound, this  is at most $2^{-\varepsilon D \# X}$. 

  The number of sets $X$ of size $K$ is at most $2^{nK}$. 
  The number of sets $Y$ of size $\lceil \varepsilon K\rceil$ is $(\# \mcY)^{\lceil\varepsilon K\rceil} \le 2^{n\lceil \varepsilon \rceil K}$. 
  By the union bound, the probability that the condition is violated is at most:
  \[
  \le \; \sum_{K \ge 1}^{\Kmax} 2^{nK} \cdot 2^{nK} \cdot 2^{-\varepsilon KD}\;. 
  \]
  By choice of $D$, each term is at most $2^{-nK}$, 
  thus the sum is strictly smaller than $1$. This implies that the required function $f$ exists.
  %
\end{proof}

\section{Construction of explicit conductors}\label{s:conductor_construction}

We prove  Proposition~\ref{prop:conductors1}. 
For inductive purposes, it is convenient to switch to a stronger type of conductors, 
which correspond to ``conductors of the extracting type'' in Capalbo et al.~\cite{cap-rei-vad-wig:c:conductors}. 

\begin{definition}\label{def:extracting_conductors}
 A probabilistic function $\ff \colon \mcX \rightarrow \mcY$ is an $\varepsilon$-conductor if it is an $(\# \mcY,\varepsilon)$-conductor.
\end{definition}

\noindent
We restate the proposition using this definition.

\begin{proposition*} 
  There exists an explicit family $\ff^{\smallerbrackets{n}}_{\varepsilon, k}$ 
  of $\varepsilon$-conductors $\ff^{\smallerbrackets{n}}_{\varepsilon, k}\colon \{0,1\}^n \rightarrow \{0,1\}^k$ whose 
  evaluation requires $O(\log k \cdot \log \tfrac{n}{\varepsilon})$ random bits, for all $\varepsilon$, $k$ and~$n$. 
\end{proposition*}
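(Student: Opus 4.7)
The plan is to build $\ff^{\smallerbrackets{n}}_{\varepsilon,k}$ by induction on $k$, using the explicit GUV condenser of Theorem~\ref{th:GUV_partial} as the elementary building block. At each level the parameter $k$ shrinks by a constant factor, so the recursion depth is $O(\log k)$; since a single GUV invocation uses $O(\log \tfrac{n}{\varepsilon'})$ random bits, setting the per-level error to $\varepsilon' = \varepsilon / O(\log k)$ and union-bounding across levels keeps the total error at most $\varepsilon$ and the total seed length at $O(\log k \cdot \log \tfrac{n}{\varepsilon})$. In the base case, when $k$ is below a small constant multiple of $\log \tfrac{n}{\varepsilon}$, the target seed budget is already $\Omega(k)$, so a single GUV condenser with $\kappa = k$ combined with a prime-based hash (Lemma~\ref{lem:prime_hashing}) that handles the low minentropy range below~$2^{2\kappa}$ suffices.

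For the inductive step, set $\kappa = \lceil k/3 \rceil$ and let $g \colon \{0,1\}^n \rightarrow \{0,1\}^\kappa$ be the GUV condenser of Theorem~\ref{th:GUV_partial} with error~$\varepsilon'$. Let $\ff' \colon \{0,1\}^n \rightarrow \{0,1\}^{k-\kappa}$ be the inductively constructed $\varepsilon'$-conductor, whose output length $k - \kappa$ is approximately~$2k/3$. Define $\ff^{\smallerbrackets{n}}_{\varepsilon,k}(x)$ to be the concatenation $(g(x), \ff'(x))$, using fresh independent random bits for $g$ and $\ff'$. The output length is~$k$, and the randomness spent at this level is $O(\log \tfrac{n}{\varepsilon'})$ on top of the recursive call.

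The heart of the proof is showing that this concatenation is an $\varepsilon$-conductor. Fix $S \subseteq \{0,1\}^n$ with $K' = \#S \le 2^k$; we must bound the $(1/K')$-excess of the output of $\ff^{\smallerbrackets{n}}_{\varepsilon,k}(U_S)$. In the ``high minentropy'' regime $K' \ge 2^{2\kappa}$, Theorem~\ref{th:GUV_partial} implies that $g(U_S)$ is $\varepsilon'$-close to an exact $\kappa$-source. A block-source averaging argument shows that for all but a $\sqrt{\varepsilon'}$-fraction of outcomes~$y$ of $g(U_S)$, the conditional distribution of $U_S$ on $g^{-1}(y) \cap S$ has minentropy close to $\log K' - \kappa$, and the inductive hypothesis then controls the $(1/2^{\log K' - \kappa})$-excess of $\ff'$ on this conditional source; combining the two bounds yields $(1/K')$-excess at most $O(\varepsilon')$ for the concatenation. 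In the ``low minentropy'' regime $K' < 2^{2\kappa}$ we have $K' \le 2^{k-\kappa}$ as well, so $\ff'$ alone already controls the $(1/K')$-excess on its $(k-\kappa)$-bit output, and the extra bits from $g(x)$ can only inflate the excess by a factor absorbed by the error budget.

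The main obstacle is the conditional minentropy step in the high regime. Because GUV is only an approximate condenser, the conditional source on $g^{-1}(y) \cap S$ is close to flat rather than exactly flat, and the inductive~$\ff'$ must be applied to a source that is only approximately a $(\log K' - \kappa)$-source. The standard workaround, which I would use, is to decompose any $(\kappa, \varepsilon')$-source into a convex combination of an exact $\kappa$-source and a negligible remainder (as in the proof of Lemma~\ref{lem:equivalence_condensers}); this lets the induction be applied on the exact part while absorbing the $\varepsilon'$ error additively into the final excess bound.
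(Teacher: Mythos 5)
Your high-level plan—iterate the GUV condenser geometrically, output length growing by a constant factor per level, so $O(\log k)$ levels and a $O(\log\tfrac{n}{\varepsilon})$ seed per level—is exactly the paper's plan. Where you diverge is the proof that each concatenation step is still a conductor: you use the standard block-source / conditional-minentropy composition argument (essentially GUV's own Lemma~4.18), whereas the paper deliberately sidesteps that by a direct combinatorial argument (Lemma~\ref{lem:compositionConductors}), constructing a sublist $L'$ of the full output list $L$ and showing $\Pr[U_L \notin L'] \le \varepsilon_1 + \varepsilon_2$ by trimming first coordinates via the condenser property of $\fs$ and then second coordinates fiber-by-fiber via the conductor property of $\ft$. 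The paper explicitly calls its argument ``an easier variant'' of the standard composition lemma; the combinatorial route avoids having to reason about conditional sources at all, gives an exact $\varepsilon_1+\varepsilon_2$ bound rather than a $\sqrt{\varepsilon'}$-losing averaging step, and handles the full range $K' \le \#\mcY_1 \#\mcY_2$ uniformly instead of splitting into high/low minentropy regimes. Your approach is legitimate in spirit, but it is the route the paper is explicitly replacing.

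There are a few concrete holes. The most serious is the base case. Prime hashing (Lemma~\ref{lem:prime_hashing}) gives a $(K,\varepsilon)$-\emph{invertible} function whose output length is $2\log K + O(\log\tfrac{n}{\varepsilon})$, which is more than twice $\log K$. An $\varepsilon$-conductor $\{0,1\}^n \to \{0,1\}^k$ must control the $(1/\#S)$-excess for sets of size up to $2^k$, but its output has only $k$ bits; you cannot truncate or merge the prime-hash output down to $k$ bits, because projecting onto fewer coordinates can only \emph{increase} excess (the reverse of the monotonicity used in Lemma~\ref{lem:compressor_to_conductor}). Likewise ``a single GUV condenser with $\kappa = k$'' only gives guarantees for sources of minentropy at least $2\kappa = 2k$, which is vacuous for the $\varepsilon$-conductor property (which only concerns minentropy $\le k$). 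The paper's base case is simply $k\le 2$, where the construction is essentially trivial. A second, smaller issue: with $\kappa = \lceil k/3\rceil$ you have $2\kappa \ge k-\kappa$ whenever $3 \nmid k$, so the range $2^{k-\kappa} < K' < 2^{2\kappa}$ is nonempty and falls between your ``low'' and ``high'' regimes without an argument; $\kappa = \lfloor k/3 \rfloor$ fixes this. Finally, the block-source sketch conflates ``conditioned on $g^{-1}(y)\cap S$'' (a flat set) with the actual conditional source given only the output $y$, which is not flat because $g$'s internal randomness is not observed; the standard patch is to condition on the seed as well, and to track the $\sqrt{\varepsilon'}$ Markov loss explicitly, but you would need to write those details out before this counts as a proof.
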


\noindent
Every $\varepsilon$-conductor is a \condenser{\# \mcY}{\# \mcY},
and hence a $(\mcY,\varepsilon)$-extractor, by Remark~\ref{remark:extractor_vs_condenser} in appendix~\ref{s:standard_def}.
Raz, Reingold and Vadhan~\cite{rareva:c:extractor}, and Guruswami, Umans, and Vadhan~\cite{guv:j:extractor} construct explicit extractors 
from which we immediately obtain condensers. Both constructions actually give conductors.
The latter uses the smallest amount of randomness for all choices of the parameters~$n,k,\varepsilon$. 
Therefore, we explain why this construction also provides conductors.

To prove the main result in~\cite{guv:j:extractor}, several extractors from Theorem~\ref{th:GUV_partial} (restated below)
are concatenated. 
To show that such a concatenation provides 
larger extractors, Guruswami et al.  use a known ``composition lemma", see~\cite[Lemma 4.18]{guv:j:extractor}. 
We show that this concatenation also provides conductors, and for our purposes, an easier variant of the composition lemma is enough. 
The above proposition follows almost immediately from the next 2 results.

\begin{theorem*}[\cite{guv:j:extractor},  Theorem 1.5 or 4.17]
  \lemmaGUVpartial
\end{theorem*}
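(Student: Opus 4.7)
The plan is to build the required explicit $\varepsilon$-conductor by iteratively composing GUV condensers from Theorem~\ref{th:GUV_partial}. An $\varepsilon$-conductor $\ff\colon \{0,1\}^n \to \{0,1\}^k$ (Definition~\ref{def:extracting_conductors}) must be a \condenser{2^{k'}}{2^{k'}} for every $k' \le k$, so the construction must approximately preserve min-entropy at every scale from $k$ downward, not merely at the boundary.

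I would first establish a simple composition lemma for conductors: if $f\colon \{0,1\}^n \to \{0,1\}^{m_1}$ is a \condenser{K}{K'} using $r_1$ random bits and $g\colon \{0,1\}^{m_1} \to \{0,1\}^{m_2}$ is a \condenser{K'}{K''} using $r_2$ independent random bits, then $x \mapsto g(f(x))$ is a \condenser{K}{K''} with error $2\varepsilon$ and seed length $r_1+r_2$. By Lemma~\ref{lem:equivalence_condensers} one reformulates this in the min-entropy language: $f(U_S)$ is $\varepsilon$-close to a $(\log K')$-source; applying $g$ preserves statistical distance, so $g(f(U_S))$ is $\varepsilon$-close to $g$ evaluated on that $(\log K')$-source, which in turn is $\varepsilon$-close to a $(\log K'')$-source by $g$'s conductor property; the triangle inequality closes the argument. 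This is a weaker statement than the composition lemma of~\cite{guv:j:extractor} (it does not need to chase ``block-source'' structure), but it is all I need.

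Given this lemma, I would chain $O(\log k)$ GUV condensers with geometrically shrinking values of $\kappa$. At step $i$, a $(2^{2\kappa_i}, 2^{\kappa_i})$-condenser, invoked with error $\varepsilon/O(\log k)$, maps the current ambient space $\{0,1\}^{m_{i-1}}$ into $\{0,1\}^{\kappa_i}$, where $\kappa_i$ is a constant factor smaller than $m_{i-1}$ but clamped above at $k$. After $O(\log \tfrac{n}{k}) \le O(\log k)$ such steps the ambient dimension equals exactly $k$, and by iterated application of the composition lemma, the output of a $k'$-source is $O(\varepsilon)$-close to a $k'$-source on $\{0,1\}^k$ for every $k' \le k$ simultaneously. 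Each GUV call contributes $O(\log \tfrac{n}{\varepsilon})$ random bits (the additive $O(\log \log k)$ from the error rescaling is absorbed into this), so the total seed length is $O(\log k \cdot \log \tfrac{n}{\varepsilon})$, matching the claimed bound.

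The main obstacle will be verifying that the composed function is a conductor \emph{simultaneously} for every $k' \le k$, not just at $k' = k$. This requires that GUV condensers preserve min-entropy well at all intermediate entropy levels, and that the schedule $\kappa_1 > \kappa_2 > \cdots > k$ is chosen so that each stage's input-min-entropy requirement $2\kappa_i$ is met by the output of the previous stage at every scale $k'$. Checking that this strong ``conductor-type'' behaviour propagates uniformly over $k'$ through the composition is the technical crux of the proof.
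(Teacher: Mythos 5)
You have proved the wrong statement. Theorem~\ref{th:GUV_partial} is a \emph{citation} of Guruswami, Umans, and Vadhan's extractor theorem, and the paper's entire treatment of it is the short derivation in the remark that follows it: GUV Theorem 1.5/4.17 gives an explicit $(2\kappa,\varepsilon)$-extractor with output length $m\geq\kappa$ and seed $O(\log\tfrac{n}{\varepsilon})$; one truncates the output to exactly $\kappa$ bits by merging equal numbers of outputs (which cannot increase statistical distance to uniform); and then Remark~\ref{remark:extractor_vs_condenser} says a $(2^{2\kappa},\varepsilon)$-extractor into $\{0,1\}^\kappa$ is precisely a \condenser{2^{2\kappa}}{2^\kappa}. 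Your proposal instead sets out to prove (a version of) Proposition~\ref{prop:conductors1}, and explicitly takes ``GUV condensers from Theorem~\ref{th:GUV_partial}'' as an ingredient --- so as a proof of the statement you were given, it is circular.

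Even re-graded as a proof of Proposition~\ref{prop:conductors1}, the composition step has a real gap. You compose by ordinary function composition $x\mapsto g(f(x))$, and your triangle-inequality argument is valid \emph{only at the single entropy level each stage was designed for}. A \condenser{2^{2\kappa}}{2^\kappa} gives no guarantee for sources of min-entropy below $2\kappa$, so once the source's min-entropy falls below a stage's design threshold, every subsequent stage of the chain is out of contract and the output need not satisfy the excess bound of Definition~\ref{def:conductor}. You flag this yourself as the ``technical crux,'' but it cannot be patched inside the $g\circ f$ scheme. The paper's Lemma~\ref{lem:compositionConductors} uses a genuinely different composition: the output is the \emph{pair} $(\fs(x),\ft(x))$, where $\fs$ is the condenser at the large scale and $\ft$ is a recursively built conductor on a geometrically smaller range with $\#\mcY_2\geq K'$. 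For small sets ($\#X\leq K'$) the $\ft$-component already has the required excess and $\fs$ can only help; for large sets a joint counting of collisions over both coordinates finishes the argument. That concatenation-then-count idea is the one missing from your plan.
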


\begin{lemma}[Composition lemma]\label{lem:compositionConductors}
  If $\fs \colon \mcX \rightarrow \mcY_1$ 
  is a \condensser{K'}{\# \mcY_1}
  and $\ft\colon \mcX \rightarrow \mcY_2$ is an $\varepsilon_2$-conduc\textbf{t}or with $\# \mcY_2 \ge K'$, 
  then 
  \[
  x \;\;\longmapsto\;\; \left(\fs(x), \ft(x)\right)
  \] 
  is an $(\varepsilon_1 {+} \varepsilon_2)$-conductor.
\end{lemma}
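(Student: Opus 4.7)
The plan is to prove, for any $S \subseteq \mcX$ with $\#S = K \le \#\mcY_1 \cdot \#\mcY_2$, that the $(1/K)$-excess of the joint distribution $P$ of $(\fs(U_S), \ft(U_S))$ on $\mcY_1 \times \mcY_2$ is at most $\varepsilon_1 + \varepsilon_2$. I will split into two cases according to whether $K \le K'$ or $K > K'$. For $K \le K'$, since $K \le K' \le \#\mcY_2$, the conductor property of $\ft$ applied to $U_S$ bounds the $(1/K)$-excess of the $\mcY_2$-marginal by $\varepsilon_2$; the elementary inequality that for nonnegative reals $a_i$ with $\sum_i a_i = A$ one has $\sum_i \max(0, a_i - t) \le \max(0, A - t)$ then dominates the joint excess by the marginal's, yielding at most $\varepsilon_2 \le \varepsilon_1 + \varepsilon_2$.

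For the main case $K > K'$, I decompose $P(y_1,y_2) = R(y_1)\, Q_{y_1}(y_2)$, where $R$ is the $\mcY_1$-marginal and $Q_{y_1} = \ft(\mu_{y_1})$ is the image under $\ft$ of $\mu_{y_1}$, the conditional distribution of $U_S$ given $\fs(U_S) = y_1$; crucially $\|\mu_{y_1}\|_\infty \le 1/(K R(y_1))$. Applying the condenser property of $\fs$ to $U_S$, which is a $(\log K')$-source since $K \ge K'$, yields via the standard $k$-source equivalence that the $(1/\#\mcY_1)$-excess of $R$ is at most $\varepsilon_1$. The key algebraic step is the pointwise identity
\[
  P(y_1,y_2) - \tfrac{1}{K} \;=\; R(y_1)\Bigl(Q_{y_1}(y_2) - \tfrac{1}{\#\mcY_2}\Bigr) + \Bigl(\tfrac{R(y_1)}{\#\mcY_2} - \tfrac{1}{K}\Bigr),
\]
which combined with $\max(0,a+b) \le \max(0,a) + \max(0,b)$ and summation over $y_2$ gives the per-$y_1$ bound
\[
  \sum_{y_2} \max\!\Bigl(0, P(y_1,y_2) - \tfrac{1}{K}\Bigr) \;\le\; R(y_1)\cdot\bigl[(1/\#\mcY_2)\text{-excess of }Q_{y_1}\bigr] + \max\!\Bigl(0, R(y_1) - \tfrac{\#\mcY_2}{K}\Bigr).
\]

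Summing over $y_1$ splits the joint $(1/K)$-excess into two global contributions. The second is the $(\#\mcY_2/K)$-excess of $R$; since $\#\mcY_2/K \ge 1/\#\mcY_1$ by the hypothesis $K \le \#\mcY_1\#\mcY_2$, excess at this larger threshold is dominated by the $(1/\#\mcY_1)$-excess of $R$, and so is at most $\varepsilon_1$. For the first contribution, when $KR(y_1) \ge \#\mcY_2$ the source $\mu_{y_1}$ has min-entropy at least $\log\#\mcY_2$ and the conductor property of $\ft$ at its top scale directly gives $(1/\#\mcY_2)$-excess of $Q_{y_1}$ at most $\varepsilon_2$; for the remaining $y_1$, where $R(y_1) < \#\mcY_2/K$ and the second term above vanishes, I switch to the adaptive threshold $1/(KR(y_1))$ in place of $1/\#\mcY_2$, so the per-$y_1$ bound becomes $R(y_1)\cdot[(1/(KR(y_1)))\text{-excess of }Q_{y_1}]$, which by the conductor property applied to $\mu_{y_1}$ (a $(\log KR(y_1))$-source) is again at most $R(y_1)\varepsilon_2$. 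Summation yields $\varepsilon_2 \sum_{y_1} R(y_1) = \varepsilon_2$, so the total is $\varepsilon_1 + \varepsilon_2$. The main obstacle I expect is the last step, namely invoking the conductor property at the real-valued scale $\log KR(y_1)$: the resolution is to pass through the integer $\lfloor KR(y_1)\rfloor$, apply the integer conductor there, and verify that the small slack introduced by real-versus-integer thresholds is absorbed into the overall $\varepsilon_2$ budget.
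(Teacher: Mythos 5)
Your skeleton is sound and genuinely different from the paper's proof (the paper argues combinatorially, listing all values $(S(x,i_1),T(x,i_2))$ and bounding how much mass a truncation of that list discards): your case $K\le K'$, the identity $P(y_1,y_2)-\tfrac1K = R(y_1)\bigl(Q_{y_1}(y_2)-\tfrac{1}{\#\mcY_2}\bigr)+\bigl(\tfrac{R(y_1)}{\#\mcY_2}-\tfrac1K\bigr)$, the bound of the $(\#\mcY_2/K)$-excess of $R$ by $\varepsilon_1$, and case (a) are all correct, granting the standard mixture argument that lets you apply the condenser/conductor guarantees to the non-flat sources $U_S$ and $\mu_{y_1}$. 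The genuine gap is exactly the step you flagged, and the repair you propose does not work. Rounding $M=KR(y_1)$ down to $m=\lfloor M\rfloor$ and invoking the conductor at scale $m$ controls the $(1/m)$-excess of $Q_{y_1}$; but excess is non-increasing in the threshold and $1/m\ge 1/M$, so this bounds a quantity \emph{smaller} than the $(1/M)$-excess you need. The discrepancy between the two excesses of a fixed $Q_{y_1}$ can be of order $(M-m)/m$ (up to $M$ values $y_2$ may lie in the window $(1/M,1/m]$), so the per-$y_1$ slack is about $R(y_1)(M-m)/m$ and, summed over $y_1$, can be a constant (e.g.\ $R(y_1)\approx 1.5/K$ for about $2K/3$ values of $y_1$ gives slack near $1/2$) — nothing like something absorbed into an $\varepsilon_2$ budget, while the lemma needs exactly $\varepsilon_1+\varepsilon_2$. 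Rounding up to $\lceil M\rceil$ fails for the dual reason: the threshold comparison then goes the right way, but $\|\mu_{y_1}\|_\infty\le 1/M$ does not give min-entropy $\log\lceil M\rceil$, so the conductor hypothesis is not met.

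The fact you actually need — that the conductor guarantee extends verbatim to non-integer scales, i.e.\ $\|\mu\|_\infty\le 1/M$ with $1\le M\le\#\mcY_2$ real implies the $(1/M)$-excess of $\ft(\mu)$ is at most $\varepsilon_2$ — is true, but it is a lemma requiring a further idea, not a slack estimate. One way: write $\tfrac1M=\tfrac{\alpha}{m}+\tfrac{1-\alpha}{m+1}$ with $m=\lfloor M\rfloor$, show that every $\mu$ with $\|\mu\|_\infty\le 1/M$ splits as $\alpha\mu_1+(1-\alpha)\mu_2$ with $\|\mu_1\|_\infty\le 1/m$ and $\|\mu_2\|_\infty\le 1/(m+1)$ (a feasibility check over the polytope $\{\mu:\mu(x)\le 1/M,\ \sum_x\mu(x)=1\}$, verified at its vertices), and then use that $\sum_y\max(0,P(y)-t)$ is jointly convex in $(P,t)$ together with linearity of $\mu\mapsto\ft(\mu)$. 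With that lemma, plus the trivial sub-case $KR(y_1)<1$ (threshold exceeds $1$, zero contribution), your argument does yield the clean bound $\varepsilon_1+\varepsilon_2$; as written, however, the final step is a gap, since the integer-scale bound alone cannot control the real-scale excess.
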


\noindent
We first prove the proposition, and afterwards this lemma. 

\begin{proof}[Proof of Proposition~\ref{prop:conductors1}.]
  For $k \ge n$ the function $\ff(x) = x$ satisfies the conditions. For $k \le 2$, the construction of an $\varepsilon$-conductor 
  that uses $O(1)$ of randomness is also easy.
  Otherwise, we repeatedly apply  Lemma~\ref{lem:compositionConductors} to the condenser of  Theorem~\ref{th:GUV_partial}, where we choose ${K'} = \#\mcY_2$,
  as long as the new output size is at most~$k$.
  In each application, the bit length of the output increases at least by a factor~$3/2$.
  Hence, $O(\log k)$ applications are sufficient.
  After this, we do one more application, where ${K'} < \# \mcY_2$ is chosen to obtain the correct output size.
  We obtain an $O(\varepsilon \log k)$-conductor that uses $O(\log k \cdot \log \tfrac{n}{\varepsilon})$ bits of randomness.
  The proposition follows after downscaling~$\varepsilon$ by a factor~$O(\log k)$.
\end{proof}

\noindent
To prove the lemma, we use a direct corollary of Lemma~\ref{lem:equivalence_condensers}.

\begin{corollary*}\label{cor:equivalence_condenser}
  If $\ff \colon \mcX \rightarrow \mcY$ is a {\em \condenser{K'}{M}}, then for every set $X \subseteq [\mcX]$ of size at least~$K'$, 
  the variable $\ff(U_X)$ has $(1/M)$-excess at most~$\varepsilon$. 
\end{corollary*}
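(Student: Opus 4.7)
The plan is to obtain this corollary as an immediate consequence of the two equivalences proved earlier in the appendix, namely Lemma~\ref{lem:equivalence_condensers} relating condensers to minentropy sources, and Lemma~\ref{lem:equivalence_excess} relating minentropy sources to excess. No new arguments are needed.

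First I would observe that if $X \subseteq \mcX$ has size at least $K'$, then every element of $X$ has mass exactly $1/\# X \le 1/K'$ under $U_X$, so $U_X$ has minentropy at least $\log K'$. In other words, $U_X$ is a $(\log K')$-source (with zero slack, so in particular a $(\log K',0)$-source). This is the only place where the hypothesis $\# X \ge K'$ enters.

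Next, because $\ff$ is a $(K' \to M)$-condenser, Lemma~\ref{lem:equivalence_condensers} applied to the $(\log K')$-source $U_X$ yields that $\ff(U_X)$ is a $(\log M, \varepsilon)$-source. Finally I would apply the first direction of Lemma~\ref{lem:equivalence_excess}, which states that every $(\log M, \varepsilon)$-source has $(1/M)$-excess at most $\varepsilon$; as noted explicitly in the remark after that lemma, this direction does not require any assumption on the size of the codomain. Chaining these three steps gives that $\ff(U_X)$ has $(1/M)$-excess at most $\varepsilon$, which is the claim.

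The only point worth flagging is that Definition~\ref{def:conductor} quantifies over sets $S$ of size exactly $K'$, whereas here we need to handle $U_X$ for arbitrary $\# X \ge K'$. This gap is precisely what Lemma~\ref{lem:equivalence_condensers} closes: its forward direction writes a general $(\log K')$-source as a convex mixture of uniforms on $K'$-element sets, after which linearity of measures together with convexity of $y \mapsto \max(0, P(y) - 1/M)$ propagates the excess bound from each mixture component to the whole. Since that argument is already fully carried out in the proof of Lemma~\ref{lem:equivalence_condensers}, the corollary is genuinely just a restatement, and the proof reduces to the two sentences above.
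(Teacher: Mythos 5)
Your proof is correct and matches the paper's intent: the paper states this as a direct corollary of Lemma~\ref{lem:equivalence_condensers} without spelling out the chain, and your argument ($U_X$ is a $(\log K')$-source since $\#X \ge K'$, the forward direction of Lemma~\ref{lem:equivalence_condensers} makes $\ff(U_X)$ a $(\log M,\varepsilon)$-source, and the size-free direction of Lemma~\ref{lem:equivalence_excess} converts this back to a $(1/M)$-excess bound) is exactly the intended derivation.
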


\begin{proof}[Proof of Lemma~\ref{lem:compositionConductors}.]
  Let $X \subseteq \mcX$ be a set of size at most~$\#\mcY_1 \cdot \#\mcY_2$.
  We need to prove that the $(1/\# X)$-excess of $(\fs(U_X), \ft(U_X))$ is at most $\varepsilon_1 + \varepsilon_2$.
  If $\#X \le K'$, then $\# X \le \#\mcY_2$, and already the second component has $(1/K)$-excess at most $\varepsilon_2$. 
  The first component can only decrease the excess, and hence the property is satisfied.
  Assume $\# X \ge K'$. By the condenser property and the corollary above, the variable~$\fs(U_X)$ has $(1/\# \mcY_1)$-excess at most~$\varepsilon_1$.
  For a list $L$, let $U_L$ represent a randomly selected element from~$L$. 
  Our plan is to bound the excess of the pair using the following observation.

  \begin{quote}
    \textit{Let $L$ be a list and $b$ an integer. Assume $L'$ is a maximal sublist of $L$ that contains the same element at most $b$ times. 
    Thus, if an element $y$ appears at most $b$~times in~$L$, the list~$L'$ contains all appearances, and otherwise precisely~$b$ appearances.
    Then, the $(b/\# L)$-excess of~$U_L$ is equal to the probability~$\Pr [U_L \not\in L']$.
  }
  \end{quote}

  \noindent
  We assume that $\fs(x)$ can be evaluated by a deterministic function $S \colon \mcX \times \mcD_1 \rightarrow \mcY_1$ 
  by setting $\fs(x) = S(x, U_{\mcD_1})$. Moreover, we assume that $\mcD_1$ is finite. Let~$D_1 = \# \mcD_1$. 
  If this is not the case, we use an approximation for a finite $\mcD_1$ and take limits at the end of the proof.
  Similarly, assume $\ft$ is evaluated by $T \colon \mcX \times \mcD_2 \rightarrow \mcY_2$ with finite $D_2 = \# \mcD_2$.

  Consider the list $L$ containing all pairs $(S(x,i_1), T(x,i_2))$ for all $x \in X$, $i_1 \in \mcD_1$ and~$i_2 \in \mcD_2$, 
  which has length~$D_1D_2\# X$.
  We need to show that $U_L$ has $(1/\# X)$-excess at most $\varepsilon_1 + \varepsilon_2$. 
  We construct a sublist $L'$ that satisfies the conditions of the observation above.

  \begin{itemize}[leftmargin=2em]
    \item Let $L_1$ be the list of pairs $(S(x,i_1), x)$ for all $x \in X$ and~$i_1 \in \mcD_1$.

    \item For each $y_1 \in \mcY_1$, let $L_{y_1}$ be the list of the first $D_1\# \mcY_2$ pairs of the form $(y_1,\cdot)$ in $L_1$. 

    \item Split each $L_{y_1}$ in $D_1$ sublists of size at most $\# \mcY_2$ such that in each sublist, each right coordinate is unique.
      Denote these lists as $L_{y_1, e}$ for $e = 1, \dots, D_1$.

    \item Let $P_{y_1,e}$ be the list of pairs $(y_1, T(x,i_2))$ for all $(y_1, x) \in L_{y_1,e}$ and~$i_2 \in \mcD_2$.

    \item Let $P'_{y_1,e}$ be the sublist containing all $D_2$ first appearances of pairs of the form~$(\cdot, y_2)$.

    \item Let $L'$ be the concatenation of all lists $P'_{y_1,e}$ for $y_1 \in \mcY_1$ and~$e \le D_1$.
  \end{itemize}
  By construction, each pair $(y_1, y_2)$ appears at most $D_1D_2$ times in $L'$. 
  It remains to show that $\Pr[U_L \not\in L'] \le \varepsilon_1+\varepsilon_2$.
  Note that if in the second and fifth step, we did not restrict the number of appearances, then we would obtain $L = L'$.
  We show that in these steps, we withhold at most an $\varepsilon_1$-fraction and $\varepsilon_2$-fraction of the elements in~$L$, 
  and this finishes the proof.

  For the second step, note that since $\# \mcY_2 \ge \# X/ \#\mcY_1$ and $\# L_1 = D_1\# X$, 
  we have $D_1 \# \mcY_2/\# L_1 \ge 1/\# \mcY_1$, and hence, 
  the $(D_1 \# \mcY_2/\# L_1)$-excess of~$U_{L_1}$ is at most~$\varepsilon_1$. 
  By the observation, at most an $\varepsilon_1$-fraction of elements are withhold.

  For the fifth step, note that the right coordinates in each~$L_{y_1,e}$ define a set~$X'$ of size at most~$\# \mcY_2$. 
  By the conductor property, the variable $\ft(U_{X'})$ has $(1/\# X')$-excess at most~$\varepsilon_2$. 
  By the observation, each list $P'_{y_1, e}$ contains 
  all but an $\varepsilon_2$-fraction of the elements of $P_{y_1, e}$. 
  Hence, this step withholds at most an $\varepsilon_2$-fraction of the elements.
\end{proof}

\section{Proof of  Proposition~\ref{prop:invertible_two_functions}}\label{s:sw-easy}


\begin{proposition*}[Restated]
  \propInvertibleTwoFunctions{}
\end{proposition*}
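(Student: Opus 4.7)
The plan is to construct a deterministic decoder $h(S, y_1, y_2)$ via a two-stage application of $g_1$ and $g_2$, followed by a cross-check that absorbs the last $\varepsilon$ of the error budget. I will assume without loss of generality that each inverse satisfies $g_j(T, y) \in T$ for every valid input, and write $S_{z_2} = \{z_1 : (z_1, z_2) \in S\}$ and $S^{z_1} = \{z_2 : (z_1, z_2) \in S\}$, so that the slice hypotheses give $|S_{z_2}| \le K_1$ and $|S^{z_1}| \le K_2$.

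First I would run a column stage: for every $z_2 \in \pi_2(S)$, set $u_{z_2} := g_1(S_{z_2}, y_1)$, so that $(u_{z_2}, z_2) \in S$. Next, in a row stage, I would group the column candidates by their first coordinate, defining $B_u := \{z_2 : u_{z_2} = u\}$ for $u \in \pi_1(S)$. Since $(u, z_2) \in S$ whenever $z_2 \in B_u$, we get $B_u \subseteq S^u$ and hence $|B_u| \le K_2$. For each non-empty $B_u$, I would set $w_u := g_2(B_u, y_2)$, producing a candidate pair $(u, w_u) \in S$. The decoder would then output the canonical candidate $(u^*, w_{u^*})$ where $u^*$ is selected, by a fixed tie-break, from among those $u$ for which the cross-check $g_2(S^u, y_2) = w_u$ holds.

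To analyze correctness on the true pair $(x_1, x_2)$, I would introduce three events: $E_1 = \{u_{x_2} = x_1\}$, $E_2 = \{w_{x_1} = x_2\}$, and $E_3 = \{g_2(S^{x_1}, y_2) = x_2\}$. Invertibility of $g_1$ on $S_{x_2} \ni x_1$ gives $\Pr[E_1] \ge 1-\varepsilon$; conditional on $E_1$, the set $B_{x_1}$ contains $x_2$, so invertibility of $g_2$ on $B_{x_1}$ gives $\Pr[E_2 \mid E_1] \ge 1-\varepsilon$; and invertibility of $g_2$ on $S^{x_1} \ni x_2$ gives $\Pr[E_3] \ge 1-\varepsilon$. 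A union bound then yields $\Pr[E_1 \cap E_2 \cap E_3] \ge 1-3\varepsilon$, and on this event the correct pair $(x_1, x_2)$ is a candidate that passes the cross-check.

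The hard part will be arguing that, on the event $E_1 \cap E_2 \cap E_3$, no spurious candidate $(u, w_u)$ with $u \ne x_1$ is selected before the correct one. My plan is a case analysis on whether $x_1 \in S_{w_u}$ and whether $x_2 \in S^u$: in every case except the ``doubly off-diagonal'' one, at least one of $g_1$, $g_2$ is being evaluated on a slice that does contain the corresponding true coordinate, so invertibility directly controls the probability of a spurious match. The subtle case is when neither slice contains the true coordinate; here I would exploit the slice constraints $|S_{z_2}| \le K_1$ and $|S^{z_1}| \le K_2$ together with the independence of the randomness in $\ff_1(x_1)$ and $\ff_2(x_2)$ to aggregate the contributions of these spurious candidates into a term that fits inside the $3\varepsilon$ budget rather than requiring an additional error allowance.
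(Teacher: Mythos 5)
Your decoder runs a column stage, then groups the column candidates by first coordinate into sets $B_u$, runs $g_2$ on each $B_u$, and tries to select the right $u$ via a cross-check. The correct‑pair analysis is fine: $\Pr[E_1\cap E_2\cap E_3]\ge 1-3\varepsilon$, and on that event $(x_1,x_2)$ is a candidate passing the cross-check. But the argument then stops at a plan, and the plan cannot be executed. There can be up to $|\pi_1(S)|$, i.e.\ up to $K_1K_2$, non-empty groups $B_u$, hence up to $K_1K_2$ candidate pairs $(u,w_u)$. Your cross-check $g_2(S^u,y_2)=w_u$ does not rule these out: nothing in the hypotheses forces it to fail for a spurious $u$. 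Even in the favorable sub-case $x_2\in S^u$, invertibility only says $g_2(S^u,\ff_2(x_2))=x_2$ with probability $\ge 1-\varepsilon$ \emph{per $u$}; a union bound over the relevant $u$'s costs up to $K_1\varepsilon$. And in the ``doubly off-diagonal'' sub-case you have no handle at all. The claim that these contributions can be ``aggregated into a term that fits inside the $3\varepsilon$ budget'' is not credible: the $3\varepsilon$ is already exhausted by $E_1,E_2,E_3$, so any additional error would overshoot. Finally, a fixed tie-break cannot be shown to prefer the true pair among surviving candidates.

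The missing idea is a deterministic bound of $K_2$ on the number of candidates fed to $g_2$, and this is exactly what the paper's proof supplies with a partition that your scheme has no analogue of. Partition $S$ into $K_2$ parts $R_1,\dots,R_{K_2}$, each of size at most $K_1$ and containing at most one pair per row; this is possible because every row slice $S^{x_1}$ has at most $K_2$ elements and $\#S\le K_1K_2$ (distribute each row's pairs into distinct parts of currently smallest cardinality). Then define $T$ as the set of $(z_1,z_2)\in S$ for which $g_1$ returns $z_1$ on \emph{both} the column slice $S_{z_2}$ and the row-projection of the part containing $(z_1,z_2)$. The column condition gives at most one element of $T$ per column, and the part condition gives at most one element of $T$ per part, so $\#T\le K_2$ and a single call $g_2(T_2,y_2)$ identifies $x_2$ and hence the pair. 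This yields exactly three error events, each of probability at most $\varepsilon$, matching the $3\varepsilon$ bound with no slack needed for spurious candidates. Your two-stage $B_u$ grouping is structurally different and cannot be patched without introducing some mechanism that plays the role of the partition.
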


\noindent
Let the sets in the second and third condition be denoted as $S_{x_2}$ and~$S_{x_1}$. Let 
$g_1(S_1, \cdot)$ denote an inverse of $\ff_1$ in $S_1 \subseteq \mcX_1$ and similar for~$g_2$.
We need to determine an inverse $g$ of the product function. More precisely, 
for every $(y_1, y_2) \in \mcY_1 \times \mcY_2$, we need to specify a value $g(S, y_1, y_2)$ that satisfies 
the conditions of $(3\varepsilon)$-invertibility.

A first idea is to define the set $T$ containing all $(x_1,x_2)$ for which $x_1 = g_1(S_{x_2}, y_1)$.
Thus $T$ contains at most one pair of the form $(\cdot, x_2)$ for all $x_2$.
Let $T_2$ be the projection of $T$ on the second coordinate. 
We apply $g_2(T_2, y_2)$, and if the outcome is $x_2$, we output the corresponding pair~$(x_1, x_2)$. 
If $T_2$ has size at most $K_2$, then the constructed mapping $g(S, \cdot)$ is indeed a $(2\varepsilon)$-inverse in $S$, 
but unfortunately, this assumption may not be true. For example, this happens if $S$ is a large diagonal. 
In the proof below, we use a second partition of $S$, and by evaluating $g_1$ a second time on its parts, 
we can prune~$T$ to the required size.

\begin{proof}
 Let $R_1, \dots, R_{K_2}$ be a partition of $S$ where each set $R_i$ has size at most~$K_1$ 
 and contains at most 1 pair of the form $(x_1, \cdot)$ for all $x_1 \in \mcX_1$. 

 Such a set can be obtained by assigning the elements of $S$ as follows. 
For all $x_1 \in \mcX_1$, select the pairs of the form $(x_1, \cdot)$ in $S$, 
 and add them to different sets $R_i$ of smallest cardinality. 
 Note that since $\# S_{x_1} \le K_2$, this is always possible. In this way, 
 the parts $R_i$ are filled such that in each moment, their cardinalities differ by at most one.

  \medskip
 \noindent
 {\em Construction of $g(S, y_1, y_2)$.} 
 Let $T$ be the set of pairs $\tx = (x_1, x_2)$ in $S$ for which 
 \[
  g_1(R'_\tx, y_1) \; =\;  g_1(S_{x_2}, y_1) \; = \; x_1,
 \]
 where $R'_\tx$ is the projection on the first coordinate of the part~$R_i$ that contains~$(x_1, x_2)$.
 Recall that $T_2$ is the projection of $T$ on the second coordinate.
Let the value of $g$ be the pair $(x_1, x_2)$ in $T$ for which $g_2(T_2, y_2) = x_2$, 
provided the output of $g_2$ is defined. Otherwise, let the value be undefined.

\medskip
 \noindent
We prove that with probability at most $3\varepsilon$:
  \[
    g(S, \ff_1(x_1), \ff_2(x_2)) \;\not=\; (x_1, x_2).
 \]
 Recall that $S_{x_2}$ and $R'_\tx$ have at most $K_1$ elements.
$T_2$ has at most $K_2$ elements, since $T$ contains at most $1$ element from each part $R_i$, 
 and there are $K_2$ such parts. If 
  \begin{align*}
   g_1(R'_\tx, \ff_1(x_1)) \; &=\; x_1 \\
  g_1(S_{x_2}, \ff_1(x_1)) \; &= \; x_1 \\
   g_2(T_2, \ff_2(x_2)) \;&=\; x_2, 
 \end{align*}
 then the required output appears. 
 Indeed, the first 2 conditions imply that $(x_1, x_2) \in T$, 
 the last condition selects~$x_2$, and the second condition implies 
 that for this value of $x_2$, there is a unique pair in $T$ 
of the form $(\cdot, x_2)$. Thus the output must be $(x_1, x_2)$. 

  By invertibility and the union bound, 
the probability that one of these events does not happen  is at most $\varepsilon+\varepsilon+\varepsilon$.
The proposition is proven.
\end{proof}

\section{Degree lower bounds for condensers: Proof of Proposition~\ref{prop:degreeLowerboundCondensers}}\label{s:condenserLowerbounds}

\begin{proposition*}[Restated]
  \propDegreeLowerboundCondensers
\end{proposition*}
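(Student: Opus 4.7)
The plan is to prove this by contradiction. Let $D' = \lceil 2\varepsilon D\rceil$, and suppose $\#\mcX > K(8\#\mcY/K)^{D'}$. I would exhibit a $K$-subset $S \subseteq \mcX$ on which $f(U_S, U_{[D]})$ has $(1/K)$-excess strictly greater than $\varepsilon$, contradicting the condenser property. The basic strategy is a pigeonhole-like argument: if $\#\mcX$ is too large then too many elements must share image values, forcing some $K$-subset to concentrate.

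The quantitative core is a list-decoding bound derived directly from the excess inequality: for any set $\bar y \subseteq \mcY$ of size $D'$, the list $L_{\bar y} = \{x \in \mcX : \bar y \subseteq f(x, [D])\}$ has fewer than $K$ elements. Indeed, if $S \subseteq L_{\bar y}$ has size $K$, then for each $y \in \bar y$ every $x \in S$ satisfies $f(x, i) = y$ for some $i$, so $\Pr[f(U_S, U_{[D]}) = y] \geq 1/D$, and summing the $(1/K)$-excess contributions over the $D'$ distinct values in $\bar y$ gives
\[
D'\left(\tfrac{1}{D} - \tfrac{1}{K}\right) \;=\; \tfrac{D'}{D}\left(1 - \tfrac{D}{K}\right) \;\geq\; 2\varepsilon\left(1 - \tfrac{\varepsilon}{2}\right) \;=\; 2\varepsilon - \varepsilon^2 \;>\; \varepsilon,
\]
using $D' \geq 2\varepsilon D$, $D/K \leq \varepsilon/2$ (from $K \geq 2D/\varepsilon$), and $\varepsilon \leq 1/2$.

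Translating the list-decoding bound into a size bound on $\#\mcX$ is the second step. A direct covering yields $\sum_{|\bar y| = D'} |L_{\bar y}| < K \binom{\#\mcY}{D'}$; since each $x$ with $|f(x,[D])| \geq D'$ contributes to at least one such $L_{\bar y}$ (with $x$'s of smaller image handled via the expansion consequence $|\{x : f(x,[D]) \subseteq A\}| < K$ for $|A| < (1-\varepsilon)K$), one obtains $\#\mcX \leq O(K\binom{\#\mcY}{D'})$. However, this is too weak in the regime $K \gg D'$, which is precisely where the bound is supposed to bite. The main obstacle is closing this gap: I expect the proof to restrict the covering to $\bar y$'s drawn from the $O(K)$ ``heavy'' values of the global marginal $f(U_\mcX, U_{[D]})$ -- whose cardinality is controlled by the $(1/K)$-excess applied at the level of $\mcX$ itself -- and to combine this with the hypotheses $\#\mcY \geq 2K$ and $K \geq 2D/\varepsilon$ via a double-counting argument that reshapes the binomial into the stated form. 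Aligning the constants so that the base per coordinate is precisely $8\#\mcY/K$ is the delicate technical step.
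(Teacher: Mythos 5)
Your list-decoding bound is correct and is a clean consequence of the condenser property, but as you yourself observe, the downstream counting through $\binom{\#\mcY}{D'}$ loses too much: in the regime $K \gg D'$, the factor $(e\#\mcY/D')^{D'}$ is far larger than the target $(8\#\mcY/K)^{D'}$. The repair you speculate on---restricting to heavy values of the global marginal $f(U_\mcX, U_{[D]})$---does not line up with the hypothesis, which only controls the $(1/K')$-excess of $f(U_S,U_{[D]})$ for subsets $S$ of size $K' \le K$, not for $\mcX$ itself; nothing in the definition bounds the heavy set of the $\mcX$-level marginal. This is a genuine gap, not a detail.

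The paper closes it by changing the combinatorial structure rather than tightening the covering. Instead of a $D'$-element set $\bar y$ that must lie \emph{entirely} inside $f(x,[D])$, it draws a single random set $Y \subseteq \mcY$ of size $s = \lceil\varepsilon K\rceil - 1$, which is roughly $\varepsilon K$ and much larger than $D'$, and calls $x$ \emph{caught} by $Y$ if at least $D' = 2\varepsilon D$ of the coordinates $f(x,1),\dots,f(x,D)$ land in $Y$. Any $K$ strings caught by the same $Y$ form a set $S$ with $\Pr[f(U_S,U_{[D]}) \in Y] \ge 2\varepsilon$, which contradicts the $(1/K)$-excess bound, since that gives $\Pr[f(U_S,U_{[D]}) \in Y] \le |Y|/K + \varepsilon < 2\varepsilon$. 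The counting is then a first-moment estimate over the random $Y$: using $\#\mcY \ge 2K$ and $K \ge 4D$, the probability that a fixed $x$ is caught is at least $\bigl(2^{-3}K/\#\mcY\bigr)^{D'}$, so if $\#\mcX \cdot \bigl(2^{-3}K/\#\mcY\bigr)^{D'} \ge K$, some $Y$ catches $K$ strings. This is exactly why the base per coordinate is $8\#\mcY/K$: the random $Y$ has size scaling with $K$, so each coordinate of $f(x,\cdot)$ hits $Y$ with probability $\Theta(K/\#\mcY)$, replacing the $\Theta(D'/\#\mcY)$ you get from a size-$D'$ target. Your list-decoding bound is true but uses the hypothesis in a way that is too rigid to capture this.
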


\begin{proof}
  We first increase $\varepsilon$ to the minimal value such that $2\varepsilon D$ is a positive integer.
  This can be done without affecting the statement, because
  for $\varepsilon = 1/2$ the value is always integer.

  If $y$ is randomly selected according to a distribution on $\mcY$ with $(1/K)$-excess at most~$\varepsilon$, 
  then for every subset $Y \subseteq \mcY$ of size less than $\varepsilon K$, we have $\Pr [y \in Y] < 2\varepsilon$. 
  Therefore, by contraposition, it suffices to prove the following.

  \begin{claim*}
  If the inequality of the proposition is false, then 
  there exists a subset $Y \subseteq \mcY$ of size less than $\varepsilon K$, 
  and $K$ different $x \in \mcX$ for which
  \begin{equation}\tag{*}\label{eq:bad_x}
  \Pr \left[ f\left(x,U_{[D]}\right) \in Y \right] \;\ge\; 2\varepsilon \,.
  \end{equation}
  \end{claim*}

  \noindent
  We construct $Y$ using the probabilistic method. 
  Let $s = \lceil \varepsilon K\rceil - 1$.
  Select $Y$ randomly among all 
  $s$-element subsets.
  For a fixed $x \in \mcX$, we lower bound the probability that $f(x,i) \in Y$ for all $i \in [2\varepsilon D]$.  Note that the most unfavorable case is when $f(x,1), \ldots, f(x, [2\varepsilon D])$ are all different. Therefore, the probability is at least
   \[
 \ge \;\; {\#\mcY - 2\varepsilon D \choose  s- 2\varepsilon D} /  {\#\mcY \choose s}   \;\; 
  = \;\; \frac{s}{\#\mcY} 
    \cdot \frac{s-1}{\# \mcY - 1} \cdot
    \ldots \cdot \frac{s - 2\varepsilon D + 1}{\#\mcY - 2\varepsilon D + 1}. 
  \]
  The last factor is the smallest and its numerator is at least
  $\varepsilon K - 2\varepsilon D\ge \varepsilon K/2$, because our assumptions imply $K \ge 4D$.
  Thus each factor exceeds~$\varepsilon/(2A)$, where $A = \# \mcY/K$. 
  By a similar reasoning, the probability that $f(x,i) \not\in Y$ for $i = 2\varepsilon D + 1, \dots, D$ 
  is at least $(1-\varepsilon)^{(1-2\varepsilon)D}$, where we used $(\# \mcY - \# Y - D)/\#\mcY \ge 1-\varepsilon/2 - \varepsilon/2$,
  by the assumptions $\# \mcY \ge 2K$ and $\# \mcY \ge 2D/\varepsilon$.
  The probability that precisely $2\varepsilon D$ of the elements $f(x,1), \ldots, f(x,[D])$  are selected in $Y$ is at least 
  \[
  \ge \;\; 
  {D  \choose 2\varepsilon D} \cdot
  \left(\frac{\varepsilon}{2A}\right)^{2\varepsilon D} \cdot
  (1-\varepsilon)^{(1-2\varepsilon) D} 
  \;\;\ge\;\; 
  \left( \tfrac{1}{2\varepsilon}\right)^{2\varepsilon D} \cdot \left(\tfrac{\varepsilon}{2A}\right)^{2\varepsilon D} 
  \cdot (1-\varepsilon)^{\tfrac{1}{\varepsilon} \cdot \varepsilon D}
  \;\; \ge \;\;  \left(\frac{2^{-3}}{A}\right)^{2\varepsilon D}.
  \]
  For a randomly selected $Y$, the expected number of elements $x$ satisfying  \eqref{eq:bad_x} is at least
  $\# \mcX$ times this quantity, by additivity of  expectations. 
  This is at least $K$, because by assumption, the negation of the inequality of Proposition~\ref{prop:degreeLowerboundCondensers} holds, and 
  is equivalent to
  \[
    K < \#\mcX \left(\frac{2^{-3}}{A}\right)^{2\varepsilon D}.
  \]
  Since for a random $Y$, the expected number of $x$ satisfying  \eqref{eq:bad_x} is at least $K$, at least 1 set $Y$ exists
  that satisfies the conditions of the claim.  Hence the proposition  is proven.
\end{proof}

\end{document}